\newtheorem{theorem}{Theorem}[section]
\newtheorem{example}{Example}[section]
\newtheorem{corollary}[theorem]{Corollary}
\newtheorem{lemma}[theorem]{Lemma}
\newtheorem{proposition}[theorem]{Proposition}
\newtheorem{claim}[theorem]{Claim}
\newtheorem{definition}[theorem]{Definition}
\newtheorem{observation}[theorem]{Observation}
\def\squarebox#1{\hbox to #1{\hfill\vbox to #1{\vfill}}}
\newcommand{\qed}{\hspace*{\fill}
\vbox{\hrule\hbox{\vrule\squarebox{.667em}\vrule}\hrule}\smallskip}
\newenvironment{proof}{\noindent{\bf Proof:~~}}{\(\qed\)}
\newcommand{\Rev}{\operatorname{Rev}}
\newcommand{\VCG}{\operatorname{VCG}}
\newcommand{\VCGUD}{\operatorname{VCG^{UD}}}
\newcommand{\Mye}{\operatorname{Mye}}
\newcommand{\OPT}{\operatorname{OPT}}
\newcommand{\vals}{\mathbf{v}}
\newcommand{\E}{\mathop{\mathbb{E}}}
\newcommand{\singleparam}{\mathrm{SingleParam}}
\newcommand{\Feas}{\mathcal{I}}
\newcommand{\Add}{\operatorname{Add}}
\newcommand{\DC}{\operatorname{DC}}
\newcommand{\Asym}{\operatorname{Asym}}
\newcommand{\Mat}{\operatorname{Mat}}
\newcommand{\UD}{\operatorname{UD}}
\newcommand{\val}{v}
\newcommand{\reals}{\mathbb{R}}
\definecolor{MyGray}{rgb}{0.8,0.8,0.8}
\newcommand{\squishlist}{
	\begin{list}{$\bullet$}
		{ \setlength{\itemsep}{0pt}      \setlength{\parsep}{3pt}
			\setlength{\topsep}{2pt}       \setlength{\partopsep}{0pt}
			\setlength{\leftmargin}{1.5em} \setlength{\labelwidth}{1em}
			\setlength{\labelsep}{0.5em} } }
	\newcommand{\squishend}{
\end{list}  }
\begin{document}

\title{The Competition Complexity of Auctions: \\A Bulow-Klemperer Result for Multi-Dimensional Bidders}

\author{Alon Eden\thanks{Computer Science, Tel-Aviv University. \texttt{alonarden@gmail.com}. Work done in part while the author was visiting the Simons Institute for the Theory of Computing.} \and Michal Feldman\thanks{Computer Science, Tel-Aviv University, and Microsoft Research. \texttt{michal.feldman@cs.tau.ac.il}.} \and Ophir Friedler\thanks{Computer Science, Tel-Aviv University. \texttt{ophirfriedler@gmail.com}.} \and Inbal Talgam-Cohen\thanks{Computer Science, Hebrew University of Jerusalem. \texttt{inbaltalgam@gmail.com}.} \and S.~Matthew Weinberg\thanks{Computer Science, Princeton University. \texttt{smweinberg@princeton.edu}. Work done in part while the author was a research fellow at the Simons Institute for the Theory of Computing.}}


\maketitle

\thispagestyle{empty}\maketitle\setcounter{page}{0}

\begin{abstract}

A seminal result of Bulow and Klemperer [1989] demonstrates the power of competition for extracting revenue: when selling a single item to $n$ bidders whose values are drawn i.i.d. from a regular distribution, the simple welfare-maximizing VCG mechanism (in this case, a second price-auction)
with one additional bidder extracts at least as much revenue in expectation as the optimal mechanism.
The beauty of this theorem stems from the fact that VCG is a {\em prior-independent} mechanism, where the seller possesses no information about the distribution, and yet, by recruiting one additional bidder it performs better than any prior-dependent mechanism tailored exactly to the distribution at hand (without the additional bidder).



In this work, we establish the first {\em full Bulow-Klemperer} results in {\em multi-dimensional} environments, proving that by recruiting additional bidders, the revenue of the VCG mechanism surpasses that of the optimal (possibly randomized, Bayesian incentive compatible) mechanism. For a given environment with i.i.d. bidders, we term the number of additional bidders needed to achieve this guarantee the environment's {\em competition complexity}.

Using the recent duality-based framework of Cai et al. [2016] for reasoning about optimal revenue, we show that the competition complexity of $n$ bidders with additive valuations over $m$ independent, regular items is at most $n+2m-2$ and at least $\log(m)$. We extend our results to bidders with additive valuations subject to downward-closed constraints, showing that these significantly more general valuations increase the competition complexity by at most an additive $m-1$ factor. We further improve this bound for the special case of matroid constraints, and provide additional extensions as well.

\end{abstract}

\newpage

\section{Introduction} \label{sec:intro}
A great deal of research in recent years has been devoted to the design of simple mechanisms, motivated in part by their desirability in practical settings~\cite[see][and many subsequent works]{CHK07, HR-09, HartN12, HN13}. {Different measures of mechanism complexity have been considered, including computation and communication complexity, as well as the simplicity of the mechanism's description, and its dependence on details of the economic environment.}
This last measure is motivated by early work  that has come to be known as {\em Wilson's doctrine} \cite{Wil87}, by which simple detail-free mechanisms should be preferred in complex settings in order to alleviate the risks introduced by various assumptions. Indeed, Wilson considered the ``progress of game theory'' to depend on freeness from such assumptions.  

This idea is reflected in the distinction between {\em
prior-independent}  and {\em prior-dependent} mechanisms. In both
cases it is assumed that there exist populations (prior
distributions) from which the bidders are drawn. The difference is
that prior dependent mechanisms are custom-made to these priors, the
details of which are assumed to be fully known. By contrast, the
designer of a prior-independent mechanism assumes no knowledge of
these priors. In terms of practical applicability, prior-dependent
mechanisms risk overfitting to particular beliefs, whereas
prior-independent mechanisms are inherently robust.

Of course, if we could get exactly the same guarantees from prior-independent mechanisms as prior-dependent ones (as is the case for welfare maximization by the VCG mechanism~\cite{vic-61, cla-71, gro-73}), we would simply demand that all mechanisms be prior-independent. But this is provably not the case for revenue optimization, where the best prior-dependent mechanisms strictly outperform the best prior-independent ones 
(as is the case for Myerson's mechanism~\cite{Mye81}). So the game becomes to understand the tradeoff between simplicity and optimality. 
One approach in this direction that has achieved remarkable success in the past decade is to apply the lens of approximation, and address questions of the form: ``what fraction of the optimal revenue can be achieved by a `simple' mechanism?''~\cite{Seg03, CHK07, HR-09, CHMS10, CMS10, DHKN-11, RTY15, KleinbergW12, HartN12, AzarKW13, LiY13, BILW14, DRY15, RubinsteinW15, BateniDHS15, Yao15, ChawlaM16, GoldnerK16}.\footnote{{Some of these works study prior independence ~\cite{Seg03, HR-09, DHKN-11, RTY15, AzarKW13, DRY15, GoldnerK16}, and others propose mechanisms that are inherently simple and robust, such as ``sell each item separately.''}}

While these works certainly offer an explanation for the ubiquity of simple mechanisms in practice, they don't quite tell the whole story. The lens of approximation views the environment as fixed, and tries to find a suitable auction that generates as much revenue as possible. 
But in some settings, particularly in high-stakes markets where small constant fractions amount to large losses, constant-factor approximation guarantees -- 
even with matching impossibility results -- may be unsatisfactory. However, once we enter the realm of practically-motivated questions, there's no reason to view the environment as fixed -- the seller could, for instance, spend additional effort recruiting extra bidders to attend the auction instead of treating the number of participants as given. Indeed, a typical economist might (reasonably) demand the optimal revenue and settle for nothing less, and instead pose the question as seeking the minimum cost change to the environment so that a simple mechanism achieves this guarantee.


A beautiful result in this spirit is the Bulow-Klemperer (BK) theorem~\cite{BK96}, which asserts that in the sale of a single item to $n$ symmetric bidders, whose valuations are drawn from the same (regular) distribution, running the (simple and prior-independent) VCG mechanism {(second-price auction in this case)} with $n+1$ bidders extracts at least as much revenue in expectation as Myerson's optimal (prior-dependent) mechanism with $n$ bidders. The beauty of this theorem stems from the fact that VCG is a simple and standard mechanism, completely oblivious to the bidder distribution, yet by recruiting a single additional bidder, its revenue surpasses that of the optimal auction.
%
%
The original BK result was made possible by the work of Myerson~\cite{Mye81}, thanks to which revenue optimization is extremely well-understood in single-dimensional settings. In contrast, revenue optimization is extremely poorly-understood in multi-dimensional settings, even with additive bidders (at least until recently), owing to complicating factors such as randomization~\cite{thanassoulis2004haggling,pavlov2011property,BriestCKW10, HN13}, non-monotonicity~\cite{hart2015maximal}, computational intractability~\cite{daskalakis2013mechanism, ChenDPSY14, ChenDOPSY15}, and other factors which do not arise in single-dimensional settings~\cite{RochetC98,daskalakis2014complexity,daskalakis2015strong, GiannakopoulosK15, giannakopoulos2014duality}.

\paragraph{BK results and competition complexity.}
In this paper, we establish the first {\em full BK results} for a
wide range of multi-dimensional settings. By \emph{full BK results},
we mean theorem statements of the form \emph{``the revenue of
mechanism $M$ with $n+C$ symmetric~bidders drawn from any population
{$F$} is at least as large in expectation as the optimal
revenue with $n$ bidders, as long as {$F$} satisfies condition
$X$",{\footnote{Note that $X$ encompasses both the class of
valuations of the bidders, and typically also independence and
regularity conditions on their distribution.}} } where: (a) $M$ is
simple and prior-independent -- and in this paper we further
restrict to $M$ = VCG as in the original BK result, (b) the revenue
of $M$ is required to \emph{surpass} the optimal revenue without
additional bidders (no approximation loss whatsoever), and
(c) the optimal revenue refers to the optimum among all, possibly randomized, Bayesian incentive compatible (BIC) and Bayesian individually rational (BIR)\footnote{BIC means that it is a Bayes-Nash equilibrium for all bidders to bid their true value. BIR means that agents' expected utility in this equilibrium is non-negative.} mechanisms.\footnote{This is in contrast to the work of~\cite{RTY15} which uses a strictly lower benchmark, namely, the optimum among all deterministic DSIC mechanisms. More on this in Section~\ref{sec:related}.} 

For the minimum $C$ such that a full BK statement holds for an environment (number of bidders $n$, number of items $m$, class $X$ of valuations and distributions), we say that the environment has \emph{competition complexity} $C$. 
The competition complexity thus measures how much competition is needed before the revenue of a simple mechanism reaches a strong revenue benchmark.
Competition complexity complements in a sense the measure of \emph{sample complexity} \cite{CR14,0002MR15,MohriM14,RoughgardenS16, MorgensternR16, MorgensternR15, Devanur0P16}, which measures how much data is needed before the revenue of a (\emph{non-simple}) mechanism \emph{approximates} a benchmark.



\subsection{Our Results and Techniques}
\label{sec:results}

{Before we state our results}, we remark that it is not even
a priori clear that {\textit{any}} number 
of
additional bidders suffices to reach the optimal revenue (rather
than just {approximate} 
it). In fact, without an independence
{assumption} 
on the bidders' distributions, already there exist distributions over additive valuations for two items for which the revenue of
VCG with any (finite) number of
bidders never surpasses the revenue of the optimal mechanism for just one bidder (this result is a direct corollary of
\cite{HN13}\footnote{When bidders' values are correlated,
\cite{HN13} shows there can be an infinite gap between the revenue
of the optimal randomized mechanism and the optimal deterministic
one, even with one bidder and two items. This implies there cannot
be a finite bound on the competition complexity since the revenue of
VCG from each item is bounded by the optimal revenue for each item,
which is a concave function of the number of bidders by revenue
submodularity \cite{DughmiRS12}.}). 

We consider the following distributions over valuation functions $v$: For every item $j$, $v_j$ is independently drawn from a distribution $F_j$, and for every set of items $S$, $v(S) = \max_{T \subseteq S, T \in \mathcal{I}} \{\sum_{j \in T} v_j\}$, for some downward-closed set system $([m],\mathcal{I})$. Such valuations are called ``additive subject to downward-closed constraints $\mathcal{I}$, with independent items.'' When all $F_j$ are regular, we say ``independent, regular items''. For example, when $\mathcal{I} = 2^{[m]}$, $v$ is just an additive valuation function. We also consider extensions where the constraints $\mathcal{I}$ are themselves randomly drawn (independently of item values), and in such instances replace ``downward-closed constraints $\mathcal{I}$'' with ``randomly drawn downward-closed constraints''.


\vspace{0.1in}

\noindent {\bf Theorem}: We obtain the following competition complexity results:
\squishlist 
    \item The competition complexity of $n$ bidders with additive valuations over $m$ independent, regular items is at most $n+2(m-1)$ and at least $\Omega(\log m)$.

    \item The competition complexity of $n$ bidders with additive valuations subject to downward-closed constraints $\mathcal{I}$ {over $m$ independent, regular items} is greater than the competition complexity of $n$ additive bidders over $m$ independent, regular items by at most $m-1$. There exist constraints for which this is tight.\footnote{{In fact, regularity can be replaced here (and in the next item) with any property $X$ of one-dimensional distributions.}}
    \item
    For all matroid set systems $([m],\mathcal{I})$, the competition complexity of $n$ bidders with additive valuations subject to matroid constraints
    $\mathcal{I}$ over $m$ independent, regular
    items is greater than the competition complexity of $n$ additive bidders over $m$ independent, regular items by at most $\rho$, where $\rho$ is the maximum number of disjoint independent sets in $\mathcal{I}$ that span (but do not include) item $j$, over all items $j$. There exist constraints for which this is tight.
    \item The competition complexity of $n$ bidders with additive valuations subject to {randomly drawn downwards-closed constraints} over $m$ independent, regular items is greater than the competition complexity of $n$ additive bidders over $m$ independent, regular items by at most $2(m-1)$.
\squishend 
\smallskip

{A direct implication of the above theorem is that the competition complexity of $n$ bidders with additive valuations subject to downward-closed constraints over $m$ independent, regular items is at most $n+3(m-1)$ and at least $m-1$. Note that for $k$-uniform matroids, $\rho=m-k$ , and so $\rho=m-1$ for unit-demand valuations (and $\rho=0$ for additive valuations).} The final bullet accommodates markets where, for instance, bidders are equally likely to be additive or unit-demand, independently of their values. 



\paragraph{Our techniques.}

At a high level, our approach breaks down into three steps:
(1) Find a suitable upper bound on the optimal revenue for many additive bidders, using the duality framework of~\cite{CDW16}. (2) Prove that, for additive bidders, VCG with additional bidders surpasses this benchmark, using coupling arguments and probabilistic tools. (3) Prove a ``reduction'' from arbitrary downward-closed set systems 
to additive bidders, at the cost of $m-1$ additional bidders.

\paragraph{Step One: Revenue Upper Bound.}
While we make use of the duality framework developed in~\cite{CDW16}, the upper bounds they derive for additive bidders provably don't suffice for our goals. Specifically, for a single additive bidder and any $C > 0$, there exists a distribution that is additive over two independent, regular items where the revenue of VCG with $C+1$ bidders does not exceed their bound (the example appears in Section \ref{sec:single-bidder-iid}). So our first step is to provide a new bound (still using their framework) that is suitable for our setting. Without getting into much detail, our bound can be viewed as taking a similar approach to theirs, but in ``quantile space'' as opposed to ``value space.'' That is, whenever their approach compares two random variables by their \emph{value}, we instead compare them by their \emph{quantile}.\footnote{The quantile of a random variable $X$ drawn from distribution with CDF $F$ is $F(X)$.} We postpone further details to Section~\ref{sec:new-bound}, but note here that we do expect our approach to be useful for future BK results, and multi-dimensional mechanism design in general.

\paragraph{Step Two: Covering with Extra Bidders.}
After finding a suitable bound, we need to show that the revenue of VCG with additional bidders surpasses this bound. The key tools here are coupling arguments between random variables representing the revenue of VCG with additional bidders and the random variable from Step One.

%

We wish to highlight the unique challenge in targeting a full BK result as opposed to an approximation, and postpone the details of the argument to Section~\ref{sec:multiple-additive}. Whereas in single-dimensional settings, a closed formula for the optimal revenue is well-understood, the optimal revenue in multi-dimensional settings is the expectation of an extremely bizarre random variable. Any tractable bound on the optimal revenue is therefore necessarily loose, meaning that one needs to be extra careful so as to avoid sacrificing constant-factors. Moreover, all known bounds (including ours, and others from Cai et al.'s framework) are still not ``obviously'' simple to analyze. To simplify the analysis at the cost of constant-factors, one can upper bound maximums by sums, indicator random variables by 1, etc. For instance, the now-standard ``core-tail'' decomposition follows from a sequence of operations like this. But since we want a full BK result, these simplifications are too lossy, 
necessarily complicating the analysis.

As a (surprisingly relevant) example, consider two random variables $X$ and $Y$ with the same expectation, and imagine that we wish to design an auction that gets revenue at least $\mathbb{E}[\max\{X, Y\}]$. If we are willing to lose a factor of two, we can simply design an auction that gets revenue at least $\mathbb{E}[X] = \mathbb{E}[Y]$ and be done. But if we really refuse to lose constant factors, we're forced to design a better auction.\footnote{Incidentally, we'll encounter this exact obstacle with $X$ = the ``highest virtual value'' and $Y$ = the ``second-highest value.'' Both $\mathbb{E}[X]$ and $\mathbb{E}[Y]$ capture the expected revenue of a second-price auction, but it's not obvious how one should design an auction achieving revenue $\mathbb{E}[\max\{X, Y\}]$. See Section~\ref{sec:multiple-additive}.}

\paragraph{Step Three: Reduction from arbitrary 
    downward-closed feasibility constraints to additive bidders.}
While we believe our result for additive bidders is already exciting in its own right, we are able to extend it to far more general settings by reduction. More specifically, {we first show that the revenue of the optimal mechanism for additive bidders upper bounds the optimal revenue for bidders who are additive subject to any downward-closed constraints (assuming item values are drawn from the same distributions).} From here, we show that the revenue of VCG with $\eta+m-1$ bidders who are additive subject to downward-closed constraints is at least as large as the revenue of VCG with $\eta$ additive bidders (assuming item values are drawn from the same distributions). Chaining these inequalities together, we conclude that the competition complexity increases by at most $m-1$ relative to additive bidders. 
Note that a similar result of~\cite{RTY15} for unit-demand bidders can be naturally extended (with some work) to additive bidders subject to matroid constraints, but an entirely new combinatorial approach is necessary for arbitrary downward-closed constraints, 
and this result may be of independent interest. Moreover, this reduction now allows future research to focus exclusively on improved BK results for additive bidders, as they will immediately extend far more generally.


\paragraph{Computational remarks.}
It is well known that running VCG for the most general environments where our results hold is NP-hard (see, e.g., \cite{welsh2010matroid}). Still, we wish to note the following:
\squishlist
\item Even our result on additive bidders is the first full BK result beyond single-dimensional settings, and VCG can be implemented in poly-time in this setting, in fact whenever the feasibility constraints form a matroid. 
\item Even in environments where it is NP-hard to implement VCG, competition complexity is a meaningful measure in its own right.
\item The settings that motivate full BK results versus approximations may be exactly those where practitioners find a way to solve NP-hard problems -- high stakes instances where anything less than the optimum is unsatisfactory.
\squishend

The above bullets explain why it is important to understand the competition complexity of $\VCG$, even when it is computationally intractable. Still, we provide additional competition complexity results in Appendix~\ref{app:mir_vcg} for a computationally efficient maximal-in-range (MIR) mechanism. Specifically, we show that the MIR mechanism that optimizes over the space of matchings (which can be done in poly-time) also witnesses that the competition complexity of $n$ bidders with additive valuations subject to downward-closed constraints over $m$ independent, regular items is at most $n + 3(m-1)$. 



\subsection{Related Work}
\label{sec:related}
\paragraph{Other BK and Prior-Independent Results.}
The most related work to ours is~\cite{RTY15}, which proves the only previous multi-dimensional BK result. Their result holds only for unit-demand bidders (again over independent, regular items), and compares to the optimal deterministic, dominant strategy incentive compatible (DSIC) mechanism,\footnote{A mechanism is DSIC if it is in ever bidder's interest to tell the truth, no matter what the other bidders report. It is known that BIC mechanisms can strictly outperform the best randomized, DSIC mechanism in multi-dimensional settings~\cite{Yao16}.} where 
bounds on the optimal revenue are known due to~\cite{CHMS10}. Due to the 
weaker benchmark, they are able to bound the competition complexity as $m$, which is tight.

Variants of BK results in single-dimensional settings are established in \cite{HR-09,DughmiRS12}.
There is now a rich literature on prior-independent mechanisms in both single- and multi-dimensional settings~\cite[]{Seg03, DHKN-11, azar2013optimal, AzarKW13, DRY15, GoldnerK16}. Recently, a growing literature is also considering mechanism design with limited samples from the population (whereas prior-independent mechanisms get zero samples)~\cite[e.g.,][]{CR14, Devanur0P16}. In comparison to the present paper, all of these results either study single-dimensional settings, or incur some approximation loss.

\paragraph{Related Techniques.} As already discussed, we use the recent duality framework of~\cite{CDW16} to develop our revenue upper bounds. In concurrent work, the same framework is utilized to prove approximation results for simple mechanisms in settings with many ``XOS bidders over independent items''~\cite{CaiZ16}, a single bidder with values that exhibit ``limited complementarity'' (by the present authors)~\cite{eden2016asimple}, and dynamic auctions~\cite{LiuP16}.
So step one of our approach bears similarity to the comparable steps in these works,
but each work addresses unrelated problems with techniques specific to the problem at hand. It is also worth noting that~\cite{CDW16} ``merely'' provides a \emph{framework} for deriving upper bounds on the attainable revenue, and not \emph{the} upper bound to be used always. Indeed, judging by recent applications of their framework, even selecting the appropriate bound for a given setting seems to be a bit of an art itself (although the approach across different works is of course similar). 

\paragraph{Related Themes in TCS.}
BK results resemble in spirit ideas from ``beyond worst-case complexity'', and in particular the concept of resource augmentation, where
the idea is to compare the performance of an algorithm that is endowed with more resources to an optimal outcome in an environment with less resources.
Examples of such results (which are also referred to as bicriteria results) appear in online paging \cite{SleatorT85}, network routing games \cite{roughgarden2002bad}, truthful job scheduling~\cite{chawla2013prior}, and buffer management \cite{barman2012bicriteria}.

\subsection{Organization.}
\label{sec:organization}
Preliminaries appear in Section \ref{sec:prelim}. Section \ref{sec:single-bidder-iid} is a warm-up section in which we address a single additive bidder and symmetric items (whose values are drawn i.i.d.~from a single distribution). This simple case presents a key idea, which guides us to develop a new duality-based upper-bound 
in Section \ref{sec:new-bound}, using the framework of \cite{CDW16}. In Section \ref{sec:multiple-additive} we present our main results for multiple additive bidders, and Section \ref{sec:add-st-constraints} generalizes our results to additive subject to feasibility constraints, with additional generalizations presented in Sections \ref{sub:matroids} and \ref{sub:asymmetric} for matroid and asymmetric feasibility constraints. Section \ref{sec:open} lists open research directions.

\section{Preliminaries} \label{sec:prelim}
{
There are $m$ heterogeneous items and $n$ bidders.
Every item $j$ is associated with a distribution $F_j$ with support $T_j$.
Each bidder $i$'s value for item $j$, $v_{ij}$, is drawn independently from distribution $F_j$.
Write $T = \times_{j\in [m]} T_j$ and $F= \times_{j \in [m]} F_j$.
Let $f_j(y) = \Pr_{x \gets F_j}[y = x]$ be the density of $F_j$ and write
$f(v)=\times_{j\in [m]} f_j(v_j)$.
Let $\varphi_j(\cdot)$ denote Myerson's virtual value for $F_j$.\footnote{For continuous distributions, $\varphi_j(v)=v-\frac{1-F_i(v)}{f_i(v)}$.}
In this work we assume that $F_j$ is {\em regular} for every $j$, i.e., $\varphi_j(\cdot)$ is monotone.
We consider bidders that are {\em additive subject to feasibility constraints}.
That is, each bidder $i$ has a downward-closed feasibility constraint set system $([m],\mathcal{I}_i)$, where $\mathcal{I}_i\subseteq 2^{[m]}$, and bidder $i$'s value for a set $S$ (given $\{v_{ij}\}$ and $\mathcal{I}_i$) is
$$v_i(S) = \max_{T\subseteq S, T\in \mathcal{I}_i} \{\sum_{j\in T} v_{ij} \}.$$
If $\mathcal{I}_i = \mathcal{I}$ for every bidder $i$, then we say that bidders are symmetric with respect to their feasibility constraints.
If $\mathcal{I}_i = 2^{[m]}$, then bidder $i$'s valuation is {\em additive}.
We sometimes refer to additive bidders as \textit{unconstrained} bidders.
}

Let $\vals=(v_1,\ldots,v_n)$ denote the value profile of the $n$ agents, sampled from $F^n$.\footnote{For a distribution $G$, let $G^\ell$ denote $\times_{j\in [\ell]}G$.}
Let $\vals_j=(v_{1j},\ldots, v_{nj})$ be the values of the agents for item $j$.
Let $v_{i,-j}$ denote the values sampled for agent $i$ without item $j$. Given a vector of real values $\mathbf{a}$, let $a_{(i)}$ denote the $i$-th highest value in $\mathbf{a}$.
Let $\varphi_j(v_j)^+ = \max\{\varphi_j(v_j),0\}$. Let $\mathbb{I}_A$ denote an indicator random variable for an event $A$.

\paragraph{Auction Design.}
A mechanism is a pair of an allocation and payment functions. Both are functions of the submitted bids.
We use $p_i(\cdot)$ to denote bider $i$'s (possibly randomized) payment function.
As standard, we study mechanisms that are Bayesian Individually Rational (BIR), i.e., mechanisms in which the expected utility of an agent is non-negative,
and Bayesian Incentive Compatible (BIC), i.e., mechanisms where truth-telling maximizes an agent's expected utility (in both cases, the expectation is over the randomness of the mechanism and other agents' {valuations and} strategies).

{Given a value distribution $F$, denote by $\Rev$ the expected revenue of the optimal BIR-BIC mechanism; i.e., the mechanism that maximizes $\mathbb{E}_{v\sim F}[\sum_{i\in [n]}[p_i(v)]]$ over all BIR-BIC mechanisms.}

When a single item is for sale,
the {\em second price auction with lazy reserves} first sets a reserve prices $r_i$ for each bidder $i$, and then solicits a bid from each bidder. It then allocates the item to the highest bidder $i$ if its bid surpasses $r_i$, in which case the payment is the maximum over the second highest bid and $r_i$ (if $v_i < r_i$, then the item remains unallocated).

\paragraph{Competition Complexity.} Let $\OPT (F, n)$ denote the revenue of the optimal BIR-BIC mechanism for $n$ bidders with valuations drawn i.i.d. from $F$. Let also $\Rev (M, F, k)$ denote the expected revenue of mechanism $M$ when played by $k$ bidders with valuations drawn i.i.d. from $F$. Then when we use the term ``competition complexity,'' we formally mean:

\begin{definition}\label{def:cc}
The \emph{competition complexity} $C$ with respect to a prior-independent mechanism $M$ of $n$ bidders and a class of distributions $\mathcal{D}$ is the minimum $c$ such that for all $F \in \mathcal{D}$, $\OPT (F, n) \leq \Rev (M, F, n+c)$. When $M = \VCG$, we drop ``with respect to $M$".
\end{definition}

\begin{definition}\label{def:val}[Classes of distributions considered.]
\begin{itemize}
\item \textbf{Additive over independent items}: A distribution $F$ is additive over $m$ independent items if there exist one-dimensional distributions $F_1,\ldots, F_m$, and drawing $\{v(\cdot)\} \leftarrow F$ amounts to drawing $\vec{v}=(v_1,\ldots, v_m) \leftarrow \times_j F_j$ and setting $v(S) = \sum_{j \in S} v_j$. 
\item \textbf{Subject to constraints $\mathcal{I}$}: A distribution $F$ is additive subject to constraints $\mathcal{I}$ over $m$ independent items if there exist one dimensional distributions $F_1,\ldots, F_m$, and drawing $v(\cdot) \leftarrow F$ amounts to drawing $\vec{v} \leftarrow \times_j F_j$ and setting $v(S) = \max_{T \subseteq S, T \in \mathcal{I}}\{\sum_{j \in T} v_j\}$. 
\item \textbf{Subject to randomly drawn constraints}: A distribution $F$ is additive subject to randomly drawn constraints over $m$ independent items if there exists a distribution $F_0$ over $2^{2^{[m]}}$ (set systems over $[m]$), and one-dimensional distributions $F_1,\ldots, F_m$, and drawing $v(\cdot) \leftarrow F$ amounts to drawing $\mathcal{I} \leftarrow F_0$, $\vec{v} \leftarrow \times_j F_j$ {(independently)}, and setting $v(S) = \max_{T \subseteq S, T \in \mathcal{I}}\{\sum_{j \in T} v_j\}$.
\end{itemize}
When all referenced $F_i$ are regular, we say ``independent, regular items''. 
\end{definition}

\paragraph{Classic Bulow-Klemperer.}

\begin{theorem}[\cite{BK96}, in the language of competition complexity] The competition complexity of $n$ bidders with valuations for a single, regular item is $1$.
\end{theorem}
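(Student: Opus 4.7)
The plan is to show both directions: competition complexity $\leq 1$ via the classical Bulow-Klemperer augmentation argument, and competition complexity $\geq 1$ via a trivial example.

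For the upper bound, fix any regular $F$. By Myerson's theorem for single-parameter environments, $\OPT(F, n) = \E_{v \sim F^n}[\max_{i \in [n]} \varphi(v_i)^+]$, achieved by the virtual-welfare-maximizing allocation with zero interim utility at $v_i = 0$. I would construct an auxiliary BIC-BIR mechanism $M^+$ on $n+1$ bidders as follows: run Myerson's optimal mechanism on bidders $1, \ldots, n$, and whenever it does not allocate, hand the item to bidder $n+1$ for free. Bidders $1, \ldots, n$ face exactly Myerson's rules, so they remain BIC-BIR. Bidder $n+1$'s allocation and payment are independent of their reported value, and their utility is always non-negative, so they are also BIC-BIR, with interim utility at $0$ equal to $0$. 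By design, $\Rev(M^+, F, n+1) = \OPT(F, n)$ and $M^+$ always sells the item.

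Next, I would compare $M^+$ to $\VCG$ (the second-price auction) on $n+1$ bidders. Myerson's virtual-value identity tells us that any BIC mechanism with monotone interim allocation $x$ and zero interim utility at $0$ has revenue $\E[\sum_i \varphi(v_i) x_i(v)]$. Since $M^+$ always allocates, $\sum_i x_i^{M^+}(v) = 1$ pointwise, so its revenue is at most $\E[\max_{i \in [n+1]} \varphi(v_i)]$. Regularity of $F$ makes $\varphi$ monotone, so the pointwise maximizer of $\sum_i \varphi(v_i) x_i(v)$ subject to $\sum_i x_i = 1$ allocates to the highest-value bidder, which is exactly the $\VCG$ rule; this allocation is monotone and has $u_i(0) = 0$ for all $i$, so another application of Myerson's identity yields $\Rev(\VCG, F, n+1) = \E[\max_{i \in [n+1]} \varphi(v_i)] \geq \Rev(M^+, F, n+1) = \OPT(F, n)$.

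For the matching lower bound, taking $n = 1$ and any regular $F$ with positive monopoly price (e.g., uniform on $[0,1]$) yields $\Rev(\VCG, F, 1) = 0 < \OPT(F, 1)$, so $C \geq 1$. The main technical subtlety is the construction and BIC-BIR verification of $M^+$; once this is carefully handled, the virtual-value comparison between $M^+$ and $\VCG$ is immediate, driven entirely by regularity aligning the highest-value bidder with the highest-virtual-value bidder.
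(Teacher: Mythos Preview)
The paper does not provide its own proof of this theorem; it is stated in the preliminaries as the classical Bulow--Klemperer result, attributed to \cite{BK96}. Your argument is the standard augmentation proof and is correct. In fact, the key step you invoke---that among single-item mechanisms that always allocate to some bidder, $\VCG$ is revenue-optimal for i.i.d.\ regular bidders---is recorded later in the paper as a folklore observation (Observation~\ref{obs:vcg_opt}), so your reasoning dovetails with the paper's own toolkit.

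One small gap: your lower bound is argued only for $n=1$, whereas the theorem asserts competition complexity exactly $1$ for every $n$. The fix is immediate---for any $n$ and, say, $F$ uniform on $[0,1]$, the second-price auction on $n$ bidders has strictly less revenue than Myerson's auction with reserve $1/2$ (the reserve binds with positive probability), so $C \geq 1$ for all $n$---but it should be stated.
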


\paragraph{Discrete vs. Continuous Distributions.} Like~\cite{CDW16}, we only explicitly consider distributions with finite support. Like their results, all of our results immediately extend to continuous distributions as well via a discretization argument of~\cite{daskalakis2012symmetries,RubinsteinW15,hartline2010bayesian,hartline2011bayesian,bei2011bayesian}. We refer the reader to~\cite{CDW16} for the formal statement and proof.

\paragraph{Tie Breaking.} Throughout the paper, we assume that there are no ties in the values of the agents. If there are ties, we break them lexicographically (first by agent's i.d., then by item's i.d.). With this tie breaking, we can also assume that there exists a unique welfare maximizing allocation of {items to agents} 
 --- given two allocations with the same welfare, we break ties in the symmetric difference according to the agent's i.d. and then the item's i.d.

\paragraph{Duality Benchmarks.} Equation~(\ref{eq:bench0}) presents the upper bound on $\Rev$ from \cite{CDW16} when there is a single bidder (for details the reader is referred to Appendix~\ref{app:duality}. In particular, Eq. \eqref{eq:bench_single_cdw}).
Let $A_j$ be the event that $\forall j'\ne j: v_j>v_{j'}$ and let $\bar A_j$ be the event that $\exists j' : v_j< v_{j'}$, then:
\begin{align}
\Rev \leq
\sum_j \sum_{v_j}f_j(v_j) \left( \varphi_j(v_j)^+\cdot \Pr_{v_{-j}}[A_j] + v_j\cdot \Pr_{v_{-j}}[\bar A_j] \right)\label{eq:bench0}
\end{align}


\section{Warm-up: Single Bidder} \label{sec:single-bidder}
In this section, we give an upper bound on  the competition complexity of a single additive bidder.
These results will convey intuition regarding the techniques used in proving BK results in more complex  settings, and for the benchmark we will use in order  to prove our  results.
\subsection{Symmetric Items} \label{sec:single-bidder-iid}
Consider a single additive bidder, and identically distributed items, i.e., $F_j=F_{j'}$ for all $j$ and $j'$, and therefore,
$\varphi_j =\varphi$ for all $j$.
Let $\sum_j \Mye_m(j)$ be the expected revenue from selling each item separately to $m$ single-parameter i.i.d. bidders  using Myerson's optimal mechanism.
We write a sum over items even though they are identically distributed because in Section~\ref{sec:new-bound} we
describe how the proof changes when items are not necessarily identically distributed.
\begin{lemma}	\label{pro:work-horse}
	$\sum_j \Mye_m(j) \ge \Rev$.
\end{lemma}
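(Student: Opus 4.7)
The plan is to apply the duality benchmark in Eq.~\eqref{eq:bench0} and upper bound its right-hand side by $\sum_j \Mye_m(j)$. The first step is to swap the outer sum and expectation and invoke the paper's tie-breaking convention, under which the events $\{A_j\}_{j \in [m]}$ partition the sample space: almost surely there is a unique maximum-valued item $j^{\ast}$, making $\mathbb{I}[A_{j^\ast}] = 1$ and $\mathbb{I}[\bar A_j] = 1$ for every $j \neq j^\ast$. The benchmark then collapses to
\[
\E\!\left[\varphi(v^{(1)})^+\right] + \E\!\left[\sum_{i=2}^{m} v^{(i)}\right],
\]
where $v^{(i)}$ denotes the $i$-th highest among the $m$ i.i.d.\ samples from $F$.

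The first term is exactly $\Mye_m(j)$ by Myerson's theorem for $m$ i.i.d.\ regular single-parameter bidders (and equal for every $j$ by symmetry). For the second term, each order statistic $v^{(i)}$ with $i \geq 2$ is at most $v^{(2)}$, so $\E[\sum_{i \ge 2} v^{(i)}] \leq (m-1)\,\E[v^{(2)}]$. Since the no-reserve second-price auction is a BIR-BIC mechanism for $m$ i.i.d.\ bidders on one item, its revenue $\E[v^{(2)}]$ is at most Myerson's optimum $\Mye_m(j)$. Combining these bounds with the symmetry identity $\sum_j \Mye_m(j) = m\,\Mye_m(j)$ gives
\[
\Rev \;\leq\; \Mye_m(j) + (m-1)\,\E[v^{(2)}] \;\leq\; m\,\Mye_m(j) \;=\; \sum_j \Mye_m(j),
\]
as desired.

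The only conceptually non-routine step is recognizing the clean decomposition of the duality benchmark into ``Myerson revenue on the maximum item'' plus ``expected sum of non-maximum values,'' which is precisely what the symmetry of $F^m$ and the partitioning of the $A_j$'s together buy. Once this decomposition is in hand, the two trivial inequalities $v^{(i)} \le v^{(2)}$ (for $i \ge 2$) and $\E[v^{(2)}] \le \Mye_m(j)$ (optimality of Myerson over second-price) close the argument with no slack to spare. I expect the step $\Mye_m \geq \E[v^{(2)}]$ to be what eventually breaks in the asymmetric setting of Section~\ref{sec:new-bound}, because when the $F_j$'s differ the ``second-highest value'' is no longer comparable to ``Myerson on a single distribution,'' and this is what motivates the quantile-space reformulation the authors introduce there.
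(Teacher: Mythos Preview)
Your argument is correct, but it proceeds quite differently from the paper's own proof. You exploit the i.i.d.\ assumption to \emph{globally aggregate} the benchmark into $\E[\varphi(v^{(1)})^+] + \E[\sum_{i\ge 2} v^{(i)}]$, and then close with the two loose inequalities $v^{(i)} \le v^{(2)}$ and $\E[v^{(2)}] \le \Mye_m$. The paper instead works \emph{term by term}: for each fixed $j$ it exhibits an explicit truthful single-item mechanism (second price with a lazy monopoly reserve only on bidder $j$) whose revenue, via a coupling $u=v$, exactly matches the $\varphi(v_j)^+ \Pr[A_j]$ piece from bidder $j$'s virtual surplus and lower-bounds the $v_j \Pr[\bar A_j]$ piece by the payment of the other bidders. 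Summing these termwise covers gives $\sum_j \Mye_m(j) \ge$ benchmark without ever invoking $v^{(i)}\le v^{(2)}$.

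What each buys: your route is shorter and more elementary here, but it leans entirely on symmetry to collapse the sum and on the crude order-statistic bound, neither of which survives once the $F_j$'s differ. The paper's mechanism-plus-coupling template is exactly the idea that gets reused and extended: the quantile coupling in Section~\ref{sec:new-bound} and the $\singleparam\mbox{-}j$ mechanism in Section~\ref{sec:multiple-additive} are direct descendants of this per-$j$ construction. So while your proof closes Lemma~\ref{pro:work-horse} cleanly, it does not set up the machinery the rest of the paper relies on. Your diagnosis of \emph{where} your argument would break for asymmetric items is roughly right, though the more fundamental failure is earlier: the aggregation into order statistics of a single distribution is already meaningless once the $F_j$'s are different, not merely the comparison $\E[v^{(2)}] \le \Mye_m$.
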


\begin{proof}
	Consider the following mechanism $M$ for selling item $j$ to $m$ single-parameter i.i.d. bidders: Run the second price auction with a personal lazy reserve price of $\varphi^{-1}(0)$ {\em only} for bidder $j$ (and a lazy reserve of $0$ for all other bidders). 
	Let $u$ denote the value profile of the $m$ bidders.
To analyze the expected revenue of this mechanism we consider the revenue from bidder $j$ --- a random variable denoted by $p_j(u)$ --- separately from the revenue from the other bidders, denoted by $p_{-j}(u)$. By the optimality of Myerson's mechanism, it holds that
	$$
	\Mye_m(j)\ge \mathbb{E}_u[p_j(u)]+\mathbb{E}_u[p_{-j}(u)].
	$$
	
	We couple $u$ and $v$ by simply setting $u=v$. Consider first $p_j(u)$. By Myerson's theorem, the expected revenue from bidder $j$ equals his expected virtual surplus \cite{Mye81}.
	By our coupling, and since the items are identically distributed, bidder $j$ is allocated in {Mechanism $M$} with precisely the same probability that event $A_j$ occurs and the bidder's virtual surplus is non-negative. 
	We get that $\mathbb{E}_u[p_j(u)] = \sum_{v_j}f_j(v_j) \cdot \varphi(v_j)^+\cdot \Pr_{v_{-j}}[A_j].$
	Consider next $p_{-j}(u)$. Using the fact that in the second price auction bidders other than $j$ win whenever event $\bar A_j$ occurs and pay at least $v_j$, we have $\mathbb{E}_u[p_{-j}(u)] \ge \sum_{v_j}f_j(v_j)\cdot v_j\cdot \Pr_{v_{-j}}[\bar A_j].$
	Comparing the bounds on $p_j(u)$ and $p_{-j}(u)$ summed over all items to Equation \eqref{eq:bench0} completes the proof.
\end{proof}
\begin{corollary}
	The competition complexity of a single bidder whose valuation is additive over $m$ i.i.d. regular items is at most $m$.
\end{corollary}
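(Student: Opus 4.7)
The plan is to chain together Lemma \ref{pro:work-horse} with the classic single-item Bulow-Klemperer theorem, using the fact that for a single additive bidder we only need to outperform revenue from one bidder's distribution, while VCG on $1+m$ additive bidders decomposes item-by-item into $m$ independent single-item second-price auctions on $m+1$ i.i.d. single-parameter bidders.

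More concretely, first I would invoke Lemma \ref{pro:work-horse} to upper bound the benchmark $\Rev$ (the optimal BIR-BIC revenue from a single additive bidder over $m$ i.i.d.\ regular items) by $\sum_j \Mye_m(j)$, i.e.\ the revenue of selling each item separately via Myerson's optimal mechanism to $m$ single-parameter i.i.d.\ bidders drawn from the marginal $F$. Next, I would apply the classical single-dimensional Bulow-Klemperer theorem (quoted just before Section \ref{sec:single-bidder} in the language of competition complexity) to each item $j$ individually: since $F$ is regular and one-dimensional, $\Mye_m(j)$ is at most the expected revenue of the second-price auction on item $j$ with $m+1$ i.i.d.\ bidders drawn from $F$.

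Then I would observe that because the bidders in the target environment are additive over independent items, VCG on $m+1$ additive bidders is exactly the parallel execution of the single-item second-price auction on each item $j$ with the $m+1$ bidders' marginal values for item $j$, which are i.i.d.\ from $F$. Summing the per-item inequality over $j$, we obtain
\[
\sum_j \Mye_m(j) \;\le\; \sum_j \Rev\bigl(\text{2nd-price},\, F,\, m+1\bigr) \;=\; \Rev(\VCG, F, m+1).
\]
Combined with Lemma \ref{pro:work-horse}, this yields $\Rev \le \Rev(\VCG, F, m+1)$, so by Definition \ref{def:cc} the competition complexity of a single additive bidder over $m$ i.i.d.\ regular items is at most $m$.

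There is no real obstacle here: the heavy lifting was done by Lemma \ref{pro:work-horse}, which already matched the single-bidder duality benchmark \eqref{eq:bench0} to a sum of per-item single-parameter Myerson revenues; the corollary simply reads off classical Bulow-Klemperer on each coordinate and uses additivity to glue the per-item VCG revenues back together. The only mild subtlety is being careful that the ``per-item second-price with $m+1$ bidders'' obtained from coordinate-wise Bulow-Klemperer coincides with the marginal outcome of running VCG on $m+1$ additive bidders, which is immediate from additivity and the independence of the $F_j$'s (here $F_j=F$).
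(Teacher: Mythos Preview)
Your proposal is correct and follows essentially the same approach as the paper's proof: invoke Lemma~\ref{pro:work-horse} to bound $\Rev$ by $\sum_j \Mye_m(j)$, apply the classical single-item Bulow-Klemperer theorem per item to pass to the second-price revenue with $m+1$ bidders, and use that $\VCG$ on additive bidders decomposes over items.
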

\begin{proof}
	Consider the revenue of $\VCG$ with additional $m$ bidders. By additivity, $\VCG$ is separable over the items, and therefore, the revenue is exactly the sum of second price auctions run for each item separately with $m+1$ bidders.
	By applying the classic Bulow-Klemperer theorem on each item separately, we get that $\VCG$ with $m+1$ bidders gives revenue greater than $\sum_j \Rev_m(j) \ge\Rev$, where the last inequality follows from Lemma \ref{pro:work-horse}.
\end{proof}

We note that this proof does not extend when the items are not identically distributed. The reason for this is that now the probability that bidder $j$ wins in the single parameter environment is not identical to the probability that event $A_j$ happens.
In fact, next we present  
an example where the previous benchmark used in \cite{CDW16} fails to provide any meaningful bounds on the competition complexity when items' values are drawn from different distributions.

\subsection{Example Where the Previous Benchmark Fails}
\begin{proposition}\label{prop:counterexample}
	For every $C>0$, there exists a value distribution $F$ that is additive over $2$ (non-i.i.d.) independent, regular items, 
	 where $\Rev (\VCG, F, C+1)$ is less than the benchmark of Cai et al. \cite{CDW16} for one bidder with valuations drawn from $F$ (Equation~\eqref{eq:bench0}).
\end{proposition}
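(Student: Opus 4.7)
The plan is to expose the slack in the benchmark in Eq.~(\ref{eq:bench0}) by constructing an asymmetric 2-item instance. The suspect term is $v_j\Pr[\bar A_j]$, which charges item $j$'s \emph{own} value on the event that another item dominates — and this enormously overcounts the extractable revenue from item $j$ whenever $v_j$ has a heavy upper tail (so $\mathbb{E}[v_j]$ is large relative to its Myerson revenue) while item $j$ is structurally always dominated. I will arrange both conditions simultaneously: one item with a heavy-tailed regular distribution, and a second item whose deterministic value exceeds the support of the first.

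Concretely, I will take $F_1$ to be the two-point equal-revenue distribution on $\{1,N\}$ with probabilities $(1-\tfrac{1}{N},\tfrac{1}{N})$, so that $\mathbb{E}_{F_1}[v_1]=2-\tfrac{1}{N}$ while its Myerson (one-bidder) revenue is exactly $1$; and $F_2$ to be a point mass at some $M>N$. Both are regular. Since $v_2=M$ always exceeds $v_1\le N$, we have $\Pr[A_1]=0$, $\Pr[\bar A_1]=1$, and symmetrically $\Pr[A_2]=1$, $\Pr[\bar A_2]=0$. Substituting into Eq.~(\ref{eq:bench0}) will give an item-$1$ contribution equal to $\mathbb{E}_{F_1}[v_1]=2-\tfrac{1}{N}$, and an item-$2$ contribution equal to $\varphi_2(M)^+\!=M$ (the virtual value for a point mass is the value itself, as follows from Myerson's characterization), for a total benchmark of $M+2-\tfrac{1}{N}$.

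On the VCG side, additivity makes VCG decompose into single-item second-price auctions. Item $2$ trivially contributes $M$ since every bidder values it at $M$. For item $1$, the second-highest of $C+1$ i.i.d.\ draws from $F_1$ equals $N$ only on the event that at least two of the $C+1$ bidders drew $N$, an event of probability at most $\binom{C+1}{2}/N^2$, so the item-$1$ revenue is at most $1+(N-1)\binom{C+1}{2}/N^2\le 1+\tfrac{(C+1)^2}{2N}$. Consequently the gap between the benchmark and VCG revenue is at least $1-\tfrac{1}{N}-\tfrac{(C+1)^2}{2N}$, which is strictly positive for all $N>1+(C+1)^2/2$; choosing e.g.\ $N=(C+1)^3$ completes the argument. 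I expect no real obstacle here — the only conceptual step is identifying the $v_j\Pr[\bar A_j]$ term as the source of slack and realizing that deterministic dominance of item $1$ by item $2$ maximally exercises it; everything else is direct bookkeeping and a one-line binomial tail bound.
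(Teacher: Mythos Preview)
Your proof is correct and follows essentially the same approach as the paper: arrange for item~2 to deterministically dominate item~1 so that the benchmark charges $\mathbb{E}[v_1]$ for item~1 via the $v_j\Pr[\bar A_j]$ term, while VCG can only extract (roughly) item~1's Myerson revenue. Your instantiation is arguably cleaner---a two-point equal-revenue distribution and a point mass, versus the paper's continuous capped-equal-revenue distribution for item~$a$ and shifted-equal-revenue distribution for item~$b$---though the paper's version yields the mildly stronger conclusion that the gap between benchmark and VCG revenue can be made to grow as $\Theta(\log k)$ rather than being bounded by~$1$; for the proposition as stated, either suffices.
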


\begin{proof}
	Consider a market with a single bidder and only two items. Item $a$ is distributed according to the equal revenue distribution capped at some value $k$; \textit{i.e.}, $$F_a(x)=\begin{cases}
	1-1/x \quad & x < k\\
	1 \quad & x=k
	\end{cases},$$
	supported on $[1,k]$.
	Item $b$ is distributed according to the following CDF: $F_b(x)=1-\frac{1}{x-k+1},$ supported on $[k,\infty)$. One can verify that both distributions are indeed regular.

	Consider the benchmark \eqref{eq:bench0} proposed by \cite{CDW16}, which in the case of two items is interpreted as the expected virtual value of the item with the highest value of the two \textit{plus} the expected value of the item with the lower value of the two. Since item $b$ always has a higher value than item $a$, this is exactly the expected virtual value of item $b$ \textit{plus} the expected value of item $a$. The expected value of item $a$ is 

	\begin{eqnarray*}
		\int_{0}^{k}(1-F(x)) dx + k\cdot \frac{1}{k} = 1 + \int_{1}^{k}\frac{1}{x} dx + 1 = 2+\ln k.
	\end{eqnarray*}
	As for the expected virtual value of item $b$, 
	 one can readily compute that $\mathbb{E}[\varphi_b(x)] = k$.\footnote{One easy way to shortcut this computation is the apply Myerson's Theorem (revenue = virtual welfare) and immediately conclude that $\mathbb{E}[\varphi_b(x)] = k \cdot (1-F(k)) = k$.} Combining the expected value of item $a$ and the expected virtual value of item $b$, we get that the benchmark is at least $2+\ln k + k$.
	
	Now consider running $\VCG$ using $\ell$ bidders in total ($\ell-1$ additional bidders). Since running $\VCG$ with additive bidders is the same as running $\VCG$ for each item separately, we analyze how much revenue we get by running $\VCG$ for each item. We first analyze the revenue from running Myerson's optimal mechanism on item $a$ with $\ell$ bidders (which is an upper bound on running $\VCG$). The expected revenue of the optimal single-bidder auction is $1$.
	By revenue submodularity (\cite{DughmiRS12}), the expected revenue of running the optimal mechanism is a concave function of the number of bidders. Thus, the expected revenue of the optimal auction with $\ell$ bidders is at most $\ell$.
	
	\begin{figure}[t]
		\centering
		\includegraphics[scale=0.65]{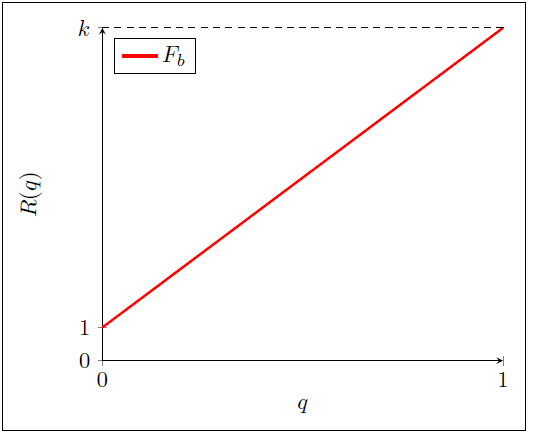}
		\caption{The revenue curve of distribution $F_b$ used in the example.}
		\label{fig:rev_curve_Gb}
				
	\end{figure}
%
%
%
%
%
%
%
	
	We now compute the revenue of the optimal mechanism for item $b$ with $\ell$ bidders. For a single bidder, the optimal mechanism sets a take-it-or-leave-it price of $k$ on the item, and gets an expected revenue of $k$. For two bidders, the optimal mechanism is $\VCG$ with monopoly reserves. Since the monopoly reserve is $k$, this is equivalent to just running $\VCG$. We compute the expected revenue of $\VCG$ the following way --- fix bidder $1$, and compute her expected payment. Bidder $2$'s value sets a random threshold for bidder $1$. If the quantile of bidder $2$'s value is $q$, then the bidder $1$'s expected payment is exactly the probability her value is above $v(q)$ \textit{times} $v(q)$. This is exactly $R(q)$. Therefore, the expected payment of bidder $1$ is exactly the area under the revenue curve, which is equal to $(k+1)/2$. By symmetry, bidder $2$'s expected payment is $(k+1)/2$ as well, and the expected revenue overall is $k+1$. Again, by submodularity of revenue, since the second bidder adds $1$ to the expected revenue, the expected revenue of $\VCG$ with $\ell$ bidders is at most $k+\ell-1$.
	
	We get that the expected revenue obtained by running $\VCG$ on both items with $\ell$ bidders is at most $k+2\ell-1$. Therefore, $\ell$ has to be at least $(\ln k + 3)/2 = \Omega(\ln k)$ in order to cover the benchmark. Since $k$ can be arbitrary large, this benchmark is not suitable in providing any meaningful bound on the competition complexity for this market.
\end{proof} 
\subsection{Asymmetric Items} \label{sec:new-bound}
{
In this section we present a new upper bound on the optimal revenue and demonstrate its applicability by showing that it implies a competition complexity result for the case of asymmetric items (and an additive bidder).
}


\paragraph{New Duality-Based Bound: Quantile Based Regions.}

As previously discussed, the proof of the case where the items are
sampled from the same distribution does not extend when the items
are not i.i.d. The reason is that the probability of the $j$th
sample to be the highest one  is not the same as the probability of
bidder $j$ to have the highest value for item $j$, since the samples
of the bidders' values for item $j$ {are i.i.d., while the values of the single bidder for the different items in the original scenario are not i.i.d.}
The intuition behind the
new benchmark we use is that we want to mimic the bidders' i.i.d.
samples, even if the items themselves are not i.i.d.

For each item $j$, we define item $j$'s region $R_j=\{v : \forall j'\ne j \ \  F_j(v_j) > F_{j'}(v_{j'}) \},$
that is, for each bidder, region $j$ is defined as the region where the bidder's value is of the highest quantile according to the items' distributions.
Using this definition, we give a new upper bound on the revenue.
\begin{theorem}[New upper bound on $\Rev$]\label{thm:new_bound}
    For every product distribution $F$, and $n$ additive bidders, whose valuations are sampled i.i.d. from $F$,
    \begin{eqnarray}
        \Rev \leq \sum_{j=1}^m\E_{\vals\sim F^n}\left[\max_{i\in [n]}\bigg(\varphi_{j}(v_{ij})^+\cdot\mathbb{I}_{v_i\in R_j} \ + \ v_{ij}\cdot\mathbb{I}_{v_i\notin R_j}\bigg)\right].\label{eq:multiple_new_flow}
    \end{eqnarray}
\end{theorem}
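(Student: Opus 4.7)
The plan is to apply the Cai--Devanur--Weinberg duality framework with a quantile-based partition of each bidder's type space into the regions $\{R_j\}_{j\in[m]}$ defined just above the theorem. The CDW partial Lagrangian, when instantiated for additive bidders with independent items and with any per-bidder partition $\{R_{ij}\}_{j}$ of bidder $i$'s type space, yields
$$\Rev \leq \sum_{i=1}^n \sum_{j=1}^m \E_{\vals \sim F^n}\Big[x^*_{ij}(\vals)\,\big(\varphi_j(v_{ij})^+\,\mathbb{I}_{v_i\in R_{ij}} + v_{ij}\,\mathbb{I}_{v_i\notin R_{ij}}\big)\Big],$$
where $x^*_{ij}(\vals)$ is the marginal probability that the optimal BIC/BIR mechanism allocates item $j$ to bidder $i$. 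This is the natural $n$-bidder extension of (\ref{eq:bench0}); I would derive it by starting from (\ref{eq:bench_single_cdw}) in the appendix and following exactly the same template, using the per-bidder canonical flow that routes probability mass from type $v_i$ to the type obtained by zeroing coordinate $j$ whenever $v_i \in R_{ij}$. Specializing CDW's generic bound to additive valuations decouples the summation across items, giving the displayed ``virtual value in the core, value in the tail'' form.

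Next I specialize the partition by setting $R_{ij} := R_j$ for every bidder $i$. The tie-breaking convention in the preliminaries ensures $\{R_j\}_{j}$ is a legitimate partition of the type space (up to a measure-zero set of quantile-ties), so the indicator sum is one almost surely. Then swap the order of summation and use feasibility: for each fixed item $j$ and profile $\vals$, $\sum_i x^*_{ij}(\vals) \leq 1$, and the per-bidder summand is non-negative because both $\varphi_j(\cdot)^+$ and $v_{ij}$ are. A convex combination of non-negative quantities is bounded above by the maximum, so
$$\sum_i x^*_{ij}(\vals)\,\big(\varphi_j(v_{ij})^+\,\mathbb{I}_{v_i\in R_j} + v_{ij}\,\mathbb{I}_{v_i\notin R_j}\big) \leq \max_{i\in[n]}\,\big(\varphi_j(v_{ij})^+\,\mathbb{I}_{v_i\in R_j} + v_{ij}\,\mathbb{I}_{v_i\notin R_j}\big).$$
Taking expectation over $\vals\sim F^n$ and summing over $j\in[m]$ gives the desired inequality (\ref{eq:multiple_new_flow}).

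The main obstacle is conceptual rather than computational: one has to correctly formulate the CDW partial Lagrangian in the $n$-bidder additive setting and recognize why the \emph{quantile-based} regions $R_j$ are the right choice of partition. The point, which is also the reason the value-based regions $A_j$ of (\ref{eq:bench0}) failed in Proposition \ref{prop:counterexample}, is that under the quantile partition each item is ``special'' for a given bidder with probability exactly $1/m$ and, crucially, these events have a symmetric joint distribution across items even when the marginals $F_j$ differ. This symmetry is what will make the resulting benchmark amenable to the coupling arguments carried out in Section~\ref{sec:multiple-additive}. Once the CDW template is instantiated with this partition, no further non-trivial estimates are needed.
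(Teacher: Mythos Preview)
Your proposal is correct and follows essentially the same route as the paper: instantiate the CDW duality framework with the quantile-based upward-closed regions $R_j$, obtain the per-bidder ``virtual value in $R_j$, value outside $R_j$'' bound, and then pass to $\max_{i}$ for each item using the feasibility constraint $\sum_i x^*_{ij}(\vals)\le 1$ (the paper phrases this last step as maximizing over feasible reduced forms $\pi$, which is equivalent). One wording nit: the canonical CDW flow does not ``zero'' coordinate $j$ but rather sends mass from $v_i$ to the type with the next-lower value in coordinate $j$ (provided that type is still in $R_j$); it is this step-by-step flow that yields the Myerson virtual value $\varphi_j$ inside $R_j$.
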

Let $B_j$ be the event that $\forall j'\ne j: F_j(v_j)\ge F_{j'}(v_{j'})$ and let $\bar B_j$ be the event that $\exists j' : F_j(v_j)<F_{j'}(v_{j'})$.
A direct corollary of Theorem \ref{thm:new_bound} gives the following on the revenue from a single bidder.
\begin{corollary} \label{cor:single_new_bound}
    For every product distribution $F$, the revenue of a single additive bidder, whose valuation is sampled from $F$, is upper bounded by
        \begin{equation}
    \sum_j \sum_{v_j}f_j(v_j) \left( \varphi_j(v_j)^+\cdot \Pr_{v_{-j}}[B_j] + v_j\cdot \Pr_{v_{-j}}[\bar B_j] \right).
    \label{eq:bench_noniid_single}
    \end{equation}
\end{corollary}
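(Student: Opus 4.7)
The plan is to derive Corollary \ref{cor:single_new_bound} directly by specializing Theorem~\ref{thm:new_bound} to the case $n=1$, so essentially no new argument is required beyond careful bookkeeping. With only one bidder, the outer $\max_{i\in [n]}$ in \eqref{eq:multiple_new_flow} collapses to the single term $i=1$, yielding
\begin{align*}
\Rev \leq \sum_{j=1}^m \E_{v\sim F}\bigl[\varphi_j(v_j)^+\cdot\mathbb{I}_{v\in R_j} \;+\; v_j\cdot \mathbb{I}_{v\notin R_j}\bigr].
\end{align*}
From here the job is just to rewrite each summand in the form appearing in \eqref{eq:bench_noniid_single}.

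The next step I would carry out is to use the product structure of $F$. Writing the expectation explicitly as $\sum_v f(v)(\cdot) = \sum_{v_j} f_j(v_j) \sum_{v_{-j}} f_{-j}(v_{-j})(\cdot)$ and pulling out $\varphi_j(v_j)^+$ and $v_j$, which depend only on $v_j$, gives
\begin{align*}
\sum_{v_j} f_j(v_j)\Bigl(\varphi_j(v_j)^+ \cdot \Pr_{v_{-j}}[v\in R_j] \;+\; v_j\cdot \Pr_{v_{-j}}[v\notin R_j]\Bigr).
\end{align*}
Comparing with \eqref{eq:bench_noniid_single}, the only thing remaining is to identify the event $\{v\in R_j\}$ with $B_j$, and $\{v\notin R_j\}$ with $\bar B_j$.

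The one point that requires care, and which I expect to be the main (minor) obstacle, is the distinction between the strict inequality in the definition of $R_j$ and the non-strict inequality in the definition of $B_j$. Any discrepancy is confined to the boundary $\{F_j(v_j) = F_{j'}(v_{j'}) \text{ for some } j'\neq j\}$, which under the paper's lexicographic tie-breaking convention (and, a fortiori, under the continuous-distribution extension obtained via the discretization argument cited after Definition~\ref{def:val}) is consistently assigned to exactly one of $R_j$ or its complement. Thus $\Pr_{v_{-j}}[v\in R_j]$ and $\Pr_{v_{-j}}[B_j]$ agree (up to relabeling of ties), and similarly for the complements, giving exactly the bound in \eqref{eq:bench_noniid_single}. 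Chaining the displays together completes the argument.
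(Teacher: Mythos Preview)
Your proposal is correct and follows essentially the same approach as the paper: the paper states explicitly that Corollary~\ref{cor:single_new_bound} is ``a direct corollary of Theorem~\ref{thm:new_bound}'' and, in Appendix~\ref{app:duality}, carries out exactly the same computation you describe (specializing to a single bidder, using the product structure to factor out $v_j$, and identifying $\{v\in R_j\}$ with $B_j$ via the quantile-region definition and lexicographic tie-breaking), arriving at Eq.~\eqref{eq:bench_single_ours}. Your handling of the strict-versus-nonstrict boundary is appropriate and matches the paper's tie-breaking convention.
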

Now, one may notice that the probability of event $B_j$ equals exactly the probability that in $m$ i.i.d. samples from $F_j$, the $j$th sample is the highest. The derivation of the new upper bound, which uses the duality-based framework of \cite{CDW16}, and the resulting corollary, are deferred to Appendix \ref{app:duality}. In particular, 
{corollary~\ref{cor:single_new_bound}} stems from Eq.
\eqref{eq:bench_single_ours}.

\paragraph{Application to a Single Additive Bidder, Asymmetric Items.}
With the new bound in hand, we show how to sidestep the impossibility result
of Proposition \ref{prop:counterexample} associated with the upper
bound of~\cite{CDW16}.

\begin{theorem}
    The competition complexity of a single bidder whose valuation is additive over $m$ independent, regular items is at most $m$.
\end{theorem}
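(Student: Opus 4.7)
The plan is to follow the two-step template of the symmetric case (Lemma~\ref{pro:work-horse} and its corollary) but substitute the new quantile-based benchmark \eqref{eq:bench_noniid_single} for the value-based one \eqref{eq:bench0}. Recruit $m$ extra bidders to obtain $m+1$ bidders in total. Since the bidder is additive and items are independent, $\VCG$ decomposes into $m$ parallel second-price auctions, one per item, each facing $m+1$ i.i.d.~draws from $F_j$. The classic Bulow-Klemperer theorem, applied per item, implies that each such auction earns at least $\Mye_m(j)$, the Myerson-optimal revenue with $m$ i.i.d.~bidders drawn from $F_j$. So it suffices to prove that $\sum_j \Mye_m(j) \geq \Rev$, and for this I would invoke the new upper bound of Corollary~\ref{cor:single_new_bound}.

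The real work is the analog of Lemma~\ref{pro:work-horse} for asymmetric items. Fix item $j$ and consider the single-parameter instance with $m$ i.i.d.~bidders drawn from $F_j$. As in Lemma~\ref{pro:work-horse}, I would analyze a specific (suboptimal) mechanism whose revenue lower-bounds $\Mye_m(j)$: a second-price auction with a personal lazy reserve of $\varphi_j^{-1}(0)$ for bidder $j$ only and reserve $0$ for all other bidders. The key step is the coupling. The symmetric coupling $u=v$ used in Lemma~\ref{pro:work-horse} no longer preserves orderings once the $F_j$ differ; instead I would couple by \emph{quantile matching}, keeping $u_j = v_j$ and for each $i \neq j$ setting $u_i = F_j^{-1}(F_i(v_i))$. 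Because $F_i(v_i)$ is uniform on $[0,1]$, each $u_i$ is a fresh draw from $F_j$ and the $u_i$'s are mutually independent; moreover, monotonicity of $F_j^{-1}$ guarantees $u_i > u_j \iff F_i(v_i) > F_j(v_j)$. Consequently, bidder $j$ has the highest $u$-bid in the coupled single-parameter environment exactly on event $B_j$ from Corollary~\ref{cor:single_new_bound}.

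Under this coupling, Myerson's theorem identifies bidder $j$'s expected payment in the mechanism with her expected virtual surplus, which equals $\E_v[\varphi_j(v_j)^+ \cdot \mathbb{I}_{B_j}]$. When $\bar B_j$ occurs some other bidder wins, and since $u_j = v_j$ is one of the losing bids the winner's second-price payment is at least $v_j$; hence the expected payment from bidders other than $j$ is at least $\E_v[v_j \cdot \mathbb{I}_{\bar B_j}]$. Combining the two gives
\[
\Mye_m(j) \;\geq\; \E_v\!\left[\varphi_j(v_j)^+ \cdot \mathbb{I}_{B_j}\right] + \E_v\!\left[v_j \cdot \mathbb{I}_{\bar B_j}\right],
\]
and summing over $j$ together with Corollary~\ref{cor:single_new_bound} yields $\sum_j \Mye_m(j) \geq \Rev$, as required.

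The main obstacle is finding the right coupling; once quantile matching is in place, every subsequent step essentially recycles the i.i.d.~argument. One subtlety worth flagging is the reserves: applying a positive reserve only to bidder $j$ (and $0$ to the others) is what guarantees that the ``others'' contribution of $\E_v[v_j \cdot \mathbb{I}_{\bar B_j}]$ is not eroded by losing bids being screened out of the second-price computation when $\bar B_j$ holds.
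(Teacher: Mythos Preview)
Your proposal is correct and mirrors the paper's own proof essentially step for step: the paper also replaces the value-based benchmark by the quantile-based one of Corollary~\ref{cor:single_new_bound}, analyzes the same second-price auction with a lazy reserve only on bidder $j$, and couples via $F_{j'}(v_{j'})=F_j(u_{j'})$ (i.e., exactly your quantile matching $u_i=F_j^{-1}(F_i(v_i))$). The only cosmetic difference is notation; your identification of the quantile coupling as the key idea is precisely the point the paper highlights.
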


\begin{proof}
    The proof is almost identical to the proof in Section \ref{sec:single-bidder-iid}.
    We state what changes in the proof of Proposition \ref{pro:work-horse} when we use the upper bound in \eqref{eq:bench_noniid_single}.
    $u$ is now a vector of i.i.d.~draws from $F_j$, whereas $v$ is a vector of independent draws from $F_1,F_2,\dots,F_m$. We couple $u$ and $v$ so that
    $F_{j'}(v_{j'})=F_{j}(u_{j'})$
    for every coordinate $j'$.
     Still, bidder $j$ is allocated in mechanism $M$ with precisely the probability that event $B_j$ occurs and its virtual value is positive (we are using here the monotonicity of $F_j$, i.e., that $u_j>u_{j'}$ for all $j'\ne j$ iff $F_j(v_j)=F_j(u_j)>F_j(u_{j'})=F_{j'}(v_{j'})$ for all $j'\ne j$). And in the second price auction, bidders other than $j$ win whenever event $\bar{B}_j$ occurs, and pay at least $v_j$.
\end{proof}

%
\section{Main Result: Bulow-Klemperer for Multiple Additive Bidders}
\label{sec:multiple-additive}
In this section, we are given an instance with $n$ additive bidders and $m$ heterogeneous item, where bidders' valuations are drawn i.i.d. from the product of regular distributions $F$. We show that by adding $n+2m-2$ bidders and running VCG, we are able to get at least as much revenue as the optimal obtainable revenue by any BIC mechanism with the original $n$ bidders.
\begin{theorem}\label{thm:additive_main} The competition complexity of $n$ bidders with additive valuations over $m$ independent, regular items is at most $n+2m-2$, and at least $\Omega(\log m)$.
\end{theorem}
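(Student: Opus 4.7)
I would combine the new duality-based bound of Theorem~\ref{thm:new_bound} with a careful coupling between the $n+C$ bidders of the $\VCG$ auction and the $n$-bidder benchmark. Because additive $\VCG$ decomposes into a second-price auction per item, its expected revenue is $\sum_j \E_{\mathbf{u}\sim F_j^{n+C}}[u_{(2)j}]$. By Theorem~\ref{thm:new_bound}, it therefore suffices to establish the per-item domination
\[
\E_{\mathbf{u}\sim F_j^{n+C}}\bigl[u_{(2)j}\bigr]\;\ge\;\E_{\mathbf{v}\sim F^n}\Bigl[\max_{i\in[n]}\bigl(\varphi_j^+(v_{ij})\,\mathbb{I}_{v_i\in R_j}+v_{ij}\,\mathbb{I}_{v_i\notin R_j}\bigr)\Bigr]
\]
for every item $j$ with $C=n+2(m-1)$. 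Since for each bidder exactly one indicator is nonzero, I would upper bound the right-hand side by $\E[\Phi_j]+\E[V_j]$, where $\Phi_j:=\max_{i:\,v_i\in R_j}\varphi_j^+(v_{ij})$ is a virtual-welfare term on the ``active'' bidders for item $j$, and $V_j:=\max_{i:\,v_i\notin R_j}v_{ij}$ is a truncated welfare term on the remaining bidders.

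Next I would argue that each of $\E[\Phi_j]$ and $\E[V_j]$ can be dominated by a second-price auction with $n+m-1$ i.i.d.\ samples from $F_j$. For $\Phi_j$, Myerson's theorem identifies $\E[\Phi_j]$ with the revenue of Myerson's optimal auction on the subset of ``$R_j$-active'' bidders, which by the classical Bulow-Klemperer theorem is at most the second-price revenue with one extra bidder; a quantile-space coupling mirroring the warm-up in Section~\ref{sec:single-bidder-iid} lets me replace the random set of $R_j$-active bidders by a fresh i.i.d.\ pool of size $n+m-1$. For $V_j$, the restriction $v_i\notin R_j$ removes at least one bidder per realization, and a similar coupling bounds its expectation by the maximum of $n+m-1$ fresh i.i.d.\ draws from $F_j$ (the extra $m-1$ samples compensate for the loss of the order statistic bound). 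This is the point at which the quantile-based regions $R_j$ from Theorem~\ref{thm:new_bound} are crucial: they allow the single-item coupling of the warm-up to go through even for heterogeneous items.

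The crux is then combining the two bounds into a \emph{single} second-price auction without paying for both. To do this I would partition the $N=2(n+m-1)$ extra samples into two disjoint groups $A$ and $B$ of size $n+m-1$ each, couple $\Phi_j$ with $\max A$ and $V_j$ with $\max B$, and use the deterministic inequality $u_{(2)}^{N}\ge \min\{\max A,\max B\}$ together with $\E[\min\{X,Y\}]=\E[X]+\E[Y]-\E[\max\{X,Y\}]$. Regularity of $F_j$ and the revenue-submodularity-style monotonicity of order statistics should then let me absorb the ``overhead'' term $\E[\max\{\max A,\max B\}]$ and conclude $\E[u_{(2)j}^N]\ge \E[\Phi_j]+\E[V_j]$.

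\textbf{Main obstacle and lower bound.} The hard step is the last one: the naive union-bound costs a factor of two (as flagged in the introduction footnote about ``$X=$highest virtual value'' versus ``$Y=$second-highest value''), which is unacceptable for a \emph{full} Bulow-Klemperer result with no approximation loss. Getting a single second-price auction to simultaneously dominate both $\Phi_j$ and $V_j$ is where the extra $2(m-1)$ bidders (over and above the $n$ that would handle a single term) are spent, and is where the careful coupling in quantile space is essential. For the $\Omega(\log m)$ lower bound, I would extend the two-item construction in the proof of Proposition~\ref{prop:counterexample} to $m$ items, stacking scaled truncated-equal-revenue distributions so that the optimal revenue exceeds the per-item second-price revenue by a $\Theta(\log m)$ factor, forcing $\Omega(\log m)$ additional bidders just to close the gap.
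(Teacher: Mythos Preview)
Your proposal has a genuine gap at exactly the place you flag as the ``main obstacle,'' and the paper's proof takes a different route precisely to avoid it.

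\textbf{Where your approach fails.} The decomposition $\Rev_j\le\E[\Phi_j]+\E[V_j]$ is already too lossy: it replaces a $\max$ by a sum, and your recombination step cannot undo this. Concretely, from $u_{(2)}^{N}\ge\min\{\max A,\max B\}$ and $\E[\min\{X,Y\}]=\E[X]+\E[Y]-\E[\max\{X,Y\}]$ you obtain
\[
\E\bigl[u_{(2)}^{N}\bigr]\;\ge\;\E[\max A]+\E[\max B]-\E[\max(A\cup B)],
\]
and there is no reason this quantity dominates $\E[\Phi_j]+\E[V_j]$; the subtracted term $\E[\max(A\cup B)]$ is of the same order as each of the first two and is not ``absorbed'' by regularity or submodularity. (Try the capped equal-revenue distribution: all three maxima scale like $\ln(\cdot)$, and the difference is $O(1)$, not enough to cover $\E[\Phi_j]+\E[V_j]$.) There are also secondary issues: conditioned on $v_i\in R_j$, the marginal of $v_{ij}$ is no longer $F_j$, so identifying $\E[\Phi_j]$ with a Myerson revenue on i.i.d.\ $F_j$-bidders and invoking classical BK is not immediate; and the claim that ``$v_i\notin R_j$ removes at least one bidder'' is false (it may remove none).

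\textbf{What the paper does instead.} The paper never splits into $\Phi_j+V_j$. It conditions on whether the bidder with the \emph{highest} value for item $j$ lies in $R_j$ (Lemma~\ref{lem:revj_bound}): if $v_{(1)}\notin R_j$ the benchmark term is at most $v_{(1)j}$, and if $v_{(1)}\in R_j$ it is at most $\max\{\varphi_j(v_{(1)j}),\,v_{(2)j}\}$, keeping the $\max$ intact. It then constructs an explicit single-item mechanism on $2n+2m-2$ i.i.d.\ $F_j$-bidders that \emph{always sells} and whose revenue matches this conditional bound case by case (the mechanism $\singleparam\mbox{-}j$), via a coupling of the first $n$ bids with $\mathbf{v}_j$, the next $m-1$ bids (in quantile space) with $v_{(1),-j}$, and a fresh block of $n+m-1$ bids to handle the $\max\{\varphi_j(v_{(1)j}),\,\cdot\,\}$ term through a dedicated probabilistic inequality (Claim~\ref{cor:max_fresh_geq_max_same}). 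Finally, since $\VCG$ is the optimal always-sell mechanism for i.i.d.\ regular bidders (Observation~\ref{obs:vcg_opt}), $\VCG$ with $2n+2m-2$ bidders dominates $\Rev_j$. The factor-of-two problem you identify is resolved not by algebra on order statistics but by \emph{never} summing the two contributions in the first place.

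\textbf{Lower bound.} The paper does not stack heterogeneous distributions as in Proposition~\ref{prop:counterexample}; it uses the $m$ \emph{i.i.d.} equal-revenue example of Hart and Nisan, for which $\Rev\ge m\log m/4$, while per-item $\VCG$ revenue with $k$ bidders is at most $k$ by revenue submodularity. This directly forces $k\ge\Omega(\log m)$ extra bidders.
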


The remainder of this section contains a proof of the upper bound, with some missing proofs of claims deferred to Appendix \ref{app:mult_add_missing}. The lower bound is proved in Appendix~\ref{sec:lb}, and makes use of the same ``i.i.d. equal revenue'' example from~\cite{HartN12}.

\subsection{An Upper Bound on Item $j$'s Contribution to the Revenue}
Fix some item $j$, and consider $j$'s contribution to the upper bound on the revenue given in Eq.~\eqref{eq:multiple_new_flow}, i.e.,
$$\Rev_j = \E_{\vals\sim F^n}\left[\max_{i\in [n]}\bigg(\varphi_{j}(v_{ij})^+\cdot\mathbb{I}_{v_i\in R_j} \ + \ v_{ij}\cdot\mathbb{I}_{v_i\notin R_j}\bigg)\right].$$
For a given valuation profile $\vals$, let ``$(i)$'' be the index of the bidder with the $i$th highest valuation for item $j$. The next key lemma gives a bound on $\Rev_j$, which is useful in proving our BK result.
\begin{lemma} \label{lem:revj_bound}
	$\Rev_j$ is upper bounded by
	\begin{eqnarray*}
\E_{\textbf{v}\sim F^n}\left[\max\left\{\varphi_j\left(v_{(1)j}\right),v_{(2)j}\right\}\ \vert \  v_{(1)}\in R_j  \right]\cdot \Pr_{\textbf{v}\sim F^n}\left[ v_{(1)}\in R_j\right]
 +  \E_{\textbf{v}\sim F^n}\left[v_{(1)j} \ \vert \  v_{(1)}\notin R_j\right]\cdot \Pr_{\textbf{v}\sim F^n}\left[ v_{(1)}\notin R_j\right].
	\end{eqnarray*}
\end{lemma}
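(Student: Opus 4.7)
The plan is to split the max inside $\Rev_j$ into the two cases $v_{(1)} \in R_j$ and $v_{(1)} \notin R_j$ (where $(1)$ denotes the bidder with the highest value for item $j$), and bound each bidder's contribution to the max separately in each case. The key observation I will rely on is that for regular $F_j$ and nonnegative values, $\varphi_j(v)^+ \le v$, so no matter which of the two indicators is active, the contribution of bidder $i$ is always at most $v_{ij}$.

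First I would handle the event $\{v_{(1)} \in R_j\}$. The contribution of bidder $(1)$ equals $\varphi_j(v_{(1)j})^+ \cdot \mathbb{I}_{v_{(1)} \in R_j} + v_{(1)j}\cdot \mathbb{I}_{v_{(1)} \notin R_j} = \varphi_j(v_{(1)j})^+$. For every other bidder $i \ne (1)$, the contribution is at most $v_{ij} \le v_{(2)j}$, regardless of whether $v_i \in R_j$ or not, by the observation above. Thus on this event the max is at most $\max\{\varphi_j(v_{(1)j})^+, v_{(2)j}\}$, and I would then note that since $v_{(2)j}\ge 0$, this equals $\max\{\varphi_j(v_{(1)j}), v_{(2)j}\}$ (if $\varphi_j(v_{(1)j})<0$ the $v_{(2)j}$ term dominates either way). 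Next, on the event $\{v_{(1)} \notin R_j\}$, bidder $(1)$ contributes exactly $v_{(1)j}$, while every other bidder $i \ne (1)$ again contributes at most $v_{ij} \le v_{(1)j}$, so the max is exactly $v_{(1)j}$.

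Combining the two cases via the law of total expectation gives precisely the claimed bound on $\Rev_j$. I do not expect any serious obstacle: the only subtle point is the replacement of $\varphi_j(v_{(1)j})^+$ by $\varphi_j(v_{(1)j})$ inside the conditional expectation on $\{v_{(1)} \in R_j\}$, which I would justify by the two-case argument above, noting that enlarging $\max\{\varphi_j(v_{(1)j})^+,v_{(2)j}\}$ to $\max\{\varphi_j(v_{(1)j}),v_{(2)j}\}$ can only weaken the bound (so the inequality is preserved in the direction we need).
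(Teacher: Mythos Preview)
Your proof is correct and follows essentially the same approach as the paper: split on whether $v_{(1)}\in R_j$, use $\varphi_j(v_{ij})^+\le v_{ij}$ to bound every bidder's contribution by $v_{ij}$, and then drop the ${}^+$ via the observation that $v_{(2)j}\ge 0$ makes $\max\{\varphi_j(v_{(1)j})^+,v_{(2)j}\}=\max\{\varphi_j(v_{(1)j}),v_{(2)j}\}$. (Your closing remark that this replacement ``can only weaken the bound'' has the direction backwards, but it is harmless since---as you correctly argued just before---the two quantities are in fact equal.)
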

\begin{proof}
	We consider two cases. If, $v_{(1)}\notin R_j$, then $$v_{(1)j}\geq v_{ij}\geq \varphi_{j}(v_{ij})^+\cdot\mathbb{I}_{v_i\in R_j} \ + \ v_{ij}\cdot\mathbb{I}_{v_i\notin R_j}$$
	for every bidder $i$, and therefore, $$v_{(1)j}\geq \max_{i\in [n]}\bigg(\varphi_{j}(v_{ij})^+\cdot\mathbb{I}_{v_i\in R_j} \ + \ v_{ij}\cdot\mathbb{I}_{v_i\notin R_j}\bigg).$$
	On the other hand, if $v_{(1)}\in R_j$, then
	\begin{eqnarray*}
		& & \max_{i\in [n]}\bigg(\varphi_{j}(v_{ij})^+\cdot\mathbb{I}_{v_i\in R_j} \ + \ v_{ij}\cdot\mathbb{I}_{v_i\notin R_j}\bigg) = \\ \\& &\max\left\{\varphi_j\left(v_{(1)j}\right)^+, \max_{i\neq (1)}\bigg(\varphi_{j}(v_{ij})^+\cdot\mathbb{I}_{v_i\in R_j} \ + \ v_{ij}\cdot\mathbb{I}_{v_i\notin R_j}\bigg)\right\}\leq \\
		& & \max\left\{\varphi_j\left(v_{(1)j}\right)^+, v_{(2)j}\right\} = \max\left\{\varphi_j\left(v_{(1)j}\right), v_{(2)j}\right\}.
	\end{eqnarray*}
	
	We get that
	\begin{eqnarray}
	\Rev_j & \leq & \E_{\vals\sim F^n}\left[ \max\left\{\varphi_j\left(v_{(1)j}\right), v_{(2)j}\right\}\cdot \mathbb{I}_{v_{(1)}\in R_j} + v_{(1)j}\cdot \mathbb{I}_{v_{(1)}\notin R_j} \right] \nonumber\\
	& = & \E_{\textbf{v}\sim F^n}\left[\max\left\{\varphi_j\left(v_{(1)j}\right),v_{(2)j}\right\}\ \vert \  v_{(1)}\in R_j  \right]\cdot \Pr_{\textbf{v}\sim F^n}\left[ v_{(1)}\in R_j\right]\nonumber\\
	& & +  \E_{\textbf{v}\sim F^n}\left[v_{(1)j} \ \vert \  v_{(1)}\notin R_j\right]\cdot \Pr_{\textbf{v}\sim F^n}\left[ v_{(1)}\notin R_j\right]. \label{eq:j_contrib}
	\end{eqnarray}
\end{proof}

\subsection{A Single Parameter Lemma}
In this section we prove our main technical lemma.
\begin{lemma}\label{lem:main_additive}
	Consider the case with a single item for sale and $2n+2m-2$ bidders whose values are drawn i.i.d. from $F_j$.
    The expected revenue obtained by running $\VCG$ (just on item $j$) is at least $\Rev_j$.
\end{lemma}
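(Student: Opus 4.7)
My plan is to combine the bound from Lemma~\ref{lem:revj_bound} with a careful coupling between the $2n+2m-2$ i.i.d.\ single-parameter bidders and the original multi-item setup. Since VCG on a single item is just the second-price auction, the quantity to lower bound is $\mathbb{E}[v_{(2)}^{2n+2m-2}]$ where the order statistics are taken over $2n+2m-2$ i.i.d.\ draws from $F_j$.

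First I will reinterpret the conditioning events of Lemma~\ref{lem:revj_bound} in quantile space. Letting $U_1,\dots,U_n := F_j(v_{1j}),\dots,F_j(v_{nj})$ be the i.i.d.\ Uniform$[0,1]$ quantiles of the original bidders on item $j$, and $W_1,\dots,W_{m-1}$ be $m-1$ further independent Uniform$[0,1]$ variables representing the top-on-$j$ bidder's quantiles for items $j' \neq j$, the event $\{v_{(1)} \in R_j\}$ becomes exactly $\{\max_i U_i > \max_k W_k\}$ and its probability is $n/(n+m-1)$. This quantile-space view is what makes a coupling with a single-dimensional i.i.d.\ population possible even though items were asymmetric.

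Next I will couple the $2n+2m-2$ single-parameter bidders with this picture by splitting them into two groups $A$ and $B$, each of size $n+m-1$. In group $A$, the first $n$ bidders have values $F_j^{-1}(U_i)$ (identified with the originals) and the remaining $m-1$ ``dummy'' bidders have values $F_j^{-1}(W_k)$; group $B$ is an independent fresh copy of $n+m-1$ i.i.d.\ draws from $F_j$. Under this coupling, the event ``the top bidder of $A$ lies among its first $n$'' coincides with $\{v_{(1)} \in R_j\}$, and the item-$j$ order statistics among the originals in $A$ coincide with $v_{(1)j}^n, v_{(2)j}^n$ from the multi-item setup. In particular, the two expectations on the right-hand side of Lemma~\ref{lem:revj_bound} can be expressed as expectations in the coupled single-parameter world.

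The final step is to show $\mathbb{E}[v_{(2)}^{A \cup B}]$ dominates the two-term bound. The out-of-region piece is handled by a pointwise inequality: whenever the top of $A$ is a dummy, all $n$ originals survive in $A \cup B$ after removing the unique top bidder, so $v_{(2)}^{A \cup B} \geq v_{(1)j}^n$. The main obstacle is the in-region piece, where we must cover $\mathbb{E}[\max\{\varphi_j(v_{(1)j}^n), v_{(2)j}^n\} \cdot \mathbb{I}_{v_{(1)}\in R_j}]$: as emphasized in the introduction's $X,Y$ example, this is not the expected revenue of any standard auction, so neither the classical BK identity $\mathbb{E}[v_{(2)}^{n+1}] = \mathbb{E}[\varphi_j(v_{(1)}^n)^+]$ nor the lossy bound $\max\{a,b\}\le a+b$ is enough. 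My plan is to case-split on whether $v_{(1)}^B > v_{(1)}^A$: in the ``$B$ wins overall'' subcase, $v_{(2)}^{A\cup B} \ge v_{(1)j}^n \ge \max\{\varphi_j(v_{(1)j}^n), v_{(2)j}^n\}$ holds pointwise and provides slack; in the ``$A$ wins overall'' subcase, $v_{(2)}^{A\cup B} \ge \max\{v_{(2)j}^n, v_{(1)}^B\}$, and the $n+m-1 \geq n+1$ fresh bidders in $B$ supply a Bulow-Klemperer-type boost large enough, in conditional expectation, to dominate $\varphi_j(v_{(1)j}^n)$. Balancing the slack from the first subcase against the conditional deficit in the second via a careful expectation computation that exploits the regularity of $F_j$ should yield $\mathbb{E}[v_{(2)}^{A\cup B}] \ge \Rev_j$ and close the proof.
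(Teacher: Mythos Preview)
Your coupling is exactly the one the paper uses (your group $A$ is the paper's first $n$ ``main'' bidders together with the $m-1$ ``dummy'' bidders, your group $B$ is the paper's $n+m-1$ fresh bidders $\mathbf{w}$), and your treatment of the out-of-region case and of the sub-event $E_1=\{v_{(1)}\in R_j\}\cap\{v_{(1)}^B>v_{(1)}^A\}$ is correct. The genuine gap is the in-region sub-event $E_2=\{v_{(1)}\in R_j\}\cap\{v_{(1)}^A\ge v_{(1)}^B\}$, where you assert that ``the $n+m-1$ fresh bidders in $B$ supply a Bulow--Klemperer-type boost'' and that ``balancing the slack\ldots should yield'' the result. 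This is precisely the hard step, and the argument you sketch does not go through: on $E_2$ you are conditioning on $v_{(1)}^B<v_{(1)j}^n$, which pushes $v_{(1)}^B$ \emph{down} and simultaneously pushes $v_{(1)j}^n$ (hence $\varphi_j(v_{(1)j}^n)$) \emph{up}; the classical Bulow--Klemperer identity gives no control over $\mathbb{E}[v_{(1)}^B\mid v_{(1)}^B<v_{(1)j}^n]$ versus $\mathbb{E}[\varphi_j(v_{(1)j}^n)\mid v_{(1)j}^n>v_{(1)}^B]$, and the $E_1$-slack you want to borrow lives on a disjoint event with its own adverse conditioning. You have not supplied the ``careful expectation computation,'' and there is no standard identity that delivers it.

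The paper avoids this difficulty entirely by \emph{not} analyzing $v_{(2)}^{A\cup B}$ directly. Instead it builds an auxiliary truthful, always-selling mechanism $\singleparam\mbox{-}j$ whose allocation rule is allowed to look at $\varphi_j$: on the in-region event it allocates to $u_{(1)j}$ if $\varphi_j(u_{(1)j})>w_{(2)j}$ and to $w_{(1)j}$ otherwise, so its (virtual) revenue there is exactly $\max\{\varphi_j(u_{(1)j}),w_{(2)j}\}$. The non-trivial probabilistic input is then Claim~\ref{cor:max_fresh_geq_max_same}, which shows $\mathbb{E}[\max\{\varphi_j(u_{(1)j}),w_{(2)j}\}\mid R_j]\ge \mathbb{E}[\max\{\varphi_j(v_{(1)j}),v_{(2)j}\}\mid R_j]$ via a positive-correlation and first-order-stochastic-dominance argument (this is where regularity is actually used). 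Finally, Observation~\ref{obs:vcg_opt} (for i.i.d.\ regular bidders, VCG is revenue-optimal among mechanisms that always sell) transfers the bound from $\singleparam\mbox{-}j$ to VCG. Both of these ingredients---the virtual-value-aware auxiliary mechanism and the optimality of VCG among always-sell mechanisms---are missing from your plan, and they are exactly what replaces your unproved ``slack balancing'' step.
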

We first show that proving this lemma immediately yields a proof for Theorem \ref{thm:additive_main}.\\

\noindent\textbf{Proof of Theorem \ref{thm:additive_main}} (based on Lemma \ref{lem:main_additive}): For additive bidders, running VCG is equivalent to running VCG on each item separately. Applying Lemma \ref{lem:main_additive}, we get that $$\Rev \leq \sum_{j=1}^m \Rev_j\leq \sum_{j=1}^{m}\E_{\vals_j\sim F_j^{2n+2m-2}}\VCG(\vals_j)= \E_{\vals\sim F^{2n+2m-2}}\VCG(\vals).$$\qed

In proving Lemma~\ref{lem:main_additive}, we make use of the following property of $\VCG$, which is folklore knowledge (proof appears in Appendix~\ref{app:mult_add_missing} for completeness):

\begin{observation}[Folklore] \label{obs:vcg_opt}
	Consider a set of bidders drawn i.i.d. from a regular distribution and a single item for sale. The optimal mechanism that always sells an item is the $\VCG$ mechanism.
\end{observation}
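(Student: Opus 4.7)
The plan is to invoke Myerson's virtual-welfare characterization of expected revenue in single-parameter environments. For any BIC, BIR single-item mechanism with (possibly randomized) allocation rule $x$, Myerson's lemma gives that the expected revenue equals the expected virtual surplus
\[
\E_{\vals}\!\left[\sum_i \varphi(v_i)\, x_i(\vals)\right],
\]
where $\vals$ is the reported value profile and $x_i(\vals)$ is the probability that bidder $i$ wins the item. The ``always sells the item'' restriction is the pointwise constraint $\sum_i x_i(\vals)=1$ for every $\vals$.

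Next, I would pointwise-maximize virtual welfare subject to this constraint: the maximizer puts the entire unit of probability on $\arg\max_i \varphi(v_i)$. Regularity of $F$ guarantees that $\varphi$ is nondecreasing, and since all bidders share the same distribution, $\arg\max_i \varphi(v_i)=\arg\max_i v_i$ — precisely the highest-bidder allocation implemented by $\VCG$. Because this rule is monotone in every bidder's report, it is implementable, and Myerson's payment identity (together with the BIR normalization that the lowest type pays zero) pins the payments down uniquely: the winner is charged the threshold bid at which they would still win, i.e., the second-highest value, which is exactly the Vickrey/$\VCG$ payment.

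I do not expect any real obstacle here, which is consistent with the ``folklore'' label. The only subtlety worth flagging is that ``always sells'' is a pointwise constraint on $x$, so the pointwise virtual-welfare maximizer is automatically feasible — there is no tension with incentive-compatibility beyond verifying monotonicity, which is immediate from regularity plus the i.i.d.\ assumption. Ties in value occur with probability zero under continuous $F$ and, when they do occur (e.g., in the discrete case), are resolved by the lexicographic tie-breaking convention of the Preliminaries; either way they have no effect on expected revenue, and the argument goes through unchanged.
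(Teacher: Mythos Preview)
Your proposal is correct and follows essentially the same argument as the paper: both invoke Myerson's revenue-equals-virtual-welfare identity, observe that the always-sell constraint forces allocation to the highest virtual value, and then use regularity plus the i.i.d.\ assumption to conclude that this coincides with allocating to the highest bidder, i.e., $\VCG$. Your write-up is slightly more detailed (explicitly checking monotonicity and deriving the second-price payment via the payment identity), but the core idea is identical.
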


In order to prove Lemma $\ref{lem:main_additive}$, we introduce a mechanism for selling a single good to a set of $2n+2m-2$ bidders drawn from $F_j$, and show that the expected revenue of this mechanism is at least the bound on $\Rev_j$ in Eq. $(\ref{eq:j_contrib})$. Moreover, this mechanism always sells the item, and therefore, by Observation \ref{obs:vcg_opt}, VCG gets at least as much revenue, so a bound on the revenue of this mechanism is sufficient to prove Lemma $\ref{lem:main_additive}$. Our single-item mechanism, $\singleparam\mbox{-}j$, is given in Fig. $\ref{fig:single_param}$.

\begin{figure} [H]
	\colorbox{MyGray}{
		\begin{minipage}{\textwidth} {
		$\singleparam\mbox{-}j$\\
		\noindent\textbf{Input:} $2n+2m-2$ bids sampled i.i.d. from distribution $F_j$.
		\begin{enumerate}
			\item Consider the bidders in some arbitrary, predetermined order. Rename the bidders as follows:
			\begin{enumerate}
				\item Rename the first $n$ bids to $u_{(1)j},u_{(2)j},\ldots,u_{(n)j}$, where $u_{(i)j}$ is the $i$th highest bid of the first $n$ bids. Let $\textbf{u}_j$ denote this $n$-dimensional vector.
				\item Rename the next $m-1$ bids to $u_{(1)1},u_{(1)2},\ldots,u_{(1)j-1},u_{(1)j+1},\ldots,u_{(1)m}$ according to their arbitrary order. Let $u_{(1)-j}$ denote this $(m-1)$-dimensional vector.
				\item Rename the last $n+m-1$ bids to $w_{(1)j},w_{(2)j},\ldots,w_{(n+m-1)j}$, where $w_{(i)j}$ is the $i$th highest bid of the last $n+m-1$ bids. Let $\textbf{w}$ denote this $(n+m-1)$-dimensional vector.
			\end{enumerate}
			\item Let $j'\gets {\arg\max}_{k\neq j} u_{(1)k}$.
			\item If $u_{(1)j'}> u_{(1)j}$: \label{step:comparison1}
			\squishlist 
				\item Allocate the item to $u_{(1)j'}$.
			\squishend 
			\item Else:
			\begin{enumerate}
				\item If $\varphi_j\left(u_{(1)j}\right) > w_{(2)j}$: \label{step:comparison2}
				\squishlist 
					\item Allocate the item to $u_{(1)j}$.
				\squishend 
				\item Else:
				\squishlist 
					\item Allocate the item to $w_{(1)j}$.
				\squishend 
			\end{enumerate}
			\item Charge the winner according to Myerson's payment identity.
		\end{enumerate}
		}
	\end{minipage}}
	\caption{A single parameter mechanism with expected revenue at least $\Rev_j$.}
	\label{fig:single_param}
\end{figure}

We first note that mechanism $\singleparam\mbox{-}j$ is truthful since the allocation of every bidder is monotone in the bidder's valuation \cite{Mye81}. Before giving the full proof of Lemma~\ref{lem:main_additive}, we give the intuition. The trick is trying to couple the various events $v_{(1)}\in R_j$, $\varphi_j(v_{(1)j}) > v_{(2)j}$, etc. that affect the random variable related to $\Rev_j$ (Eq. \eqref{eq:j_contrib}) with events that determine the different cases in the mechanism $\singleparam\mbox{-}j$. The reader can get a good idea for which events will be coupled based on our decision for how to allocate the item in $\singleparam\mbox{-}j$. However, the complete analysis is a bit subtle and in particular requires the following probabilistic claim (proof appears in Appendix~\ref{app:prob_claim}) in addition to the proper coupling.

\begin{claim} \label{cor:max_fresh_geq_max_same}
	For any regular distribution $F$, $$\E_{\substack{\mathbf{a}\sim F^{\ell}\\ \mathbf{b}\sim F^{k}}}\left[\max\left\{\varphi\left(a_{(1)}\right),a_{(2)}\right\}\vert a_{(1)} > b_{(1)}\right] \leq \E_{\substack{\mathbf{a}\sim F^{\ell}\\ \mathbf{b}\sim F^{k}\\ \mathbf{c}\sim F^{\ell+k}}}\left[\max\left\{\varphi\left(a_{(1)}\right),c_{(2)}\right\}\vert a_{(1)} > b_{(1)}\right].$$	
\end{claim}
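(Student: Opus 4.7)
The plan is to couple the fresh sample $\mathbf{c}$ with the multiset $\mathbf{a}\cup\mathbf{b}$, use exchangeability to rewrite both sides of the claim in terms of order statistics of $n := \ell + k$ i.i.d.\ draws from $F$, and close with a correlation inequality for submodular functions applied to positively dependent variables.

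\textbf{Reducing both sides.} Regard $\tilde{\mathbf{c}} := (\mathbf{a}, \mathbf{b})$ as a single sample of size $n$ i.i.d.\ from $F$. The event $E := \{a_{(1)} > b_{(1)}\}$ says precisely that the maximum of $\tilde{\mathbf{c}}$ lies in its first $\ell$ coordinates, so by exchangeability $E$ is independent of the joint distribution of $\tilde{\mathbf{c}}$'s order statistics. Under $E$ we have $a_{(1)} = \tilde{c}_{(1)}$ and $a_{(2)} \le \max\{a_{(2)}, b_{(1)}\} = \tilde{c}_{(2)}$, giving
\[
\mathrm{LHS} \;\le\; \E\bigl[\max\{\varphi(\tilde{c}_{(1)}), \tilde{c}_{(2)}\} \mid E\bigr] \;=\; \E\bigl[\max\{\varphi(D_1), D_2\}\bigr],
\]
where $(D_1, D_2)$ is the joint distribution of the top two order statistics of $n$ i.i.d.\ draws from $F$. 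For the RHS, $\mathbf{c}$ is independent of $(\mathbf{a}, \mathbf{b})$ (and hence of $E$), while $a_{(1)} \mid E$ is marginally distributed as $D_1$ by the same exchangeability argument; thus $\mathrm{RHS} = \E[\max\{\varphi(D_1), D_2'\}]$, where $D_2' \sim D_2$ is \emph{independent} of $D_1$.

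\textbf{The remaining inequality.} It therefore suffices to show $\E[\max\{\varphi(D_1), D_2\}] \le \E[\max\{\varphi(D_1), D_2'\}]$: replacing the jointly-sampled (positively dependent) pair by an independent coupling with the same marginals can only increase the expectation. This rests on two ingredients: (i) the map $g(x, y) := \max\{\varphi(x), y\}$ is submodular on $\mathbb{R}^2$, verified by a short case analysis that uses non-decreasingness of $\varphi$ (regularity of $F$); and (ii) $(D_1, D_2)$ are positively quadrant dependent, i.e., $\Pr[D_1 > x, D_2 > y] \ge \Pr[D_1 > x]\Pr[D_2 > y]$ for all $x, y$---a standard FKG-type fact, since both $D_1$ and $D_2$ are monotone functions of the i.i.d.\ sample. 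Combining (i)--(ii) through Hoeffding's identity
\[
\E[g(D_1, D_2)] - \E[g(D_1, D_2')] = \iint \frac{\partial^2 g(x, y)}{\partial x\, \partial y} \cdot \bigl(\Pr[D_1 > x, D_2 > y] - \Pr[D_1 > x]\Pr[D_2 > y]\bigr)\, dx\, dy,
\]
the integrand is pointwise non-positive (mixed partial $\le 0$ by submodularity; bracket $\ge 0$ by PQD), completing the proof.

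\textbf{Main obstacle.} The crux is the correlation inequality in the final step: submodularity of $g$ is routine, but PQD of the top two order statistics is a nontrivial (though classical) input. A further subtlety is that $g$ is only piecewise smooth on $\mathbb{R}^2$, so $\partial^2 g / \partial x\, \partial y$ must be interpreted distributionally (a Dirac mass along $\{y = \varphi(x)\}$ with weight $-\varphi'(x) \le 0$). An alternative that sidesteps the distributional formalism is to decompose $\max\{\varphi(D_1), D_2\} = D_2 + (\varphi(D_1) - D_2)^+$ and compare the two $(\varphi(\cdot) - \cdot)^+$-integrals directly using the PQD bound on joint tail probabilities.
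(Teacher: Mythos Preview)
Your proof is correct and parallels the paper's argument, though organized differently. The paper works directly with the conditioned pair $X=\varphi(a_{(1)})\mid E$ and $Y=a_{(2)}\mid E$: it shows (i) $X,Y$ are positively quadrant dependent (using regularity of $F$), (ii) $c_{(2)}$ first-order stochastically dominates $Y$, and (iii) a generic lemma that PQD together with an independent, stochastically larger replacement for $Y$ can only raise $\E[\max\{X,Y\}]$---proved via the one-dimensional identity $\E[\max\{X,Y\}]-\E[\max\{X,\hat Y\}]=\int\bigl(\Pr[X>t,\hat Y>t]-\Pr[X>t,Y>t]\bigr)\,dt$. You instead use exchangeability to strip the conditioning up front, bounding the LHS by $\E[\max\{\varphi(D_1),D_2\}]$ for the unconditioned top-two order statistics of $\ell+k$ draws and identifying the RHS with its independent-coupling analogue; you then invoke FKG for PQD of $(D_1,D_2)$ and submodularity of $\max\{\varphi(\cdot),\cdot\}$. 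The exchangeability step is a clean shortcut that subsumes the paper's separate FOSD lemma, and the FKG appeal is more off-the-shelf than the paper's hands-on PQD verification; conversely, your two-dimensional Hoeffding identity needs the distributional care you flag, whereas the paper's one-dimensional integral---which is exactly the $(\varphi(D_1)-D_2)^+$ decomposition you sketch as an alternative---avoids that technicality entirely.
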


Note that in the LHS above, the random variables $a_{(1)}$ and $a_{(2)}$ are correlated, but come from $\ell$ bidders. In the RHS, $a_{(1)}$ and $c_{(2)}$ are independent, but $\textbf{c}$ comes from $\ell+k$ bidders. We proceed to prove our  main technical lemma.
\newline\newline
\noindent\textbf{Proof of Lemma \ref{lem:main_additive}:}
	Consider the values $\vals_j$ and $v_{(1)-j}$ in the upper bound of $\Rev_j$ in Eq. $(\ref{eq:j_contrib})$ and $\textbf{u}_j$ and $u_{(1)-j}$ as defined in the renaming phase of $\singleparam\mbox{-}j$. Couple the values whenever $\vals_j=\textbf{u}_j$ and $F_{j'}(v_{(1)j'})=F_j(u_{(1)j'})$ for every $j'\neq j$. To see that this is a valid coupling, observe that both $\vals_j$ and $\textbf{u}_j$ are vectors of $n$ i.i.d. samples from $F_j$, and that $F_{j'}(v_{(1)j'})$ and $F_j(u_{(1)j'})$ are sampled uniformly over the interval $[0,1]$  for all $j'$, independent from other samples.
	
	We next analyze the expected revenue obtained by $\singleparam\mbox{-}j$. 
	For the bidders in $\textbf{u}_j$, we compute their expected virtual value, and for the other bidders, we compute their expected payment. 
	We distinguish between two cases: 
	
	~
	
	\noindent {\bf Case (1):} $u_{(1)j'}> u_{(1)j}$ for some $j'$ (the condition checked in step $\ref{step:comparison1}$ of the mechanism). In this case the item goes to some bidder $u_{(1)j'}$. Since bidder $u_{(1)j'}$ pays the minimum value at which she is allocated, her payment is at least $u_{(1)j}$. 
	The probability for this case is
	\begin{eqnarray*}
		\Pr_{\substack{u_{(1)-j}\sim F_j^{m-1}\\ \textbf{u}_j\sim F_j^n}}\left[\exists j' : u_{(1)j}< u_{(1)j'} \right] & = & \Pr_{\substack{u_{(1)-j}\sim F_j^{m-1}\\ \textbf{u}_j\sim F_j^n}}\left[ \exists j' : F_j(u_{(1)j})<F_{j}(u_{(1)j'})\right]\\
		& = & \Pr_{\substack{v_{(1)-j}\sim F_{-j} \\ \textbf{v}_j\sim F_j^n}}\left[ \exists j' : F_j(v_{(1)j})<F_{j'}(v_{(1)j'})\right]\\
		& = & \Pr_{\textbf{v}\sim F^n}\left[ v_{(1)}\notin R_j\right],
	\end{eqnarray*}
	where the second equality follows from the coupling and the last equality follows from the definition of $R_j$ and the independence of the samples.
	
	~
	
	\noindent {\bf Case (2):} $u_{(1)j'} < u_{(1)j}$ for all $j'$. 
	This event happens with the complementary probability of Case (1), which is $\Pr_{\textbf{v}\sim F^n}\left[v_{(1)}\in R_j\right]$. In this case, the winner is determined according to the condition in Step~\ref{step:comparison2} of our mechanism. I.e., if $\varphi_j\left(u_{(1)i}\right) > w_{(2)j}$, then the winner is bidder $u_{(1)i}$;  
	otherwise \big(if $\varphi_j\left(u_{(1)i}\right) < w_{(2)j}$\big), the winner is $w_{(1)j}$, and by the payment identity, her payment is at least $w_{(2)j}$. 
	
	Combining the two cases, the expected revenue of $\singleparam\mbox{-}j$ is at least
	\begin{eqnarray*}
		& & \E_{\substack{u_{(1)-j}\sim F_j^{m-1}\\ \textbf{u}_j\sim F_j^n}}\left[u_{(1)j} \ \vert \  \exists j'\neq j : u_{(1)j} < u_{(1)j'} \right]\cdot \Pr_{\textbf{v}\sim F^n}\left[ v_{(1)}\notin R_j\right] \\
		& & + \E_{\substack{u_{(1)-j}\sim F_j^{m-1}\\ \textbf{u}_j\sim F_j^n\\ \textbf{w}\sim F_j^{n+m-1}}}\left[\varphi_j\left(u_{(1)j}\right)\cdot \mathbb{I}_{\varphi_j\left( u_{(1)j}\right) \geq w_{(2)j}} + w_{(2)}\cdot \mathbb{I}_{\varphi_j\left( u_{(1)j}\right) < w_{(2)j}} \ \vert \  \forall j'\neq j : u_{(1)j} > u_{(1)j'} \right]\cdot \Pr_{\textbf{v}\sim F^n}\left[ v_{(1)}\in R_j\right].
	\end{eqnarray*}
	We bound each summand of the bound above separately.
	For the first summand, we have
	\begin{eqnarray*}
		\E_{\substack{u_{(1)-j}\sim F_j^{m-1}\\ \textbf{u}_j\sim F_j^n}}\left[u_{(1)j} \ \vert \  \exists j'\neq j : u_{(1)j} < u_{(1)j'} \right]
		& = & \E_{\substack{u_{(1)-j}\sim F_j^{m-1}\\ \textbf{u}_j\sim F_j^n}}\left[u_{(1)j} \ \vert \  \exists j'\neq j : F_j\left(u_{(1)j}\right) < F_j\left(u_{(1)j'}\right) \right]\\
		& = & \E_{\substack{v_{(1)-j}\sim F_{-j}\\ \textbf{v}_j\sim F_j^n}}\left[v_{(1)j} \ \vert \  \exists j'\neq j : F_j\left(v_{(1)j}\right) < F_{j'}\left(v_{(1)j'}\right) \right]\\
		& = & \E_{\textbf{v}\sim F^n}\left[v_{(1)j} \ \vert \  v_{(1)}\notin R_j\right],
	\end{eqnarray*}
	where the second equality follows by the coupling, and the last equality follows by the definition of $R_j$ and the independence of samples.
	
	For the second summand, we first notice that $\varphi_j\left(u_{(1)j}\right)\cdot \mathbb{I}_{\varphi_j\left( u_{(1)j}\right) \geq w_{(2)j}} + w_{(2)}\cdot \mathbb{I}_{\varphi_j\left( u_{(1)j}\right) < w_{(2)j}}=\max\left\{\varphi_j\left(u_{(1)j}\right),w_{(2)}\right\}$. We have
	\begin{eqnarray*}
		& &\E_{\substack{u_{(1)-j}\sim F_j^{m-1}\\ \textbf{u}_j\sim F_j^n\\ \textbf{w}\sim F_j^{n+m-1}}}\left[\max\left\{\varphi_j\left(u_{(1)j}\right),w_{(2)}\right\}\ \vert \  \forall j'\neq j : u_{(1)j} > u_{(1)j'} \right]\\
		& \geq & \E_{\substack{u_{(1)-j}\sim F_j^{m-1}\\ \textbf{u}_j\sim F_j^n}}\left[\max\left\{\varphi_j\left(u_{(1)j}\right),u_{(2)j}\right\}\ \vert \  \forall j'\neq j : u_{(1)j} > u_{(1)j'} \right]\\
		& = & \E_{\substack{v_{(1)-j}\sim F_{-j}\\ \textbf{v}_j\sim F_j^n}}\left[\max\left\{\varphi_j\left(v_{(1)j}\right),v_{(2)j}\right\}\ \vert \  \forall j'\neq j : F_j\left(v_{(1)j}\right) > F_{j'}\left(v_{(1)j'}\right)  \right]\\
		& = & \E_{\textbf{v}\sim F^n}\left[\max\left\{\varphi_j\left(v_{(1)j}\right),v_{(2)j}\right\}\ \vert \  v_{(1)}\in R_j  \right],
	\end{eqnarray*}
	where the inequality follows Claim \ref{cor:max_fresh_geq_max_same}, the first equality follows by the coupling and the last equality follows by the definition of $R_j$ and the independence of samples.
	
	Consequently, the revenue of $\singleparam\mbox{-}j$ is at least
	\begin{eqnarray*}
		\E_{\textbf{v}\sim F^n}\left[v_{(1)j} \ \vert \  v_{(1)}\notin R_j\right]\cdot \Pr_{\textbf{v}\sim F^n}\left[ v_{(1)}\notin R_j\right]
		+  \E_{\textbf{v}\sim F^n}\left[\max\left\{\varphi_j\left(v_{(1)j}\right),v_{(2)j}\right\}\ \vert \  v_{(1)}\in R_j  \right]\cdot \Pr_{\textbf{v}\sim F^n}\left[ v_{(1)}\in R_j\right],
	\end{eqnarray*}
	which  is greater than $\Rev_j$ according to Lemma $\ref{lem:revj_bound}$. Since $\Rev_j$ is a mechanism that always sells the item, applying Observation \ref{obs:vcg_opt} completes the proof of Lemma \ref{lem:main_additive}.\qed

\section{Generalizations: Additive Subject to Downward-Closed}
\label{sec:add-st-constraints}
In this section we generalize our results beyond additive bidders, to bidders whose valuations are additive subject to downwards-closed constraints.
In Section \ref{sub:downward} we consider distributions where the constraints are fixed for the entire distribution, and downwards-closed. In Section~\ref{sec:extensions}, we state the following extensions (all proofs are in the appendix):
\begin{itemize}
\item In the special case that the constraints are matroids, we obtain improved guarantees on the competition complexity (Theorem~\ref{cor:matroid-main}, proved in Appendix~\ref{sub:matroids}).
\item In the case that the constraints themselves are part of the distribution (``additive subject to randomly drawn downward-closed constraints''), we obtain slightly weaker guarantees (Theorem~\ref{thm:ccasymmetric}, proved in Appendix~\ref{sub:asymmetric}).
\item Finally, we extend our results to the competition complexity with respect to a poly-time maximal-in-range mechanism. This complements our results with respect to $\VCG$ in settings for which $\VCG$ cannot be implemented in poly-time (Theorem~\ref{thm:vcgud}, proved in Appendix~\ref{app:mir_vcg}).
\end{itemize}

\subsection{Symmetric General Downward-Closed Constraints}
\label{sub:downward}

In this section, the bidders have valuations that are additive subject to identical downward-closed feasibility constraints, represented by the set system $([m], \Feas)$. We can assume without loss of generality that for every item $j$, $\{j\}\in \mathcal{I}$. 
Our main result in this section is the following:

\begin{theorem}[Competition complexity for downward-closed]
	\label{thm:downward-main}
	Let the competition complexity of $n$ additive bidders over $m$ independent, regular items be $X=X(n,m)$. Then the competition complexity of $n$ additive bidders subject to downward-closed constraints $\mathcal{I}$ over $m$ independent regular items is at most $X + m-1$.
\end{theorem}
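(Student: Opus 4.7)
My plan is to prove Theorem~\ref{thm:downward-main} via the chain of inequalities
\[
\OPT_{DC}(F,n) \;\stackrel{(\mathrm{I})}{\leq}\; \OPT_{Add}(F,n) \;\stackrel{(\mathrm{II})}{\leq}\; \Rev_{Add}(\VCG,F,n+X) \;\stackrel{(\mathrm{III})}{\leq}\; \Rev_{DC}(\VCG,F,n+X+m-1),
\]
where the subscripts $Add$ and $DC$ distinguish the additive and downward-closed environments populated by i.i.d.\ bidders whose item values come from $F$, and $(\mathrm{II})$ is exactly the definition of the additive competition complexity $X$. Chaining the three yields $\OPT_{DC}(F,n) \leq \Rev_{DC}(\VCG,F,n+X+m-1)$, which is the desired bound per Definition~\ref{def:cc}. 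It thus suffices to prove the two outer inequalities $(\mathrm{I})$ and $(\mathrm{III})$ as separate lemmas.

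For $(\mathrm{I})$, I would take any optimal BIC-BIR mechanism $M^\ast$ for $n$ DC bidders and first argue without loss of generality that $M^\ast$ only allocates feasible bundles: if $M^\ast$ assigns some $S \notin \mathcal{I}$ to a bidder, we may shrink $S$ to the value-maximizing feasible $T^\ast \subseteq S$, leaving that bidder's value and payment (and everyone else's allocation) unchanged. Reinterpreting this ``feasible-only'' $M^\ast$ in the additive environment produces identical allocations and payments, and because additive and DC valuations coincide on feasible bundles, the BIC and BIR conditions hold report-by-report. Hence $M^\ast$ is a BIC-BIR mechanism for additive bidders with the same revenue, establishing $(\mathrm{I})$.

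For $(\mathrm{III})$, set $\eta := n + X$ and couple samples so that $\eta$ of the $\eta + m - 1$ DC bidders share valuations with the $\eta$ additive bidders, while the remaining $m-1$ DC bidders draw fresh i.i.d.\ values from $F$. Since VCG on additive bidders decomposes into $m$ independent second-price auctions, the target $\Rev_{Add}(\VCG,F,\eta)$ equals $\sum_j \E[v^{[\eta]}_{(2),j}]$. The plan is to exhibit an auxiliary feasible mechanism on the $\eta+m-1$ DC bidders that processes items in an arbitrary fixed order and, for each item $j$, runs a single-item second-price-style auction restricted to those DC bidders for whom adding $j$ to their current bundle keeps them feasible. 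At most $m-1$ bidders have been committed by the time item $j$ is considered, so at least $\eta$ eligible bidders remain per item, and by the monotonicity of expected second-price revenue in the number of i.i.d.\ regular bidders, the per-item contribution is at least $\E[v^{[\eta]}_{(2),j}]$; summing over $j$ gives the target. The final step is to argue that $\VCG$-DC dominates this auxiliary mechanism in expected revenue.

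The main obstacle will be this last dominance step in $(\mathrm{III})$: $\VCG$-DC is welfare-maximizing over feasible allocations, not revenue-maximizing, so the comparison cannot be made pointwise nor by a direct welfare argument. Unlike the matroid case, where exchange arguments extending those of~\cite{RTY15} apply (Section~\ref{sub:matroids}), arbitrary downward-closed constraints lack the combinatorial structure to convert welfare into revenue. I expect a new combinatorial argument is required, tracking Vickrey externalities item-by-item under the coupling and leveraging regularity of $F$ to lower-bound them by the auxiliary per-item revenue. This is precisely where the $m-1$ ``extra'' bidders are spent: they ensure that, for every realization, enough DC bidders remain eligible to ``substitute'' for each item so that no additive VCG benchmark is lost to the constraints imposed by $\mathcal{I}$.
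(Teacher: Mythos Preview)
Your chain $(\mathrm{I})$--$(\mathrm{III})$ is exactly the paper's, and your arguments for $(\mathrm{I})$ and $(\mathrm{II})$ match those in the proof of Theorem~\ref{thm:downward-main}. The gap you flag in $(\mathrm{III})$ is real, and the auxiliary-mechanism route does not close it. There is no reason $\VCG$-DC should dominate a sequential second-price mechanism; pointwise it need not---with $m=2$ and three unit-demand bidders having item values $(10,100),(9,1),(8,1)$, your auxiliary mechanism (processing item~1 first) collects $9+1=10$, while $\VCG$-DC assigns bidder~1 to item~2 and bidder~2 to item~1 and collects only $1+8=9$---and you offer no tool for the expected comparison either. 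Routing through an unrelated greedy mechanism is a dead end.

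The paper's proof of $(\mathrm{III})$ (Lemma~\ref{lem:downward-extra}) avoids any such comparison and lower-bounds $\VCG$-DC revenue \emph{directly}. Lemma~\ref{lem:vcg_rev} shows that $\VCG$'s total payment is at least $\sum_j v^*_j$, where $v^*_j$ is the highest item-$j$ value among bidders unallocated \emph{in the $\VCG$ outcome itself}; this is immediate from the externality definition, since removing the winner and handing each of his items to a distinct unallocated bidder is a feasible reassignment by downward-closedness. The remaining work is to show $\E[v^*_j] \ge \E[u_{(2:\eta)j}]$ via deferred decisions: fix all item-$(-j)$ values and compute $\OPT_{-j}$, letting $\bar{A}$ be its unallocated set ($|\bar{A}|\ge\eta$); then further fix item-$j$ values for the allocated set $A$ and use a charging argument (Claim~\ref{clm:unallocated}) to identify $\eta-1$ bidders in $\bar{A}$ that stay unallocated whenever $j$ is assigned inside $A$, while Claim~\ref{clm:single_allocated} shows that if $j$ goes to some $i\in\bar{A}$ then $i$ is the \emph{only} $\bar{A}$-bidder allocated. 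Together these pin down $\eta$ bidders whose item-$j$ values are still fresh i.i.d.\ draws from $F_j$ and of whom at most one ends up allocated, so $v^*_j$ is at least the second-highest of those $\eta$ draws. The charging argument in Claim~\ref{clm:unallocated} is precisely the ``new combinatorial argument'' you anticipate, but it is anchored to $\VCG$'s own allocation rather than an auxiliary one. (Aside: regularity is not used anywhere in $(\mathrm{III})$; it enters only through $X$.)
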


Theorem \ref{thm:downward-main} relies on the Lemma \ref{lem:downward-extra} below. Corresponding lemmas for matroids and asymmetric feasibility constraints appear in Appendices \ref{sub:matroids}  and \ref{sub:asymmetric}, respectively (in particular, Lemmas \ref{lem:matriod-extra} and \ref{lem:vcg_rev_asym}). To state our main lemma we use the following notation: fix a product of $m$ regular distributions $F$; 
let $\VCG_{n}^{\Add}$ denote the expected VCG revenue from selling the $m$ items to $n$ additive bidders whose values are i.i.d.~draws from $F$; let $\VCG_{n'}^{\DC}$ denote the expected VCG revenue from selling them to $n'\geq n$ bidders with i.i.d.~values drawn from $F$, whose valuations are additive subject to $n'$ identical downward-closed feasibility constraints. Intuitively, if $n' \approx n$, the expected revenue from VCG with the $n$ \emph{unconstrained} bidders is higher than the revenue from VCG with the $n'$ \emph{constrained} bidders, since in the former all bidders compete for all items. Lemma \ref{lem:downward-extra} gives a bound on how much larger $n'$ should be relative to $n$ such that this intuition ceases to hold. 

\begin{lemma}[Main lemma for downward-closed]
	\label{lem:downward-extra}
	$\VCG_{n+m-1}^{\DC} \ge \VCG_n^{\Add}$.
\end{lemma}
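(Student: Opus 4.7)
My plan is to couple the two sides and establish the inequality in expectation. Let $u_1, \dots, u_{n+m-1}$ be i.i.d.\ values for the $n+m-1$ constrained bidders sampled from $F$, and take the $n$ additive bidders to share the values $u_1, \dots, u_n$. Under this coupling, additive VCG factors into $m$ independent second-price auctions and yields revenue $\sum_j u_{(2)j}$, where $u_{(2)j}$ denotes the second-highest value for item $j$ among the first $n$ bidders. The task then reduces to showing that the expected constrained VCG revenue is at least $\mathbb{E}\!\left[\sum_j u_{(2)j}\right]$.

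Note that the inequality need not hold pointwise: a specific realization may leave the constrained VCG short because the natural second-price runner-up for some item is already tied up winning other items, causing constraint conflicts that suppress item-wise revenue. The $m-1$ extra bidders, being fresh i.i.d.\ draws from $F$, inject enough additional competition to recover this gap on average. I would argue item by item: since $\mathcal{I}$ contains all singletons, every item is allocated in the constrained VCG, say item $j$ to some winner $W_j$. The VCG payment of $W_j$ captures the externality $W_j$ imposes on others, and the heart of the argument is to exhibit, for each item $j$, a concrete runner-up bidder whose value stochastically dominates $u_{(2)j}$ and whose contribution to the VCG externality is preserved by $\mathcal{I}$. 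When the natural runner-up from the first $n$ bidders is rendered infeasible by the VCG allocation of other items, one of the $m-1$ extras substitutes; a counting argument gives at most $m-1$ such local conflicts (one per non-$j$ item), matched exactly by the $m-1$ available extras.

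The main obstacle is making this averaging rigorous across all items simultaneously. Unlike the matroid setting, where exchange axioms permit clean pointwise swaps, downward-closed constraints force an argument at the distributional level, relating VCG payments on the left to order-statistic expectations on the right. The pigeonhole balance of $n+m-1$ bidders against $m$ items is precisely what makes the expectation go through, and this is the genuinely new combinatorial ingredient flagged in the introduction as distinct from prior matroid-based analyses.
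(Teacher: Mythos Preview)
Your proposal identifies the right target---showing that the expected VCG payment for each item $j$ in the constrained setting is at least the expected second-highest of $n$ i.i.d.\ draws from $F_j$---but the route you sketch has a genuine gap, and the coupling you choose is the wrong one.

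By coupling the additive bidders with the \emph{first} $n$ of the $n+m-1$ constrained bidders, you commit in advance to which $n$ values must serve as your reference pool. But in the constrained VCG outcome, the identity of who wins item $j$, who is the runner-up, and which bidders are tied up by other items all depend on the full $(n+m-1)\times m$ value matrix. Your ``extras substitute when the natural runner-up is infeasible'' idea does not go through: the extras' values for item $j$ are already baked into the allocation, so they are not fresh substitutes you can swap in after the fact. The hand-wave that ``at most $m-1$ local conflicts, matched by $m-1$ extras'' is precisely the step that needs a real argument, and you acknowledge as much in your last paragraph.

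The paper's proof avoids this by \emph{not} fixing the reference set of $n$ bidders in advance. Instead it uses the principle of deferred decisions: for a fixed item $j$, first compute the welfare-maximizing allocation $\OPT_{-j}$ of the other $m-1$ items. This identifies a set $\bar{A}$ of at least $n$ bidders who are unallocated in $\OPT_{-j}$; crucially, their values for item $j$ have not yet been used and remain fresh i.i.d.\ draws from $F_j$. The paper then proves two combinatorial claims: (i) if $j$ ultimately goes to someone in $\bar{A}$, the rest of $\bar{A}$ stays unallocated; and (ii) if $j$ goes to someone in $A$, a charging argument over a chain of reallocations from $\OPT_{-j}$ to $\OPT_{j\in A}$ shows that at least $n-1$ bidders of $\bar{A}$ remain unallocated (and these can be named before the item-$j$ values of $\bar{A}$ are revealed). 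Either way, at most one of $n$ bidders with fresh i.i.d.\ item-$j$ values is allocated, so the highest unallocated value---which lower-bounds the VCG payment for $j$ via a simple externality argument---is at least the second order statistic of $n$ samples. The charging argument in (ii) is the ``genuinely new combinatorial ingredient'' you allude to but do not supply; without it (or something equivalent), your sketch does not close.
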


The proof of Lemma \ref{lem:downward-extra} appears in Section \ref{sub:main-lemma-pf}. This lemma can be thought of as quantifying the extra competition complexity required due to the feasibility constraints of the bidders (in comparison to unconstrained additive bidders), and is the main technical hurdle in proving Theorem~\ref{thm:downward-main}. At a high level, the proof is as follows: First, observe that however VCG decides to allocate the items, it always has the option to reallocate item $j$ to a bidder who currently receives nothing. So whoever receives item $j$ pays at least the highest value for item $j$ among all bidders who receive nothing, and the trick is comparing this random variable with $n+m-1$ bidders to the second-highest value from $n$ bidders. Due to the potential complexity of arbitrary downward-closed feasibility constraints, the random variable denoting the highest value for item $j$ among bidders who receive nothing depends quite intricately on the values of all bidders for all other items, and one cannot reason about this random variable with the same Greedy-type arguments one might use for feasibility constraints that are matroids, and a more careful combinatorial argument is required instead.
Given Lemma \ref{lem:downward-extra} (proved shortly), we can prove the main theorem of this section:
\newline\newline
\noindent\textbf{Proof of Theorem \ref{thm:downward-main}:}
	Fix a product of regular distributions $F$, and a downward-closed feasibility set system. We use the following notation: let $\Rev_{n}^{\Add}$ denote the optimal expected revenue that can be achieved from selling the $m$ items to $n$ additive bidders whose values are i.i.d.~draws from $F$, and let $\Rev_{n}^{\DC}$ denote the optimal expected revenue from selling them to such bidders whose valuations are subject to the downward-closed constraint. {Note first that for every feasibility set system,
		$\Rev_{n}^{\Add} \ge \Rev_{n}^{\DC}$. This follows from the following three facts:
\squishlist 
\item Without loss of generality, the optimal mechanism for agents subject to downwards-closed constraints $\mathcal{I}$ only allocates sets in $\mathcal{I}$.
\item The designer of a mechanism for additive bidders is free to restrict herself to allocating only sets in $\mathcal{I}$.
\item Subject to this restriction, it is immaterial whether a bidder is additive or additive subject to $\mathcal{I}$.
\squishend 
So in particular, there exists an optimal mechanism for bidders constrained by $\mathcal{I}$ that is truthful for unconstrained bidders, and the optimal mechanism can only get better.
}
	
	By our assumption that the competition complexity of additive bidders is $X=X(n,m)$, we know that $\VCG_{n+X}^{\Add} \ge \Rev_{n}^{\Add}$.
	We can now apply Lemma \ref{lem:downward-extra} to get $\VCG_{n+X+m-1}^{\DC} \ge \VCG_{n+X}^{\Add}$. Putting everything together,
	$$
	\VCG_{n+X+m-1}^{\DC} \ge \VCG_{n+X}^{\Add} \ge \Rev_{n}^{\Add} \ge \Rev_{n}^{\DC}.
	$$
	We have shown that the competition complexity of additive bidders subject to downward-closed constraints is $X+m-1$, and this completes the proof of Theorem \ref{thm:downward-main}.\qed

\subsection{Proof of Main Lemma (\ref{lem:downward-extra}) for Downward-Closed}
\label{sub:main-lemma-pf}

We first introduce a lemma that gives a lower bound on the revenue of $\VCG$ with bidders subject to identical downward-closed constraints.
\begin{lemma} \label{lem:vcg_rev}
	Consider a set of bidders with identical downward-closed feasibility constraint, and the outcome of $\VCG$. For each item $j$, let $v^*_j$ be the highest value of an unallocated bidder for item $j$ (a bidder that does not receive any item). The revenue of $\VCG$ is at least $\sum_{j\in [m]} v^*_j$.
\end{lemma}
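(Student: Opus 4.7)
The plan is to prove a stronger per-bidder statement: for each bidder $i$ with VCG-allocated set $S_i$, the VCG payment $p_i$ satisfies $p_i \geq \sum_{j \in S_i} v^*_j$. Summing this over all bidders gives $\sum_i p_i \geq \sum_{j \in J} v^*_j$, where $J = \bigcup_i S_i$ is the set of items VCG allocates; I would then separately show that $v^*_j = 0$ for every unallocated item $j \notin J$, which extends the sum to all of $[m]$.

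To establish the per-bidder bound, I would exploit the standard VCG identity $p_i = W(\text{no }i) - W_{-i}(\text{VCG})$ by exhibiting a concrete feasible allocation without $i$ whose welfare is at least $W_{-i}(\text{VCG}) + \sum_{j \in S_i} v^*_j$. The construction is: every $k \neq i$ with $S_k \neq \emptyset$ keeps their VCG bundle $S_k$; for each item $j \in S_i$, let $b(j)$ denote the unallocated bidder achieving $v^*_j$, and give each unallocated bidder $b$ the set $J_b := \{j \in S_i : b(j) = b\}$. The sets $\{J_b\}_{b \in U}$ partition $S_i$, so the overall item assignment is consistent, and the welfare of this allocation is $W_{-i}(\text{VCG}) + \sum_{b \in U} v_b(J_b) = W_{-i}(\text{VCG}) + \sum_{j \in S_i} v^*_j$, where the equality uses that $v_b(J_b) \geq \sum_{j \in J_b} v_{bj} = \sum_{j \in J_b} v^*_j$ for $J_b \in \mathcal{I}$.

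The main obstacle, and the step I would highlight, is feasibility: a priori $J_b$ could fail to lie in the downward-closed system $\mathcal{I}$. This is exactly where the assumption that all bidders share the same constraint $\mathcal{I}$ is used. Since $S_i \in \mathcal{I}$ (it was feasibly allocated to $i$) and $J_b \subseteq S_i$, downward-closedness gives $J_b \in \mathcal{I}$; and because the constraint is common, the same $\mathcal{I}$ governs bidder $b$, so $b$ may feasibly receive $J_b$. Without shared constraints or downward-closedness, multiple items in $S_i$ routing to the same unallocated bidder would break the argument.

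Finally, to dispose of items outside $J$, I would argue via VCG's welfare-maximality: if some $j \notin J$ satisfied $v^*_j > 0$, then there is an unallocated bidder $b \in U$ with $v_{bj} > 0$, and reassigning $j$ to $b$ yields the feasible (by the assumption $\{j\} \in \mathcal{I}$) allocation whose welfare strictly exceeds $W^*$, contradicting optimality of the VCG allocation. Hence $\sum_{j \in [m]} v^*_j = \sum_{j \in J} v^*_j \leq \sum_{i} p_i$, completing the proof. A minor bookkeeping item is the edge case $U = \emptyset$, in which $v^*_j = 0$ for all $j$ and the statement is vacuous.
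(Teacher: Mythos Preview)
Your proposal is correct and follows essentially the same argument as the paper's proof: for each allocated bidder $i$, redistribute the items in $S_i$ to the unallocated bidders that achieve the $v^*_j$'s, use downward-closedness of the common constraint to conclude feasibility of the resulting bundles, and invoke the VCG payment identity. You are in fact a bit more careful than the paper on bookkeeping---explicitly partitioning $S_i$ into the sets $J_b$ when several items route to the same unallocated bidder, handling the case of items $j\notin J$ left unallocated by VCG, and noting the vacuous case $U=\emptyset$---but the core idea is identical.
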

\begin{proof}
Recall that in the $\VCG$ mechanism, each bidder pays his externality on the environment; \textit{i.e.}, the difference in welfare of other bidders when he is not allocated and we he is. Let $A$ and $\bar{A}=[n]\setminus A$ be the respective sets of allocated and unallocated bidders. Fix a bidder $i\in A$, and let $S_i$ be the feasible set of items he receives in $\VCG$. Let $W$ be the welfare of the $\VCG$ outcome. The welfare of all bidders but $i$ for this allocation is $W-v_i(S_i)$.

Consider now the following allocation --- allocate the items not in $S_i$ as in the allocation of the $\VCG$ mechanism, and distribute the items in $S_i$ among the unallocated bidders, where each item $j\in S_i$ is allocated for a bidder $i'\in \bar{A}$ that has value $v^*_j$ for item $j$. Notice that since the bidders allocated items in $S_i$ receive subsets of a feasible set of items, their allocation is also feasible due to the downward-closedness of the feasibility set. The welfare of the purposed allocation is $W-v_i(S_i)+\sum_{j\in S_i}v^*_j$. Since we described some allocation of items to all bidders but $i$, the welfare of the optimal allocation to bidders without $i$ is at least as high. We get that the payment of bidder $i$ is at least $\sum_{j\in S_i} v^*_j$. Summing over all bidders completes the proof of the lemma.
\end{proof}

Our goal is to show that $\VCG_{n+m-1}^{\DC} \ge \VCG_n^{\Add}$. Recall that with additive bidders, VCG decomposes over the items. Therefore, $\VCG_n^{\Add}$ is the sum of expected revenues obtained from selling every item $j$ separately to $n$ single-parameter bidders, whose i.i.d.~values are drawn from $F_j$, using the VCG auction (in this case VCG is equivalent to the second-price auction).
To prove the lemma it is thus sufficient to show that for every item $j$, the expected revenue from selling $j$ separately to $n$ single-parameter bidders -- i.e., the expectation of the second highest among $n$ i.i.d.~random samples from $F_j$, denoted by $\E[u_{(2:n)j}]$ -- is at most the expected payment for item $j$ in VCG with $n+m-1$ constrained bidders (the contribution of selling item $j$ to $\VCG_{n+m-1}^{\DC}$).

In the remainder of the proof we shall argue that in VCG with $n+m-1$ constrained bidders, there are always at least $n$ bidders whose values for item $j$ are i.i.d.~draws from $F_j$, such that at most a single bidder from this set can be allocated. Since by Lemma \ref{lem:vcg_rev}, the payment for item $j$ is at least the highest value for $j$ among the unallocated bidders. This will show that the expected payment for $j$ is $\ge \E[u_{(2:n)j}]$, completing the proof.

From now on, fix an item $j$, and fix the values of the $n+m-1$ constrained bidders for all items but~$j$. Consider the welfare-maximizing allocation of all items but $j$ to the $n+m-1$ bidders. We denote this allocation by $\OPT_{-j}$, and the set of allocated bidders by $A$. Clearly since we have allocated $m-1$ items, $|A|\le m-1$, and $|\bar{A}|=n+m-1-|A|\ge n$ (where $\bar{A}=N\setminus A$). We notice the following:
\begin{claim} \label{clm:single_allocated}
	If item $j$ is allocated to an bidder $i\in \bar{A}$, no other bidder in $\bar{A}$ is allocated an item.
\end{claim}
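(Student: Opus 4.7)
The plan is to argue by contradiction: suppose both $i \in \bar{A}$ and some other bidder $i' \in \bar{A}$ are allocated in VCG, where $i$ receives a set $S_i \ni j$ and $i'$ receives a nonempty set $S_{i'} \in \mathcal{I}$ with $j \notin S_{i'}$. I will construct an alternative allocation whose welfare is at least that of VCG, and use the welfare-optimality plus uniqueness of VCG to force $i'$ to receive the empty set, contradicting our assumption.

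The alternative allocation is the natural one: start from $\OPT_{-j}$ (which assigns nothing to any bidder in $\bar A$, and in particular nothing to $i$) and additionally assign the single item $\{j\}$ to bidder $i$. This is feasible because we assume $\{j\}\in\mathcal I$, and its welfare equals $\mathrm{welfare}(\OPT_{-j}) + v_{ij}$. On the other hand, removing item $j$ from VCG's allocation yields a feasible allocation of the items $[m]\setminus\{j\}$ (by downward-closedness of $\mathcal I$), so $\mathrm{welfare}(\mathrm{VCG}|_{-j}) \le \mathrm{welfare}(\OPT_{-j})$ by the optimality of $\OPT_{-j}$. Since $S_i\in\mathcal I$ and valuations are additive on feasible sets, $v_i(S_i)=v_i(S_i\setminus\{j\})+v_{ij}$, so
\[
\mathrm{welfare}(\mathrm{VCG}) \;=\; v_{ij} + \mathrm{welfare}(\mathrm{VCG}|_{-j}) \;\le\; v_{ij} + \mathrm{welfare}(\OPT_{-j}),
\]
which is exactly the welfare of the alternative allocation.

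Because VCG is welfare-maximizing, the above inequality must hold with equality, giving $\mathrm{welfare}(\mathrm{VCG}|_{-j}) = \mathrm{welfare}(\OPT_{-j})$. Thus $\mathrm{VCG}|_{-j}$ is itself a welfare-maximizing allocation of the items $[m]\setminus\{j\}$. By the tie-breaking convention, the welfare-maximizing allocation of any fixed set of items is unique, so $\mathrm{VCG}|_{-j}=\OPT_{-j}$. In $\OPT_{-j}$ every bidder in $\bar A$ receives the empty set; in particular $i'$ receives nothing in $\mathrm{VCG}|_{-j}$, and since $j$ is given to $i\neq i'$ in VCG, $i'$ receives nothing in VCG either, contradicting $S_{i'}\neq\emptyset$.

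I expect the only subtle step to be the appeal to uniqueness: one must be careful that $\mathrm{VCG}|_{-j}$ (the VCG outcome with item $j$ stripped away) is genuinely a feasible allocation of $[m]\setminus\{j\}$ that can be compared against $\OPT_{-j}$. This is where the downward-closedness of $\mathcal I$ is essential, together with additivity on feasible sets to make the two welfare accountings line up exactly. Everything else is just chasing inequalities.
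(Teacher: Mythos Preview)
Your proof is correct and takes essentially the same approach as the paper's: both arguments show that $\OPT$ restricted to items other than $j$ must coincide with $\OPT_{-j}$, using downward-closedness for feasibility, additivity on feasible sets to separate out $v_{ij}$, optimality of $\OPT_{-j}$, and the tie-breaking uniqueness convention. The paper phrases this as ``any welfare-improving reallocation of items other than $j$ would already have improved $\OPT_{-j}$,'' while you phrase it via the explicit inequality chain $\mathrm{welfare}(\mathrm{VCG}) = v_{ij} + \mathrm{welfare}(\mathrm{VCG}|_{-j}) \le v_{ij} + \mathrm{welfare}(\OPT_{-j})$ and then invoke uniqueness; these are the same argument unwound differently, and your version is arguably the cleaner of the two.
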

\begin{proof}
	We argue that if $j$ is allocated to an bidder $i\in \bar{A}$, the allocation of all other items in $\OPT$ is identical to their allocation in $\OPT_{-j}$, and so precisely one bidder in $\bar{A}$ is allocated by VCG, as required. Items are allocated as in $\OPT_{-j}$ due to the downward-closedness of the constraints on the bidders' additive valuations; we now show that assuming otherwise leads to a contradiction: Consider the allocation $\OPT_{-j}$, and assume for contradiction that after a bidder $i\in\bar{A}$ is allocated item $j$, to get to allocation $\OPT$ we need to perform additional reallocations of items. Since we can assume there's a unique optimal allocation (as described in Section \ref{sec:prelim}), this means that we need additional reallocations to get to the optimal welfare. Because valuations are additive subject to downward-closed constraints, if we can increase the total welfare by item reallocations when bidder $i$ has item $j$, we can get the same increase in welfare by these reallocations when bidder $i$ does not have item $j$. This contradicts the optimality of $\OPT_{-j}$, and completes the proof of the claim.
\end{proof}

We now further fix the values of the bidders in $A$ for item $j$, so that all remaining randomness is in the values of the bidders in $\bar{A}$ for item $j$. The information we currently have allows us to compute an optimal allocation if in the optimal allocation $j$ is allocated to an bidder $i\in A$. We denote this allocation by $\OPT_{j\in A}$. More importantly, it lets us identify a set of bidders in $\bar{A}$ of size at least $n-1$ which will not be allocated if $j$ will be allocated to an bidder in $A$.
\begin{claim} \label{clm:unallocated}
	Before sampling the values of item $j$ for set $\bar{A}$, one can identify a set $\tilde{A}\subseteq \bar{A}$ such that $|\tilde{A}|\geq n-1$, and none of the bidders in $\tilde{A}$ will be allocated in case item $j$ is allocated to an bidder in $A$ in $\OPT$.
\end{claim}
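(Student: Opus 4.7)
The set $\tilde{A}$ I would take is $\tilde{A}:=\bar{A}\setminus A^\star$, where $A^\star$ denotes the set of bidders allocated in $\OPT_{j\in A}$. Point (i) of the claim -- that $\tilde{A}$ is determined before the values of $\bar{A}$ for item $j$ are sampled -- is immediate, since $\OPT_{j\in A}$ is the welfare-maximizing allocation subject to $j$ going to $A$, and so depends only on (a) the values of all bidders for the items in $[m]\setminus\{j\}$ and (b) the values of the bidders in $A$ for item $j$, both of which are fixed at this point. Point (ii) -- that the bidders in $\tilde{A}$ are unallocated in $\OPT$ whenever $\OPT$ assigns $j$ to $A$ -- follows from our lexicographic tie-breaking convention: on this event $\OPT$ itself is a $j\in A$ allocation of maximum welfare, so by uniqueness $\OPT=\OPT_{j\in A}$, hence bidders unallocated in $\OPT_{j\in A}$ are unallocated in $\OPT$.

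To deduce $|\tilde{A}|\ge n-1$, I would use the counting identity $|\bar{A}|=n+m-1-|A|$ together with the fact that $\OPT_{j\in A}$ allocates at most $m$ bidders (every allocated bidder receives at least one of the $m$ items). The bound then reduces to the inclusion $A\subseteq A^\star$: if every bidder of $A$ is allocated in $\OPT_{j\in A}$, then at most $m-|A|$ bidders of $\bar{A}$ are allocated, giving $|\tilde{A}|\ge |\bar{A}|-(m-|A|)=n-1$.

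To establish $A\subseteq A^\star$, I would argue by contradiction: assume some $i\in A$ is unallocated in $\OPT_{j\in A}$ while $T_i^{-j}\neq\emptyset$ in $\OPT_{-j}$, and build a weakly-better $j\in A$ allocation in which $i$ is allocated, contradicting uniqueness of $\OPT_{j\in A}$. Two ingredients drive the construction. First, optimality and uniqueness of $\OPT_{-j}$ together with downward-closedness give a swap inequality: for every $\ell\in T_i^{-j}$ and every bidder $k$ whose $\OPT_{-j}$-allocation $T_k^{-j}$ satisfies $T_k^{-j}\cup\{\ell\}\in\Feas$ (in particular, whenever $k\in\bar{A}$, because $T_k^{-j}=\emptyset$ and $\{\ell\}\in\Feas$), we must have $v_{i\ell}>v_{k\ell}$, since otherwise swapping $\ell$ from $i$ to $k$ in $\OPT_{-j}$ would produce a weakly better and distinct allocation. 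Second, downward-closedness of $\Feas$ guarantees that in $\OPT_{j\in A}$ one may remove $\ell$ from its current holder and hand the singleton $\{\ell\}$ (or the whole set $T_i^{-j}$) to $i$ without breaking feasibility for anyone. Combining these, if every item of $T_i^{-j}$ is held in $\OPT_{j\in A}$ by a bidder in $\bar{A}$, or more generally by a bidder $k$ for which the swap inequality above applies, then pulling all of $T_i^{-j}$ back to $i$ strictly increases welfare, contradicting the uniqueness of $\OPT_{j\in A}$.

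The main obstacle is the residual case in which some $\ell\in T_i^{-j}$ is held in $\OPT_{j\in A}$ by a bidder $k\in A\setminus\{i\}$ whose $\OPT_{-j}$-allocation does not admit $\ell$ feasibly (so the direct swap inequality fails). To handle this, I would trace the chain of displacements forced by inserting $j$ into $\OPT_{-j}$: additivity of valuations on feasible sets makes the total welfare linear in individual item moves, and downward-closedness localizes the feasibility conflict to the bidder $i_0$ that receives $j$. Following the displacements starting from $i_0$ must eventually reach either a bidder in $\bar{A}$ (to whom the first swap inequality applies) or return to $i_0$ itself (paying for $j$'s insertion by the loss at $i_0$). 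Summing the marginal welfare changes along the resulting cycle of reassignments and closing the cycle by reinstating $T_i^{-j}$ at $i$ produces the desired weakly better $j\in A$ allocation, contradicting uniqueness of $\OPT_{j\in A}$ and completing the proof that $A\subseteq A^\star$.
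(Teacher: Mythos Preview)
Your definition $\tilde{A}=\bar{A}\setminus A^\star$ is exactly the right object, and your observations (i) and (ii) are correct. The gap is in your counting argument: the reduction to $A\subseteq A^\star$ is not valid, because $A\subseteq A^\star$ is simply false in general.

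Here is a small counterexample. Take $m=3$ items $\{j,1,2\}$, and let the common constraint be the downward closure of $\{\{j,2\},\{1\}\}$ (so the only feasible two-element set is $\{j,2\}$). Take three bidders with values
\[
a_1:(v_j,v_1,v_2)=(100,10,8),\qquad a_2:(50,1,6),\qquad b:(\,\cdot\,,7,4).
\]
Then $\OPT_{-j}$ gives item $1$ to $a_1$ and item $2$ to $a_2$ (welfare $16$), so $A=\{a_1,a_2\}$ and $\bar{A}=\{b\}$. But $\OPT_{j\in A}$ is $a_1\to\{j,2\}$, $b\to\{1\}$, $a_2\to\emptyset$ (welfare $115$; any allocation with $a_2$ allocated has welfare at most $113$). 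So $a_2\in A\setminus A^\star$.

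This example also shows where your contradiction argument breaks. The item $2\in T_{a_2}^{-j}$ is held in $\OPT_{j\in A}$ by $a_1\in A$, and $T_{a_1}^{-j}\cup\{2\}=\{1,2\}\notin\Feas$, so your swap inequality does not apply. Your proposed fix---tracing a displacement chain from $i_0=a_1$ and ``closing the cycle by reinstating $T_{a_2}^{-j}$ at $a_2$''---cannot manufacture a weakly better allocation here, because none exists: reinstating item $2$ at $a_2$ forces $a_1$ down to $\{j\}$ and strictly loses welfare ($-8+6=-2$), and no further reshuffling recovers it. The chain argument you sketch does not control the \emph{sign} of the accumulated welfare change.

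The fix is to count \emph{items} rather than \emph{allocated bidders}: it is not true that every bidder of $A$ remains allocated in $\OPT_{j\in A}$, but it \emph{is} true that the bidders of $A$ collectively hold at least $|A|$ items in $\OPT_{j\in A}$ (in the example above, $a_1$ alone holds two). This is what the paper proves, via a charging argument that builds an injection from $A\setminus\{i_0\}$ into items held by $A$ in $\OPT_{j\in A}$. That immediately gives at most $m-|A|$ items (hence at most $m-|A|$ allocated bidders) in $\bar{A}$, and your own arithmetic then yields $|\tilde{A}|\ge n-1$.
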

\begin{proof}
	Whenever $j$ is allocated to some bidder $i\in A$, $OPT$ is  $\OPT_{j\in A}$. As mentioned above, $\OPT_{j\in A}$ can be computed, even though we do not know the value of item $j$ to bidders in $\bar{A}$, simply compute the welfare-maximizing allocation in which $j$ is allocated to a bidder in $A$. 
	We will now prove the existence of $n-1$ unallocated bidders from $\bar{A}$ in $\OPT_{j\in A}$. 
	
	Our proof is by the following charging argument: We show a chain of reallocations which starts from allocation $\OPT_{-j}$ and leads to allocation $\OPT_{j\in A}$, and argue that it ends with at least $n-1$ bidders in $\bar{A}$ unallocated.
	Start with bidder $i\in A$ who gets item $j$ in $\OPT_{j\in A}$. We say bidder $i$ \emph{vacates} an item $j'$ if $j'$ is allocated to $i$ in $\OPT_{-j}$ but not in $\OPT_{j\in A}$; we say bidder $i$ \emph{snatches} an item $j'$ if $j'$ is not allocated to $i$ in $\OPT_{-j}$ but $i$ adds it to his allocation. Let $i$ vacate and snatch items until $i$'s allocation reaches his allocation in $\OPT_{j\in A}$. To continue the chain we place the following bidders from $A$ into a ``queue'' (provided 
	that they are not already in the queue): those from whom $i$ snatched an item and those who end up grabbing an item that $i$ vacated. Importantly, when a bidder $i'$ enters the queue, there is a unique item to which his entrance can be \emph{charged}: either an item that was snatched from him by $i$, or an item that $i$ vacated and he grabbed.
	
	We now take the next bidder from the queue and repeat the process (letting this bidder vacate and snatch items until he reaches his allocation in $\OPT_{j\in A}$, and placing bidders in the queue as described above), until the queue is empty. Notice that by the assumption of a unique optimal allocation, 
	each bidder from $A$ not entered to the queue has the same allocation in $\OPT_{j\in A}$ and $\OPT_{-j}$. This is true since items of such an bidder are not snatched by bidders in $\bar{A}$; otherwise, this would contradict $\OPT_{-j}$'s optimality.
	
	We use this chain of reallocations to show a one-to-one mapping from bidders in $A\setminus \{i\}$ to items that are allocated to bidders in $A$ both  in $\OPT_{-j}$ and $\OPT_{j\in A}$. The mapping is as follows:
	\squishlist 
	\item Each bidder who is entered to the queue since an item of his was snatched by a different bidder in $A$ is mapped to a snatched item.
	\item Each bidder who is entered to the queue since he grabbed a vacated item is mapped to an item he grabbed.
	\item Each bidder in $A\setminus\{j\}$ not entered to the queue is mapped to an item allocated to him.
	\squishend 
	This mapping implies bidders in $A$ have at least $|A|$ items allocated to them in $\OPT_{j\in A}$ --- item $j$ is allocated to bidder $i\in A$, and all other bidders are mapped to a distinct item that is allocated to a bidder in $A$. This implies that we have at most $m-|A|$ items allocated to $\bar{A}$ in $\OPT_{j\in A}$. Since $|\bar{A}|=m+n-1-|A|$, it must hold that $n-1$ bidders from $\bar{A}$ are unallocated in $\OPT_{j\in A}$.
\end{proof}

Using the two claims above, we can now prove Lemma \ref{lem:downward-extra}.
\newline\newline
\noindent\textbf{Proof of Lemma \ref{lem:downward-extra}:}
	Let $\tilde{A}\subseteq \bar{A}$ be a set of $n-1$ bidders as guaranteed to exist by Claim \ref{clm:unallocated}. If $|\tilde{A}| < n$, take an arbitrary bidder $i'$ from $\bar{A}\setminus\tilde{A}$ and add him to $\tilde{A}$ (such an bidder is guaranteed to exist since $|\bar{A}|\geq n$ as mentioned above). We now have a set of at least $n$ bidders whose value for item $j$ is yet to be sampled. The values of these bidders for item $j$ are i.i.d.~draws from $F_j$, as required (this is known as the principle of deferred decision).
	
	To complete the proof, we now claim that at most one of the bidders from $\tilde{A}\cup \{i'\}$ is allocated in $\OPT$. We consider two cases --- If $j$ is allocated to an bidder in $\bar{A}$, then by Claim \ref{clm:single_allocated}, this is the only bidder from $\bar{A}$ who is allocated an item in $\OPT$, and therefore, the claim follows. If $j$ is allocated to and bidder in $A$, then by Claim \ref{clm:unallocated}, all bidders in $\tilde{A}$ that are different from $i'$ are unallocated in $\OPT$, and the claim follows. This implies that the payment for item $j$ in $\VCG_{n+m-1}^{\DC}$ is at least the second highest of $n$ i.i.d~draws from $F_j$, exactly the payment for item $j$  in $\VCG_{n}^{\Add}$. This completes the proof of the lemma, and therefore, of Theorem \ref{thm:downward-main}.\qed

\subsection{Extensions}\label{sec:extensions}
For the special case that the constraints are a matroid, we obtain improved bounds on the competition complexity. A proof and definition of disjoint spanning number appear in Appendix~\ref{sub:matroids}, but we note here that the disjoint spanning number is always at most $m-1$. 

\begin{theorem}[Competition complexity for matroids]
	\label{cor:matroid-main}
	Let the competition complexity of $n$ bidders with valuations that are additive over $m$ independent, regular items be $X=X(n,m)$, and let $\mathcal{I}$ be a matroid with disjoint spanning number $\rho$. Then the competition complexity of $n$ additive bidders with valuations that are additive subject to constraints $\mathcal{I}$ over $m$ independent, regular items is at most $X+\rho$.
\end{theorem}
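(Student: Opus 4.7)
The plan is to mirror the two-step template used for the general downward-closed case (Theorem \ref{thm:downward-main}), with a tighter matroid-specific replacement for Lemma \ref{lem:downward-extra}. Writing $\Rev_n^{\Mat}$ and $\VCG_\ell^{\Mat}$ for the revenue of the optimal mechanism and of $\VCG$ with $\ell$ matroid-constrained bidders (values i.i.d.\ from $F$), the goal is the chain
\[
\VCG_{n+X+\rho}^{\Mat} \;\geq\; \VCG_{n+X}^{\Add} \;\geq\; \Rev_n^{\Add} \;\geq\; \Rev_n^{\Mat}.
\]
The middle inequality is exactly the hypothesis on the additive competition complexity, and the rightmost inequality is obtained by the same three-step reasoning used inside the proof of Theorem \ref{thm:downward-main} (an optimal mechanism for matroid-constrained bidders without loss of generality only allocates independent sets, and such a mechanism can be simulated for additive bidders subject to the same restriction). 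Hence the entire novelty lives in the leftmost inequality, which is the matroid analog
\[
\VCG_{n+\rho}^{\Mat} \;\geq\; \VCG_{n}^{\Add}.
\]

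To prove this analog I would follow the recipe of Lemma \ref{lem:downward-extra} item by item. Fix item $j$ and defer the bidders' values for $j$ while exposing all values for the remaining items. Let $\OPT_{-j}$ denote the welfare-maximizing allocation of $[m]\setminus\{j\}$ to the $n+\rho$ bidders, and let $A$ be the set of allocated bidders. The only place the quantity $m-1$ enters the downward-closed proof is the crude observation $|A|\leq m-1$, used to conclude that the complement $\bar A$ contains at least $n$ bidders whose values for item $j$ are still i.i.d.\ draws from $F_j$. The matroid refinement is to replace this counting step by a structural argument showing that the set of bidders in $A$ that can be ``disturbed'' by the arrival of item $j$ has size at most $\rho$.

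The key structural step is that the bidders in $A$ whose bundle participates in the reallocation chain from $\OPT_{-j}$ to $\OPT_{j\in A}$ (the analog of the queue-and-charging argument inside Claim \ref{clm:unallocated}) are exactly the bidders whose allocated bundle $S_i$ spans item $j$ in the matroid: any bidder with $S_i\cup\{j\}\in\mathcal{I}$ could simply absorb item $j$ without any exchange, so no reshuffle is needed on her account. The bundles of the ``spanning'' bidders are pairwise disjoint (distinct bidders' allocations) and each is independent in $\mathcal{I}$, so by the very definition of the disjoint spanning number there are at most $\rho$ of them. The $\geq n$ bidders outside this spanning set therefore have deferred i.i.d.\ values for item $j$ and, by the matroid analogs of Claims \ref{clm:single_allocated} and \ref{clm:unallocated}, at most one of them is allocated in $\OPT$. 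Applying Lemma \ref{lem:vcg_rev} to the unallocated bidder holding the second-highest value among these $n$ draws then yields that $\VCG$'s expected payment for item $j$ is at least the expected second-order statistic of $n$ i.i.d.\ draws from $F_j$, which is exactly item $j$'s contribution to $\VCG_n^{\Add}$.

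The main obstacle is justifying the ``only spanning bidders participate in the chain'' step in sufficient generality, since the queue/chain argument underlying Claim \ref{clm:unallocated} only used downward-closedness and can in principle wander through non-spanning bidders via a cascade of reallocations. I expect to replace the queue argument by an augmenting-path argument in the bidder--item matroid intersection induced by $\OPT_{-j}$ and $\OPT_{j\in A}$, and to verify using the matroid exchange axiom that each intermediate bundle along the augmenting path spans $j$; this is the place where the full force of the matroid structure (rather than mere downward-closedness) is needed, and where the $m-1$ bound is replaced by $\rho$.
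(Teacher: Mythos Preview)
Your high-level plan is exactly the paper's: the chain
\(
\VCG_{n+X+\rho}^{\Mat}\geq\VCG_{n+X}^{\Add}\geq\Rev_n^{\Add}\geq\Rev_n^{\Mat}
\)
with all the work concentrated in the matroid analog of Lemma~\ref{lem:downward-extra}, and you correctly identify the set of bidders whose $\OPT_{-j}$-bundle spans $j$ as the right object (there are at most $\rho$ such bidders, so the complement has size $\geq n$). The paper in fact \emph{defines} $A$ directly as this spanning set rather than as all allocated bidders.

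Where you diverge is in how to handle the obstacle you flag. You propose to control the trajectory of the reallocation chain and argue that it never touches non-spanning bidders, via an augmenting-path/exchange argument. The paper does not do this, and indeed the chain \emph{can} pass through non-spanning bidders. The paper's resolution is an invariant rather than a restriction: it shows that any non-spanning bidder who participates in the chain \emph{remains non-spanning} after her swap. Concretely, the paper invokes the gross-substitutes single-improvement property (Claim~\ref{cla:GS}) to decompose $\OPT_{-j}\to\OPT$ into unit swaps (snatch one item, possibly vacate one), and then uses the elementary matroid fact (Claim~\ref{cla:span}) that if $S$ spans $j''$ but not $j$, then every subset of $S\cup\{j''\}$ still fails to span $j$. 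Hence a bidder in $\bar A$ who snatches $j''$ and drops something ends up with a subset of $S_i\cup\{j''\}$ and still does not span $j$. At most one swap in the whole chain is a pure snatch with no vacate, so all but one bidder in $\bar A$ retain non-spanning bundles in $\OPT$, and Lemma~\ref{lem:vcg_rev} gives that the payment for $j$ is at least the second-highest of the $\geq n$ deferred i.i.d.\ draws.

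So your approach is sound but you have set yourself a harder subgoal than necessary: you do not need to keep non-spanning bidders out of the chain, only to show that participating in it cannot turn a non-spanning bundle into a spanning one. That is exactly what Claim~\ref{cla:span} delivers once the chain is broken into single-item swaps via gross substitutes.
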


For distributions that are additive subject to randomly drawn downward-closed constraints, we obtain the following slightly weaker guarantee on competition complexity (proof in Appendix~\ref{sub:asymmetric}).

\begin{theorem}[Competition complexity for randomly drawn downward closed constraints]\label{thm:ccasymmetric}
Let the competition complexity of $n$ bidders with additive valuations over $m$ independent, regular items be $X=X(n,m)$.
Then the competition complexity of $n$ bidders with additive valuations subject to randomly drawn downwards-closed constraints over $m$ independent, regular items is at most $X+2(m-1)$.

\end{theorem}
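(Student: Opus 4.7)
The plan is to reduce the random-constraints setting to the (fixed) asymmetric downward-closed setting handled by Lemma~\ref{lem:vcg_rev_asym} in Appendix~\ref{sub:asymmetric}, leveraging the independence of the constraint draws from the item-value draws. The goal is to establish the chain
\begin{equation*}
\Rev^{\text{rand-DC}}_n \;\leq\; \Rev^{\Add}_n \;\leq\; \VCG^{\Add}_{n+X} \;\leq\; \VCG^{\text{rand-DC}}_{n+X+2(m-1)},
\end{equation*}
whose outer inequality is exactly the theorem.

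For the leftmost inequality, I would take any BIC-BIR mechanism $M^*$ for the random-constraint setting (WLOG $M^*$ allocates only subsets that lie in each bidder's realized $\mathcal{I}_i$) and simulate it on additive bidders by having the mechanism \emph{itself} draw $\mathcal{I}_i\sim F_0$ independently of all reports, then run $M^*$ on the augmented type $(\mathcal{I}_i,\hat v_i)$. Since $M^*$ allocates only feasible sets, the additive and constrained valuations agree on the allocation, so this simulation is BIC-BIR for additive bidders and preserves $M^*$'s expected revenue; maximizing over $M^*$ yields $\Rev^{\Add}_n\ge\Rev^{\text{rand-DC}}_n$. The middle inequality is just the definition of the competition complexity $X$ of additive bidders.

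For the rightmost inequality, condition on the realized constraints $\mathcal{I}_1,\ldots,\mathcal{I}_{n+X+2(m-1)}\sim F_0^{\,n+X+2(m-1)}$. Because constraints are drawn independently of values, the item-values remain i.i.d.\ from $F$, and the conditioned environment is precisely the fixed asymmetric downward-closed setting. Applying Lemma~\ref{lem:vcg_rev_asym} (with its role of ``$n$'' instantiated to $n+X$, and its extra $2(m-1)$ asymmetric bidders being the remaining bidders) yields, pointwise in the realization, $\VCG^{\Asym}_{n+X+2(m-1)}\ge\VCG^{\Add}_{n+X}$. Since the right-hand side does not depend on any $\mathcal{I}_i$, taking expectation over the constraints preserves the inequality and produces $\VCG^{\text{rand-DC}}_{n+X+2(m-1)}\ge\VCG^{\Add}_{n+X}$.

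The main obstacle lies entirely inside Lemma~\ref{lem:vcg_rev_asym}: the symmetric proof of Lemma~\ref{lem:downward-extra} relies on the charging/reallocation argument of Claims~\ref{clm:single_allocated} and~\ref{clm:unallocated}, which crucially uses that all bidders share a common downward-closed $\mathcal{I}$---in particular, that a reshuffled item can be handed to any unallocated bidder. Once constraints may differ, any reshuffled item must land on a bidder whose own constraint permits the resulting set, and this loss of flexibility is precisely why an additional $m-1$ bidders, on top of the symmetric $m-1$, are needed. Given Lemma~\ref{lem:vcg_rev_asym}, the argument above is a clean conditional-expectation bookkeeping reduction that costs nothing beyond the asymmetric bound itself.
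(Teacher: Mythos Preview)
Your approach is correct and matches the paper's: condition on the drawn constraints (using independence from values), apply the asymmetric comparison lemma pointwise, then average over the constraint draws. One labeling slip: the lemma you invoke for the conclusion $\VCG^{\Asym}_{n+X+2(m-1)}\ge\VCG^{\Add}_{n+X}$ is Lemma~\ref{lem:assym-extra}, not Lemma~\ref{lem:vcg_rev_asym}; the latter is only the pointwise revenue lower bound $\sum_j r_j(\bar{A})$ that serves as a building block inside the proof of Lemma~\ref{lem:assym-extra}. Your first inequality ($\Rev^{\text{rand-DC}}_n \le \Rev^{\Add}_n$ via the mechanism internally sampling constraints) is spelled out more explicitly than in the paper, which leaves that step implicit from the analogous argument in the proof of Theorem~\ref{thm:downward-main}.
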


Finally, we extend our results to the competition complexity with respect to a poly-time maximal-in-range mechanism. Specifically, we consider the maximal-in-range mechanism that only considers allocations that are \emph{matchings} (i.e. that allocate each bidder at most one item), and denote this mechanism by $\VCGUD$. Note that $\VCGUD$ can be implemented in poly-time. A proof appears in Appendix~\ref{app:mir_vcg}.

\begin{theorem}[Competition complexity with respect to poly-time MIR mechanism]\label{thm:vcgud}
	Let the competition complexity of additive regular bidders be $X=X(n,m)$. Then the competition complexity with respect to $\VCGUD$ of $n$ bidders with additive valuations subject to randomly drawn downwards-closed constraints over $m$ independent regular items is at most $X+m-1$.	
\end{theorem}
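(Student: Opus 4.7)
The plan is to establish the same three-link chain used in the proof of Theorem~\ref{thm:downward-main}, namely
\[
\VCGUD^{\text{rand-DC}}_{n+X+m-1} \;\geq\; \VCG^{\Add}_{n+X} \;\geq\; \Rev^{\Add}_n \;\geq\; \Rev^{\text{rand-DC}}_n,
\]
where the middle inequality is the competition-complexity hypothesis for additive bidders, and the rightmost inequality is the standard ``additive dominates constrained'' comparison: for every realization of $\mathcal{I}_i$, the constrained valuation is pointwise at most the additive one, so the optimal BIR-BIC revenue can only weakly decrease when the downward-closed constraint is imposed (the same three-bullet argument used to justify $\Rev^{\Add}_n \geq \Rev^{\DC}_n$ in Theorem~\ref{thm:downward-main} goes through verbatim once one notes that the bound holds pointwise in the drawn $\mathcal{I}_i$). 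The heart of the proof is therefore the analogue of Lemma~\ref{lem:downward-extra} in this setting:
\[
\VCGUD^{\text{rand-DC}}_{n+m-1} \;\geq\; \VCG^{\Add}_n.
\]

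To prove this key lemma, I would recall that $\VCG^{\Add}_n$ decomposes into $m$ independent second-price auctions, contributing $\E[u_{(2:n)j}]$ for each item $j$. On the $\VCGUD$ side, the crucial structural feature is that only matching allocations are considered, so at most $m$ of the $n+m-1$ bidders can be matched, leaving at least $n-1$ bidders unmatched. The analogue of Lemma~\ref{lem:vcg_rev} applied to the uniform matroid of matchings then implies that for each allocated item $j$, the externality payment of $j$'s winner is at least $\max\{v_{i'j}:i'\text{ unmatched},\,\{j\}\in\mathcal{I}_{i'}\}$, because rerouting $j$ to any compatible unmatched bidder is always feasible under the matching constraint regardless of $\mathcal{I}_{i'}$. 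To lower bound this quantity by $\E[u_{(2:n)j}]$ per item, I would fix $j$, reveal the constraint profile $(\mathcal{I}_1,\dots,\mathcal{I}_{n+m-1})$ and the values $v_{i,-j}$, and compute the optimal matching of the remaining $m-1$ items. Since this matching covers at most $m-1$ bidders, at least $n$ bidders remain whose values for $j$ are still i.i.d.~$F_j$-draws by the principle of deferred decisions. A combinatorial argument in the spirit of Claims~\ref{clm:single_allocated} and~\ref{clm:unallocated} -- but markedly simpler because the mechanism-side feasibility is now a uniform matroid (a matching) rather than an arbitrary downward-closed system -- should identify $n-1$ of these bidders who are guaranteed to remain unmatched irrespective of how $j$ is eventually allocated, plus one ``contingent'' candidate for $j$. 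Combined with the payment lower bound above, the expected $\VCGUD$ contribution from item $j$ is at least $\E[u_{(2:n)j}]$; summing over $j$ closes the lemma and hence the theorem.

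The main obstacle I anticipate is the interaction between the \emph{mechanism-side} matching feasibility and the \emph{bidder-side} randomly drawn $\mathcal{I}_i$: among the $n$ remaining bidders, some may have $\{j\}\notin\mathcal{I}_i$ and thus contribute $0$ to the payment lower bound, yet still occupy slots in the matching elsewhere and shrink the effective pool of witnesses. My plan is to condition on $(\mathcal{I}_1,\dots,\mathcal{I}_{n+m-1})$ before coupling the $v_{\cdot j}$'s; because the $v_{ij}$'s are drawn independently of the $\mathcal{I}_i$'s, on the $\VCG^{\Add}$ benchmark side we may restrict attention to the subset of ``compatible'' bidders (those with $\{j\}\in\mathcal{I}_i$) without changing any i.i.d.~marginals, while on the $\VCGUD$ side incompatible bidders act as additional slack that actually \emph{frees up} matching capacity for compatible ones rather than hurting us. Cashing this intuition in should produce a charging chain in direct analogy with the proof of Claim~\ref{clm:unallocated} and yield the full theorem, where the tight $+m-1$ term is exactly the slack needed to absorb the matching-capacity cost of the ``other'' $m-1$ items.
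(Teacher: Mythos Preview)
Your proposal is correct and follows exactly the paper's route: the same three-link chain reducing to the key lemma $\VCGUD_{n+m-1}\geq\VCG^{\Add}_n$ (Lemma~\ref{lem:vcgmir-extra} in the paper), proved by the same deferred-decisions argument you sketch, which is indeed simpler here because the mechanism-side constraint is just a matching. Your anticipated obstacle is moot: the paper carries the standing assumption (made explicit at the start of the asymmetric-constraints appendix) that $\{j\}\in\mathcal{I}_i$ for all $i,j$, so every unmatched bidder is automatically a valid payment witness for item~$j$ and no workaround is needed.
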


\section{Discussion and Future Work}
\label{sec:open}
We present the first full BK results in multi-dimensional settings, and show that the competition complexity of $n$ buyers with additive valuations subject to downward-closed feasibility constraints, over independent, regular items is at most $n+3m-3$, and at least $m-1$. For the special case of additive (with no constraints), the competition complexity is at most $n+2m-2$ and at least $\Omega(\log m)$. For those who are approximation-minded in the traditional sense, an easy corollary is that for fixed $m$, as $n \rightarrow \infty$, VCG itself guarantees a 2-approximation to the optimal revenue.\footnote{This requires two easy steps, and a formal proof is in Appendix~\ref{app:large_market}.}

An obvious open question is to close the gaps in the above bounds, at least in the special case of additive or unit-demand. Even more concretely: is the right dependence on $m$ for additive bidders linear or logarithmic (or in between)? Is any dependence on $n$ necessary? The latter is especially enticing: if the answer is no, then VCG itself approaches the optimal revenue as $n \rightarrow \infty$.

Another direction is to consider other prior-independent mechanisms in order to get better guarantees. For instance, maybe the $\Omega(\log m)$ lower bound can be circumvented by considering a different prior-independent mechanism, or perhaps the better of two prior independent mechanisms, a la~\cite{BILW14}.\footnote{Of course, randomizing over the two would still lose a constant factor. The idea would be that you still need to learn something about your population in order to apply such a result, but just which of the two mechanisms is better.} Overall, while we believe our results to already be quite substantial, we strongly feel there is a wealth of open problems in the direction of competition complexity.

\bibliographystyle{acmsmall}

\appendix

\section{Duality Framework} \label{app:duality}
\paragraph{Reduced Forms.}  Reduced forms are defined for probabilities over type spaces $T=\times_{i\in [n]} T_i$, where $T_i$ is the type  space of agent $i$. The reduced form of an auction stores for
all bidders $i$, items $j$, and types $v_i$, what is the probability
that agent $i$ will receive item $j$ when reporting $v_i$ to the mechanism
(over the randomness in the mechanism and randomness
in other agents’ reported types, assuming they come from
$F_{-i}$) as $\pi_{ij}(v_i)$, as well as $p_i(v_i)$, which is the expected price bidder $i$
pays when reporting type $v_i$ (over the same randomness). 
It is easy to see that if a buyer that is additive subject to constraints,
his expected value for reporting type $v'_i$ to the mechanism is just
$v_i \cdot \pi_i(v'_i)$, so long as the mechanism only ever awards each type of each bidder a set of items that fit their constraints (i.e. whenever a type of each bidder is subject to constraints $\mathcal{I}$, the mechanism awards them a subset of items in $\mathcal{I}$). There always exists an optimal mechanism with this property, and there is furthermore a trivial transformation to turn any BIC-BIR mechanism without this property into one with this property that achieves exactly the same revenue (see, e.g.~\cite{DaskalakisW12}). We say that a reduced form is feasible if there exists
some feasible mechanism that matches the probabilities promised by
the reduced form.\footnote{For a more comprehensive description of reduced forms and their applications, see \cite{CDW16,CaiDW12,CaiDW12b,CaiDW13,CaiDW13b}.}

\begin{definition}[Reworded from~\cite{CDW16}, Definitions~2 and~3]
	\label{def:duality}
	A mapping $\lambda_i: T_i \times T_i \rightarrow \reals^+$ is \emph{flow-conserving} if for all $\val \in T_i$: $\sum_{\val '\in T_i} \lambda_i(v, v') \leq f_i(v)+ \sum_{\val ' \in T_i} \lambda_i(v', v)$.\footnote{This is equivalent to stating that there exists a $\lambda_i(v, \bot) \geq 0$ such that $\lambda_i(v, \bot)+\sum_{\val '\in T_i} \lambda_i(v, v') = f_i(v)+ \sum_{\val ' \in T_i} \lambda_i(v', v)$, which might look more similar to the wording of Definition~2 in~\cite{CDW16}.} 
	Given $\lambda_i$, the corresponding \emph{virtual value function} $\Phi_i$, is a transformation from vectors in $T_i$ to valuation vectors in $T^\times_i$ (the closure of $T_i$ under linear combinations) and satisfies:
	$$\Phi_i(\val) = \val - \frac{1}{f_i(\val)}\sum_{\val ' \in T_i} \lambda_i(\val ', \val )(\val ' - \val  ).$$
\end{definition}
Equipped with these definitions we can state the main theorem of the framework from \cite{CDW16}.
\begin{theorem}[Revenue $\leq$ Virtual Welfare \cite{CDW16}]\label{thm:revenue less than virtual welfare}
	Let $\lambda_i$ be flow conserving for all $i$, and $M = (\pi,p)$ any BIC and BIR mechanism. Then the revenue of $M$ is $\leq$  the virtual welfare w.r.t. the virtual value function $\Phi(\cdot)$ corresponding to $\lambda$. That is:
	$$\sum_{i=1}^{n} \sum_{v_i \in T_i} f_{i}(v_{i})\cdot p_i(v_i)\leq \sum_{i=1}^{n} \sum_{v_{i}\in T_{i}} f_{i}(v_{i})\cdot \pi_{i}(v_{i})\cdot\Phi_{i}(v_{i}).$$
\end{theorem}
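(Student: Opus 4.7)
The plan is to prove the inequality bidder-by-bidder, multiplying the BIC and BIR constraints by the weights $\lambda_i(\cdot,\cdot)$ and summing, then invoking flow conservation to recognize the resulting expression as a ``virtual welfare'' in the CDW sense. Throughout I will fix a bidder $i$ and write $T = T_i$, $f = f_i$, $\pi = \pi_i$, $p = p_i$, $\lambda = \lambda_i$, $\Phi = \Phi_i$ to reduce clutter; the final statement will follow by summing over $i \in [n]$.

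\textbf{Step 1 (Write out the constraints and combine them with flow weights).} The BIC constraints give, for every pair $v, v' \in T$, the inequality $v\cdot\pi(v) - p(v) \geq v\cdot\pi(v') - p(v')$. The BIR constraints give $v\cdot\pi(v) - p(v) \geq 0$ for every $v \in T$. As hinted in the footnote of Definition~\ref{def:duality}, for each $v$ I will introduce the slack $\lambda(v,\bot) := f(v) + \sum_{v'} \lambda(v',v) - \sum_{v'} \lambda(v,v') \geq 0$ guaranteed by flow conservation. Multiplying each BIC inequality by $\lambda(v,v') \geq 0$ and each BIR inequality by $\lambda(v,\bot)\geq 0$ and summing everything preserves the direction of inequality, yielding
\begin{equation*}
\sum_{v,v'} \lambda(v,v')\bigl[v\cdot\pi(v) - p(v) - v\cdot\pi(v') + p(v')\bigr] \ + \ \sum_v \lambda(v,\bot)\bigl[v\cdot\pi(v) - p(v)\bigr] \ \geq \ 0.
\end{equation*}

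\textbf{Step 2 (Re-index and collect terms in $p(v)$ and $\pi(v)$).} In the double sum I swap the dummy labels $v \leftrightarrow v'$ in the two terms indexed by $v'$ inside the bracket, so the whole expression becomes a single sum over $v$ of coefficients times $\pi(v)$ and $p(v)$. The coefficient of $-p(v)$ works out to $\sum_{v'}\lambda(v,v') - \sum_{v'}\lambda(v',v) + \lambda(v,\bot)$, which by the definition of $\lambda(v,\bot)$ equals precisely $f(v)$. So the $p$-terms collapse to $-\sum_v f(v)\, p(v)$.

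\textbf{Step 3 (Recognize the $\pi$-coefficient as $f(v)\Phi(v)$).} The coefficient of $\pi(v)$ after the re-indexing is $v\bigl(\sum_{v'}\lambda(v,v') + \lambda(v,\bot)\bigr) - \sum_{v'}\lambda(v',v)\, v'$. Using the flow-conservation identity $\sum_{v'}\lambda(v,v') + \lambda(v,\bot) = f(v) + \sum_{v'}\lambda(v',v)$ (which is just the definition of $\lambda(v,\bot)$), this coefficient equals
\begin{equation*}
v\, f(v) + \sum_{v'}\lambda(v',v)(v - v') \ = \ f(v)\cdot\Bigl(v - \tfrac{1}{f(v)}\sum_{v'}\lambda(v',v)(v' - v)\Bigr) \ = \ f(v)\,\Phi(v),
\end{equation*}
by the very definition of the virtual value $\Phi$ in Definition~\ref{def:duality}.

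\textbf{Step 4 (Assemble and sum over bidders).} Combining Steps~2 and~3, the combined BIC/BIR inequality reads $\sum_v f(v)\pi(v)\Phi(v) - \sum_v f(v)\, p(v) \geq 0$, i.e., $\sum_{v\in T_i} f_i(v) p_i(v) \leq \sum_{v\in T_i} f_i(v)\pi_i(v)\Phi_i(v)$. Summing this inequality over all bidders $i$ gives exactly the claimed bound. The main bookkeeping hurdle is the index swap in Step~2 and verifying that the introduction of $\lambda(v,\bot)$ (which is where IR is used) is what makes the $p$-coefficient collapse cleanly to $f(v)$ rather than to a one-sided flow imbalance; once that algebra is done the virtual-value identity in Step~3 is immediate from the definition.
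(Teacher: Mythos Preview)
Your proof is correct and is precisely the weak-duality argument from~\cite{CDW16}: take the nonnegative combination of BIC constraints (weighted by $\lambda_i(v,v')$) and BIR constraints (weighted by the slack $\lambda_i(v,\bot)$), then use flow conservation so that the payment coefficient collapses to $f_i(v)$ and the allocation coefficient becomes $f_i(v)\Phi_i(v)$. Note that the present paper does not actually supply its own proof of this theorem; it is quoted directly from~\cite{CDW16} as a framework result, so there is no alternative approach to compare against.
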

Therefore, the task of finding a good bound reduces to finding good flow conserving duals.
In order to find flow conserving duals we begin with the following definition:
\begin{definition}[Upward-closed region]
	A {\em region} (subset) of types $R \subseteq \bigcup_i T_i$ is upward closed with respect to item $j$ if for every $v_i \in R \cap T_i$, all types $v'_i=(v'_{ij}, v_{i-j})\in T_i$ such that $v'_{ij}\geq v_{ij}$, are also in $R$. 
	We use $R_j$ to denote an upward-closed region with respect to item $j$.
\end{definition}
A method for designing flow conserving duals is the following: 
first, partition the type space $T$ into upward-closed regions. 
Then, define the duals as follows: for a type $v_i \in R_j \cap T_i$, i.e., in a region that is upward closed w.r.t. $j$.
Let $v'_i=(v'_{ij},v_{i-j})$ be the type so that $v'_{ij}$ is the largest value that is smaller than $v_{ij}$ (if exists). 
If $v'_i$ is also in $R_j$, then set $\lambda_i(v_i,v'_i) = f(v_i)+ \sum_{\hat{\val}_i\in T_i} \lambda_i(\hat{\val}_i, v_i)$, and otherwise set  
$\lambda(v_i,v'_i) = 0$.
Set all other duals to $0$.



%


Let $\Phi_i(\cdot)$ be the resulting virtual transformation.
The following lemma is essentially a restatement of Claims 1 and 2 in \cite{CDW16} to upwards closed regions.

\begin{lemma}[\cite{CDW16}]
	\label{lem:virtual} 
	Let $v_i\in T_i$ be some valuation for which $v_i\in R_j$ for some $j\in \{1,\ldots,m\}$. Then $\Phi_{ik}(v_i)=v_{ik}$ if $k\neq j$ and $\Phi_{ij}(v_i)=\varphi_{ij}(v_{ij})$ otherwise.
\end{lemma}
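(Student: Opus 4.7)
The plan is to unpack the definition of $\Phi_i$,
\[
\Phi_i(v_i) = v_i - \frac{1}{f_i(v_i)}\sum_{v'_i \in T_i} \lambda_i(v'_i, v_i)(v'_i - v_i),
\]
for the specific $\lambda_i$ constructed just above the lemma statement, and identify which terms are nonzero. By construction, $\lambda_i(v'_i, v_i)$ is nonzero only when $v'_i$ equals the type $v_i^+ := (v_{ij}^+, v_{i,-j})$ obtained by increasing the $j$-th coordinate of $v_i$ by one step (provided $v_i^+ \in T_i$). Since $R_j$ is upward-closed with respect to coordinate $j$ and $v_i \in R_j$, the type $v_i^+$ is automatically in $R_j$, so the flow $\lambda_i(v_i^+, v_i)$ is well-defined and in general nonzero.

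The first part of the lemma follows instantly: since $v_i^+ - v_i$ vanishes in every coordinate $k \neq j$, we immediately get $\Phi_{ik}(v_i) = v_{ik}$ for $k \neq j$. For the $j$-th coordinate, I would compute $\lambda_i(v_i^+, v_i)$ by telescoping upward in the column $\{(v'_j, v_{i,-j}) : v'_j \geq v_{ij}\}$. Applying the defining identity $\lambda_i(v_i^+, v_i) = f_i(v_i^+) + \sum_{\hat{v}_i} \lambda_i(\hat{v}_i, v_i^+)$ and noting that the only nonzero incoming flow to $v_i^+$ comes from $v_i^{++}$ (again by upward closure, all the way up to the top of the support), the recursion unwinds to
\[
\lambda_i(v_i^+, v_i) = \sum_{v'_j > v_{ij}} f_i(v'_j, v_{i,-j}).
\]
By the independence of items across coordinates (from the preliminaries), $f_i$ factors as $f_i(v) = f_{ij}(v_j)\, f_{i,-j}(v_{-j})$, so the right-hand side simplifies to $f_{i,-j}(v_{i,-j}) \cdot (1 - F_{ij}(v_{ij}))$.

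Substituting this back into the formula for $\Phi_{ij}(v_i)$ and cancelling the common factor $f_{i,-j}(v_{i,-j})$ yields
\[
\Phi_{ij}(v_i) = v_{ij} - \frac{1 - F_{ij}(v_{ij})}{f_{ij}(v_{ij})}\,(v_{ij}^+ - v_{ij}),
\]
which is precisely the discrete Myerson virtual value $\varphi_{ij}(v_{ij})$. The only subtle obstacle is handling boundary cases carefully: the topmost value of the support of $F_{ij}$ (where $v_i^+$ does not exist and no flow enters $v_i$, correctly yielding $\Phi_{ij}(v_i) = v_{ij}$, matching the standard Myerson value at the maximum), and types for which $v_i^- \notin R_j$ (which only affects the outgoing flow and hence flow conservation, not the computation of $\Phi_i(v_i)$). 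The telescoping works cleanly precisely because upward closure of $R_j$ in coordinate $j$ guarantees that the entire column above $v_i$ remains inside $R_j$, so no intermediate flow is pruned.
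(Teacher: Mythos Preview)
Your proof is correct. The paper does not actually supply its own proof of this lemma; it simply cites Claims~1 and~2 of \cite{CDW16} and remarks that the statement is a restatement adapted to upward-closed regions. Your computation is the standard one underlying those claims: identify the unique incoming edge $v_i^+ \to v_i$ (using the partition into regions to rule out flow from any $R_{j'}$ with $j' \neq j$, and upward closure of $R_j$ to guarantee $v_i^+ \in R_j$), telescope along the column $\{(v'_j, v_{i,-j}) : v'_j \geq v_{ij}\}$ to get $\lambda_i(v_i^+, v_i) = f_{i,-j}(v_{i,-j})\bigl(1 - F_{ij}(v_{ij})\bigr)$, and plug in. The one place your write-up could be slightly tightened is the sentence ``By construction, $\lambda_i(v'_i, v_i)$ is nonzero only when $v'_i = v_i^+$'': it would be worth making explicit that this uses the fact that the regions \emph{partition} $T_i$, so that $v_i \in R_j$ precludes $v_i \in R_{j'}$ for $j' \neq j$ and hence rules out any incoming edge that would decrement a different coordinate. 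Otherwise there is nothing to add.
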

Using the region-based flow, and assuming each item $j$ has exactly one upward-closed region $R_j$ associated with it (as is the case throughout the entire paper), we get the following upper bound on the revenue of any BIR-BIC mechanism.
\begin{theorem}
	For any set of upward-closed regions $R_1,\ldots, R_m$, the optimal revenue a seller can obtain by a BIC and BIR mechanism is
	\begin{eqnarray}
	\Rev & \leq & \max_{\pi}\sum_{i=1}^n \sum_{v_i\in T_i}\sum_{j=1}^m f_i(v_i) \cdot \pi_{ij}(v_i) \bigg(\varphi_{ij}(v_{ij})\cdot\mathbb{I}_{v_i\in R_j} \ + \ v_{ij}\cdot\mathbb{I}_{v_i\notin R_j}\bigg) \nonumber\\
	& = & \max_{\pi}\sum_{i=1}^n \sum_{v_i\in T_i}\sum_{j=1}^m f_i(v_i) \cdot \pi_{ij}(v_i) \bigg(\varphi_{ij}(v_{ij})^+\cdot\mathbb{I}_{v_i\in R_j}\ + \ v_{ij}\cdot\mathbb{I}_{v_i\notin R_j}\bigg).\label{eq:rev_upper_bound}
	\end{eqnarray}
\end{theorem}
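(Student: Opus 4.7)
The plan is to instantiate the CDW duality framework with the region-based flow prescribed in the paragraph preceding Lemma~\ref{lem:virtual}. Concretely, Theorem~\ref{thm:revenue less than virtual welfare} already reduces the task to exhibiting a single flow-conserving dual $\lambda=(\lambda_i)_i$ whose induced virtual transformation $\Phi$ turns the right-hand side of that theorem into the expression we want; the remaining content is therefore just (a) constructing $\lambda$, (b) verifying flow-conservation, (c) computing $\Phi$, and (d) upgrading $\varphi_{ij}$ to $\varphi_{ij}^+$.

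For (a)--(b), I would use the exact recipe already written down in the text. Since $R_1,\ldots,R_m$ partition each $T_i$ and $R_j$ is upward-closed along coordinate $j$, define $\lambda_i(v_i,v'_i):=f_i(v_i)+\sum_{\hat v_i}\lambda_i(\hat v_i,v_i)$ whenever $v_i\in R_j$, $v'_i=(v'_{ij},v_{i,-j})$ is the ``next-lower'' neighbor along coordinate $j$, and $v'_i\in R_j$; otherwise set $\lambda_i(v_i,v'_i)=0$. Upward-closedness of $R_j$ along coordinate $j$ guarantees that the flow can be unwound from top to bottom along each vertical fiber inside $R_j$ and terminates at the lowest $v_{ij}$-type of that fiber, which acts as a sink. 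This makes the flow-conservation condition of Definition~\ref{def:duality} hold with equality by construction.

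For (c), Lemma~\ref{lem:virtual} immediately yields $\Phi_{ij}(v_i)=\varphi_{ij}(v_{ij})$ and $\Phi_{ik}(v_i)=v_{ik}$ for $k\neq j$, whenever $v_i\in R_j$. Plugging into Theorem~\ref{thm:revenue less than virtual welfare} and using that the $R_j$'s partition $T_i$, the per-bidder summand $\pi_i(v_i)\cdot\Phi_i(v_i)$ breaks up cleanly as
\[
\sum_{j=1}^m \pi_{ij}(v_i)\cdot\bigl[\varphi_{ij}(v_{ij})\,\mathbb{I}_{v_i\in R_j}+v_{ij}\,\mathbb{I}_{v_i\notin R_j}\bigr].
\]
Summing over $i$ and $v_i$ and weakening by taking the max over all feasible interim allocation rules $\pi$ gives the first stated inequality.

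For (d), I would argue that replacing $\varphi_{ij}(v_{ij})$ by $\varphi_{ij}(v_{ij})^+$ does not change the maximum. The key observation is that the set of feasible reduced forms is downward-closed: zeroing out a single coordinate $\pi_{ij}(v_i)$ corresponds to ``throwing away'' some allocation and preserves feasibility. Hence at the maximizer one may assume $\pi_{ij}(v_i)=0$ whenever $v_i\in R_j$ and $\varphi_{ij}(v_{ij})<0$, making the two objectives coincide. The only point that I expect to require any care is being precise about over which object the $\max$ is taken---any feasible interim allocation rule $\pi$ suffices, not just those arising from BIC-BIR mechanisms, because Theorem~\ref{thm:revenue less than virtual welfare} bounds $\Rev$ pointwise in $\pi$ for every BIC-BIR $M$. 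Everything else is essentially bookkeeping once the region-based dual is in place, which is why I view the construction and verification of $\lambda$ in step (b) as the main (but mild) technical step.
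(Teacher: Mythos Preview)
Your proposal is correct and follows essentially the same route as the paper: the inequality is obtained by combining Theorem~\ref{thm:revenue less than virtual welfare} with Lemma~\ref{lem:virtual} applied to the region-based flow, and the equality is obtained by observing that the maximizing $\pi$ can set $\pi_{ij}(v_i)=0$ whenever $v_i\in R_j$ and $\varphi_{ij}(v_{ij})<0$. The paper's proof is terser---it treats the flow construction and its conservation as already handled in the text preceding Lemma~\ref{lem:virtual} rather than as part of this proof---but your more explicit unpacking of steps (a)--(b) and the downward-closedness justification in (d) are exactly the content being invoked.
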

\begin{proof}
The inequality is a  direct corollary of Theorem \ref{thm:revenue less than virtual welfare} and Lemma \ref{lem:virtual}, and the equality follows since the max is achieved whenever $\pi_{ij}(v_i)=0$ for each $v_i\in R_j$ such that $\varphi_{ij}(v_{ij}) < 0$.
\end{proof}

\subsection{Single Additive Bidder Case}
Since there is only a single additive bidder, with no feasibility constraints, the bound in $(\ref{eq:rev_upper_bound})$ simplifies to:
\begin{eqnarray}
\Rev&\leq& \sum_{v\in T}\sum_j f(v)  \left(\varphi_j(v_{j})^+\cdot\mathbb{I}_{v\in R_j}+v_{j}\cdot \mathbb{I}_{v\notin R_j}\right)\nonumber\\
& = & \mathbb{E}_{v\sim F} \Big[ \sum_{j=1}^m\left(\ \varphi_j(v_j)^+ \cdot \mathbb{I}_{v\in R_j} + v_j \cdot \mathbb{I}_{v\notin R_j}\ \right) \Big]\nonumber\\
&= &\sum_{j=1}^m \sum_{v_j} f_j(v_j) \left( \varphi_j(v_j)^+\cdot \Pr_{v_{-j}}[v\in R_j ] + v_j\cdot \Pr_{v_{-j}}[v\notin R_j ] \right).\label{eq:bench_single_bidder}
\end{eqnarray}

In \cite{CDW16},  the regions are defined as follows: $R_j$ contains all valuations $v=(v_1,\ldots,v_m)$ for which $v_j > v_k$ for every $k \neq j$ (breaking ties lexicographically). Therefore, they get the following bound:
\begin{align} 
\Rev\leq \sum_j \sum_{v_j}f_j(v_j) \left( \varphi_j(v_j)^+\cdot \Pr_{v_{-j}}[\forall k  : v_j \geq v_k ] + v_j\cdot \Pr_{v_{-j}}[\exists k: v_j< v_k ] \right).\label{eq:bench_single_cdw}
\end{align} 

Instead, we define regions as follows: $R_j$ contains all valuations $v=(v_1,\ldots,v_m)$ for which $F_j(v_j)> F_k(v_k)$ for every $k \neq j$ (breaking ties lexicographically), and therefore, our single bidder bound has the form:
\begin{align} 
\Rev\leq \sum_j \sum_{v_j}f_j(v_j) \left( \varphi_j(v_j)^+\cdot \Pr_{v_{-j}}[\forall k  : F_j(v_j) \geq F_k(v_k) ] + v_j\cdot \Pr_{v_{-j}}[\exists k: F_j(v_j)< F_k(v_k) ] \right).\label{eq:bench_single_ours}
\end{align} 

\subsection{Multiple Additive Bidder Case}
We now present the proof of Theorem~\ref{thm:new_bound}, which is the multiple additive bidder benchmark we use to prove our main result .

\textbf{Proof of Theorem~\ref{thm:new_bound}}
Notice that in Equation~(\ref{eq:rev_upper_bound}), since there are no feasibility constraints, $\pi_{ij}(\cdot)$ can allocate each item to the non-negative,  maximizing, virtual value. Therefore:
\begin{eqnarray*}
	\Rev &\leq & \max_{\pi} \mathbb{E}_{\vals \sim F^n}\left[\sum_{i=1}^n \sum_{j=1}^m \pi_{ij}(v_i) \bigg(\varphi_{ij}(v_{ij})^+\cdot\mathbb{I}_{v_i\in R_j}\ + \ v_{ij}\cdot\mathbb{I}_{v_i\notin R_j}\bigg)\right] \\
	&= & \mathbb{E}_{\vals \sim F^n}\left[ \sum_{j=1}^m \max_{i\in [n]} \bigg(\varphi_{ij}(v_{ij})^+\cdot\mathbb{I}_{v_i\in R_j}\ + \ v_{ij}\cdot\mathbb{I}_{v_i\notin R_j}\bigg)\right] \\
	&= & \sum_{j=1}^m \mathbb{E}_{\vals \sim F^n}\left[ \max_{i\in [n]} \bigg(\varphi_{ij}(v_{ij})^+\cdot\mathbb{I}_{v_i\in R_j}\ + \ v_{ij}\cdot\mathbb{I}_{v_i\notin R_j}\bigg)\right].
\end{eqnarray*}


	
\section{Missing Proofs of Section \ref{sec:multiple-additive}}\label{app:mult_add_missing}

\noindent\textbf{Proof of Observation \ref{obs:vcg_opt}:} Since the expected revenue of a mechanism is the expected virtual value of the allocated bidder, the optimal mechanism that always sells an item allocates the item to the bidder with the highest virtual value. Since the bidders are drawn i.i.d. from a regular distribution, the bidder with the highest virtual value is also the bidder with the highest value. Thus, the optimal mechanism that always sells an item allocates the item to the bidder with the highest value; this is exactly the allocation rule of $\VCG$.\qed
 
\subsection{Proof of Claim \ref{cor:max_fresh_geq_max_same}} \label{app:prob_claim}

Before proving the claim, we introduce the following definitions and lemmas.

\begin{definition}
	Random variable $A$ first-order stochastically dominates (FOSD) random variable $B$ if for every $x$, $\Pr\left[A> x\right]\geq \Pr\left[B> x\right]$.
\end{definition}

\begin{definition}
	Consider two random variables $X$ and $Y$. We say that $X,Y$ are \emph{positively correlated} if for every $x$ and $y$, $\Pr[X>x\vert Y>y] \geq \Pr[X>x].$
	\newline(I.e., $X_{\mid Y>y}$ FOSD $X$ for every $y$.)
\end{definition}

\begin{lemma}
	\label{lem:corr}
	If $X,Y$ are positively correlated and $\widetilde{Y}$ is independent of $X$ and FOSD $Y$, then $\mathbb E[\max\{X,Y\}]\le \mathbb E[\max\{X,\widetilde Y\}]$.
\end{lemma}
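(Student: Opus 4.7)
The plan is to prove the stronger pointwise statement that the survival function of $\max\{X,\widetilde Y\}$ dominates that of $\max\{X,Y\}$ at every threshold, and then integrate. Concretely, assuming (as is the case in all applications of this lemma in the paper) that $X, Y, \widetilde Y$ are non-negative, I would use the identity
\[
\mathbb{E}[\max\{X, Z\}] \;=\; \int_0^\infty \Pr[\max\{X, Z\} > t]\, dt \;=\; \int_0^\infty \bigl(1 - \Pr[X \le t,\, Z \le t]\bigr)\, dt,
\]
instantiated for $Z \in \{Y, \widetilde Y\}$. So it suffices to show that for every $t \ge 0$,
\[
\Pr[X \le t,\, Y \le t] \;\ge\; \Pr[X \le t,\, \widetilde Y \le t].
\]

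To get this, I would chain together three facts. First, since $\widetilde Y$ is independent of $X$, the right-hand side factors as $\Pr[X \le t]\,\Pr[\widetilde Y \le t]$. Second, because $\widetilde Y$ FOSD $Y$, we have $\Pr[\widetilde Y \le t] \le \Pr[Y \le t]$, so the right-hand side is at most $\Pr[X \le t]\,\Pr[Y \le t]$. Third, I would translate the positive correlation hypothesis into the equivalent positive quadrant dependence statement $\Pr[X > x,\, Y > y] \ge \Pr[X > x]\,\Pr[Y > y]$ by multiplying $\Pr[X > x \mid Y > y] \ge \Pr[X > x]$ through by $\Pr[Y > y]$; then a one-line inclusion-exclusion on complements converts this into $\Pr[X \le t,\, Y \le t] \ge \Pr[X \le t]\,\Pr[Y \le t]$. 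Chaining the three inequalities gives the desired pointwise domination, and integrating over $t$ yields the lemma.

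The main subtlety, and the only step that requires a moment of care, is the conversion from the paper's definition of positive correlation (an upper-tail statement about conditional probabilities) into the corresponding lower-tail statement $\Pr[X \le t, Y \le t] \ge \Pr[X \le t]\Pr[Y \le t]$; everything else is routine. If one wanted to avoid assuming $X, Y, \widetilde Y \ge 0$, the same argument goes through by splitting $\mathbb{E}[Z] = \int_0^\infty \Pr[Z > t]\,dt - \int_{-\infty}^0 \Pr[Z \le t]\,dt$ and verifying that the pointwise inequality on $\{Z \le t\}$ has the correct sign on both pieces.
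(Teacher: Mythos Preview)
Your proof is correct and follows essentially the same approach as the paper: both integrate the survival function of $\max\{X,\cdot\}$ and reduce to a pointwise inequality using independence, FOSD, and the positive-quadrant-dependence consequence of the positive correlation hypothesis. The only cosmetic differences are that the paper introduces an intermediate $\widehat Y$ (same marginals as $Y$, independent of $X$) and works with upper tails $\Pr[X>t,\,\cdot>t]$, whereas you chain the inequalities directly and work with lower tails $\Pr[X\le t,\,\cdot\le t]$ via the inclusion--exclusion step you flagged; your route is slightly more streamlined.
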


\begin{proof}
	We prove the claim for $\widehat{Y}$ that has \textit{exactly} the same marginals as $Y$  but is independent of $X$. Since $\widetilde Y$ FOSD $\widehat{Y}$ and both are independent of $X$, it is immediate that $E\left[\max\{X,\widetilde{Y}\}\right]\geq E\left[\max\{X,\widehat{Y}\}\right]\geq E\left[\max\{X,Y\}\right]$ as desired.
	First, notice that
	\begin{eqnarray}
	\E\left[\max\{X,Y\}\right]& = & \int_{0}^{\infty}\Pr\left[\max\{X,Y\}>x\right] dx\nonumber\\
	& = & \int_{0}^{\infty}\Pr\left[X>x\right] + \Pr\left[Y>x\right] - \Pr\left[X>x \ \mbox{ and } \  Y > x\right] dx,\label{eq:max_cor}
	\end{eqnarray}
	and similarly
	\begin{eqnarray}
	\E\left[\max\{ X,\widehat Y\}\right] &  = &   \int_{0}^{\infty}\Pr\left[ X>x\right] + \Pr\left[\widehat Y>x\right] - \Pr\left[X>x\ \mbox{ and } \  \widehat Y > x\right] dx\nonumber\\
	& = & \int_{0}^{\infty}\Pr\left[X>x\right] + \Pr\left[Y>x\right] - \Pr\left[X>x\ \mbox{ and } \  \widehat Y > x\right] dx,\label{eq:max_ind}
	\end{eqnarray}
	where the second equality holds since $\widehat Y$ and $Y$ have the same marginals.	
	Subtracting $(\ref{eq:max_ind})$ from $(\ref{eq:max_cor})$ yields
	\begin{eqnarray*}
		\E\left[\max\{X,Y\}\right] - \E\left[\max\{ X,\widehat Y\}\right] & = & \int_{0}^{\infty} \Pr\left[ X>x\ \mbox{ and } \  \widehat Y > x\right] - \Pr\left[X>x\ \mbox{ and } \  Y > x\right] dx\\
		& = & \int_{0}^{\infty} \Pr\left[X>x \vert \widehat{Y }>x\right]\Pr\left[\widehat Y>x\right] - \Pr\left[ X > x \vert Y>x\right]\Pr\left[Y>x\right] dx\\
		& = & \int_{0}^{\infty} \Pr\left[ Y>x\right]\left(\Pr\left[ X > x\right] - \Pr\left[X>x \vert  Y>x\right]\right) dx\\
		& \leq & 0,
	\end{eqnarray*}
	where the third equality follows from the fact that $\widehat{Y}$ has the same marginals as $Y$ and from the independence of ${X}$ and $\widehat{Y}$, and the inequality follows from the fact that $X$ and $Y$ are positively correlated.
\end{proof}

\begin{lemma}\label{lem:pos_corr}
	For any regular distribution $F$, and $\mathbf{a}$ and $\mathbf{b}$ sampled from $F^\ell$ and $F^k$ respectively, $\varphi\left(a_{(1)}\right)_{\vert a_{(1)} > b_{(1)}}$  and ${a_{(2)}}_{\vert a_{(1)} > b_{(1)}}$ are positively correlated.
\end{lemma}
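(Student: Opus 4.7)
The plan is to reduce the statement to the positive correlation of two order statistics conditional on an event involving the maximum, and then verify this via a direct computation. Since $F$ is regular, $\varphi$ is monotone non-decreasing, so for any threshold $x$ the event $\{\varphi(a_{(1)}) > x\}$ is equivalent to $\{a_{(1)} > \varphi^{-1}(x)\}$. Thus it suffices to prove that $a_{(1)}$ and $a_{(2)}$ are positively correlated conditional on the event $\{a_{(1)} > b_{(1)}\}$.

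Next, I would exploit the exchangeability of $a_1,\dots,a_\ell$. Let $\mathcal{E}$ denote the event $\{a_1 > \max(a_2,\dots,a_\ell,b_1,\dots,b_k)\}$, and set $M' := \max(a_2,\dots,a_\ell)$. By symmetry over which index attains the maximum of $\mathbf{a}$, the joint distribution of $(a_{(1)},a_{(2)})$ conditional on $\{a_{(1)} > b_{(1)}\}$ coincides with the joint distribution of $(a_1, M')$ conditional on $\mathcal{E}$. Hence it suffices to prove positive correlation of $a_1$ and $M'$ under $\mathcal{E}$.

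The heart of the argument is then a short calculation. Conditional on $\mathcal{E}$ and on $a_1 = x$, the remaining $a_2,\dots,a_\ell$ are i.i.d.\ draws from $F$ truncated to $[0,x]$, so for any $y$,
\[
\Pr[M' > y \mid a_1 = x,\,\mathcal{E}] \;=\; 1 - \left(\tfrac{F(y)}{F(x)}\right)^{\ell-1}\quad \text{if } y \le x,
\]
and $0$ otherwise. For fixed $y$, this is a non-decreasing function of $x$: it is $0$ for $x \le y$, and for $x > y$ the ratio $F(y)/F(x)$ is non-increasing in $x$. Therefore the conditional law of $M'$ given $a_1 = x$ is stochastically increasing in $x$. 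A standard averaging argument then gives, for all $x,y$,
\[
\Pr[M' > y \mid a_1 > x,\,\mathcal{E}] \;\ge\; \Pr[M' > y \mid \mathcal{E}],
\]
since the left-hand side is an average of $\Pr[M' > y \mid a_1 = z, \mathcal{E}]$ over $z \ge x$ while the right-hand side averages over all $z$. Unwinding the reductions yields the positive correlation claimed in the lemma.

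The only delicate step will be the symmetry reduction in the second paragraph, where one has to check carefully that conditioning on $\{a_{(1)} > b_{(1)}\}$ preserves exchangeability of the $a_i$'s (it does, since $b_{(1)}$ is independent of $\mathbf{a}$, and the event depends on $\mathbf{a}$ only through $\max_i a_i$). Everything else is routine given regularity and the explicit form of the truncated-max distribution.
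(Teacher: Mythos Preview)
Your proof is correct, and it takes a genuinely different route from the paper's.

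The paper argues directly at the level of order statistics: it fixes $x$, sets $v=\varphi^{-1}(x)$ via regularity (same as your first reduction), and then chains two inequalities,
\[
\Pr[a_{(1)}>v \mid C] \;\le\; \Pr[a_{(1)}>v \mid C,\,a_{(1)}>y] \;\le\; \Pr[a_{(1)}>v \mid C,\,a_{(2)}>y],
\]
where $C=\{a_{(1)}>b_{(1)}\}$. The first step is the trivial fact that conditioning a random variable on exceeding a threshold stochastically increases it; the second uses that $\{a_{(2)}>y\}\subseteq\{a_{(1)}>y\}$ together with the intuition that the extra information ``the second-highest is also large'' can only push $a_{(1)}$ further up. That is the entire argument---four lines, no computation.

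Your route instead breaks the exchangeability symmetry: you identify the conditional law of $(a_{(1)},a_{(2)})$ given $C$ with that of $(a_1,\max_{i\ge 2} a_i)$ given $\{a_1>\max(\mathbf{a}_{-1},\mathbf{b})\}$, then explicitly compute the conditional CDF of the second coordinate given the first and read off stochastic monotonicity. This is longer but entirely self-contained and elementary; in particular your monotonicity step is verified by an explicit formula rather than asserted. The paper's second inequality, by contrast, is justified only by the one-line remark that $a_{(2)}>y$ implies $a_{(1)}>y$, which is true but does not by itself imply the conditional inequality (conditioning on a sub-event does not generically increase a probability); your computation effectively supplies the missing structural reason. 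So your approach trades brevity for transparency.
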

\begin{proof}
	Fix some value of $x$, and let $v$ be the (highest) value for which $\varphi(v)=x$, where $\varphi$ is the virtual valuation function of $F$.
	We have
	\begin{eqnarray*}
		\Pr_{\substack{\textbf{a}\sim F^{\ell}\\ \textbf{b}\sim F^{k}}}\left[\varphi\left(a_{(1)}\right) > x  \ \vert \ a_{(1)} > b_{(1)}\right] & = & \Pr_{\substack{\textbf{a}\sim F^{\ell}\\ \textbf{b}\sim F^{k}}}\left[a_{(1)} > v \ \vert \  a_{(1)} > b_{(1)}\right]\\
		& \leq & \Pr_{\substack{\textbf{a}\sim F^{\ell}\\ \textbf{b}\sim F^{k}}}\left[a_{(1)} > v \  \vert \ a_{(1)} > b_{(1)} \ \wedge \ a_{(1)} > y \right] \\
		& \leq & \Pr_{\substack{\textbf{a}\sim F^{\ell}\\ \textbf{b}\sim F^{k}}}\left[a_{(1)} > v \ \vert a_{(1)} \  > b_{(1)} \ \wedge \ a_{(2)} > y \right] \\
		& = & \Pr_{\substack{\textbf{a}\sim F^{\ell}\\ \textbf{b}\sim F^{k}}}\left[\varphi\left(a_{(1)}\right) > x \  \vert \ a_{(1)} > b_{(1)} \ \wedge \ a_{(2)} > y \right],
	\end{eqnarray*}
	where the first and last equalities follow by the regularity of $F$, and the last inequality follows from the fact that whenever $a_{(2)} > y$ then also $a_{(1)}> y$. This immediately implies a positive correlation between $\varphi\left(a_{(1)}\right)_{\vert a_{(1)} > b_{(1)}}$ and ${a_{(2)}}_{\vert a_{(1)} > b_{(1)}}$.
\end{proof}

\begin{lemma}\label{lem:dominance}
	For every distribution $F$, $\mathbf{a}$ and $\mathbf{b}$ sampled from $F^\ell$ and $F^k$ and $\mathbf{c}$ sampled from $F^{\ell+k}$, $c_{(2)}$ FOSD ${a_{(2)}}_{\vert a_{(1)} > b_{(1)}}$.
\end{lemma}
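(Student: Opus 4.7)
The plan is to couple the samples and reduce the statement to an elementary exchangeability argument, so that no calculation is needed.

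First I would couple $\mathbf{c}$ with $(\mathbf{a},\mathbf{b})$ by setting $\mathbf{c}=(a_1,\ldots,a_\ell,b_1,\ldots,b_k)$; this is a valid joint distribution since the right-hand side is i.i.d.\ from $F^{\ell+k}$. Under this coupling the multiset underlying $\mathbf{c}$ contains the multiset underlying $\mathbf{a}$, so deterministically $c_{(2)}\ge a_{(2)}$. In particular, for every threshold $t$, the pointwise inclusion $\{a_{(2)}>t\}\subseteq\{c_{(2)}>t\}$ holds.

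Next I would argue that, under the same coupling, the conditioning event $\{a_{(1)}>b_{(1)}\}$ is \emph{independent} of $c_{(2)}$. This event is precisely the event that the (a.s.\ unique, after the paper's lexicographic tie-breaking) argmax of $\mathbf{c}$ lies in the first $\ell$ coordinates. By exchangeability of the i.i.d.\ coordinates, conditional on the unordered multiset $\{c_1,\ldots,c_{\ell+k}\}$ the identity of the argmax is uniform over $[\ell+k]$; hence the event has conditional probability $\ell/(\ell+k)$ regardless of the multiset. Therefore it is independent of the multiset, and in particular independent of $c_{(2)}$, which is a function of the multiset alone.

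Combining the two facts gives, for every $t$,
$$
\Pr[a_{(2)}>t\mid a_{(1)}>b_{(1)}]\;\le\;\Pr[c_{(2)}>t\mid a_{(1)}>b_{(1)}]\;=\;\Pr[c_{(2)}>t],
$$
which is exactly the definition of $c_{(2)}$ stochastically dominating ${a_{(2)}}_{\mid a_{(1)}>b_{(1)}}$. The only step requiring any care is the independence claim in the second paragraph; once it is in hand, everything else is immediate from the coupling.
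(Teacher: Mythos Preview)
Your proposal is correct and is essentially the same argument as the paper's: the paper also identifies $\mathbf{c}$ with the concatenation $(\mathbf{a},\mathbf{b})$, uses that the location of the argmax among the coordinates is independent of $c_{(2)}$, and then the deterministic inequality $c_{(2)}\ge a_{(2)}$. The only difference is that you supply an explicit exchangeability justification for the independence step, whereas the paper simply asserts it; otherwise the chain of (in)equalities is identical, just written in the opposite order.
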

\begin{proof}
	Fix some value $x$ in the support of $F$. We need to show that $$\Pr_{\textbf{c}\sim F^{\ell+k}}\left[ c_{(2)}> x\right]\geq \Pr_{\substack{\textbf{a}\sim F^{\ell}\\ \textbf{b}\sim F^{k}}}\left[ a_{(2)}> x \ \vert \  a_{(1)}>b_{(1)}\right].$$
	For a given vector $\textbf{c}$ sampled from $F^{\ell+k}$, let $\textbf{c}^{a}$ be the first $\ell$ coordinates of $\mathbf{c}$,  and $\textbf{c}^b$ be the last $k$ coordinates. We notice that the event that $c_{(1)}$ resides in the first $\ell$ coordinates is independent of the event that $c_{(2)}>x$. Therefore,
	\begin{eqnarray*}
		\Pr_{\textbf{c}\sim F^{\ell+k}}\left[ c_{(2)} > x\right] & = & \Pr_{\textbf{c}\sim F^{\ell+k}}\left[ c_{(2)}> x\vert c^{a}_{(1)}>c^{b}_{(1)}\right] \\
		& \geq &\Pr_{\textbf{c}\sim F^{\ell+k}}\left[ c^a_{(2)} > x\vert c^{a}_{(1)}>c^{b}_{(1)}\right]\\
		& = & \Pr_{\substack{\textbf{a}\sim F^{\ell}\\ \textbf{b}\sim F^{k}}}\left[ a_{(2)}> x \ \vert \  a_{(1)}>b_{(1)}\right],
	\end{eqnarray*}
	where the inequality follows since whenever $c_{(2)}^a>x$ then also $c_{(2)}>x$, and the last equality follows by renaming $\textbf{c}^a$ to $\textbf{a}$ and $\textbf{c}^b$ to $\textbf{b}$.
\end{proof}

Combining the aforementioned three lemmas three lemmas yields the claim.

\noindent\textbf{Proof of Claim \ref{cor:max_fresh_geq_max_same}:} Consider the random variables $X=\varphi\left(a_{(1)}\right)_{\vert a_{(1)} > b_{(1)}}$, $Y={a_{(2)}}_{\vert a_{(1)} > b_{(1)}}$, and $\hat{Y}=c_{(2)}$.
By Lemma~\ref{lem:pos_corr} $X$ and $Y$ are positively correlated, and by Lemma~\ref{lem:dominance} $\hat{Y}$ FOSD $Y$.
In addition, $\hat{Y}$ and $X$ are independent.
The Claim then follows from Lemma~\ref{lem:corr}. \qed



\section{Additive Subject to Matroid Feasibility Constraints}\label{sub:matroids}
In this section we establish an improved version of Lemma \ref{lem:downward-extra} for the special case of matroids. We use the following notation: given a product of regular distributions $F$ and a $n'$ bidders subject to \textit{identical} matroid feasibility constraints, let $\VCG_{n'}^{\Mat}$ denote the expected VCG revenue from selling the $m$ items to $n'$ bidders with i.i.d.~values drawn from $F$, whose valuations are additive subject to the matroid constraint.

Let $([m],\Feas)$ be a matroid set system with rank function $r$. A set $S\subset M$ \emph{spans} an element $j\in M\setminus S$ if $r(S\cup\{j\})=r(S)$ (i.e., the rank of $S$ does not increase when $j$ is added to it).
We introduce the following parameter $\rho$ of a matroid $([m],\Feas)$, which we call the matroid's \emph{disjoint spanning number}:
\begin{definition}[Disjoint spanning number]
	For an element $j$ of the matroid, let a \emph{disjoint spanning collection of $j$} be a family $\Feas'\subset \Feas$ of disjoint independent sets such that every $I\in\Feas'$ spans $j$. 
	Let $\rho_j$ denote the maximum size $|\Feas'|$ over all disjoint spanning collections $\Feas'$ of $j$.\footnote{For example, in a graphical matroid, $\rho_j$ would be the maximum number of cycles that share the edge $j$ and are otherwise disjoint.} We define $\rho:=\max_j\{\rho_j\}$.
\end{definition}
  As an example, we state the value of $\rho$ for two kinds of simple matroids:

\begin{example}[Examples of disjoint spanning numbers]
	\label{obs:rho-simple-matroids}
	For $k$-uniform matroids where $k<m$, $\rho=\lceil(m-1)/k\rceil$. For partition matroids, $\rho$ is the size of the largest partition.
\end{example}

We can now state our main lemma for matroids, which shows the extra competition complexity arising from imposing matroid constraints on additive bidders:

\begin{lemma}[Main lemma for matroids]
	\label{lem:matriod-extra}
	For every $n+\rho$ bidders with a matroid feasibility constraint whose maximum disjoint spanning number is $\rho$,
	$\VCG_{n+\rho}^{\Mat} \ge \VCG_{n}^{\Add}$.
\end{lemma}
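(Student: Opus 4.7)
The plan is to follow the structure of the proof of Lemma~\ref{lem:downward-extra}: fix an item $j\in[m]$, lower bound its expected contribution to $\VCG_{n+\rho}^{\Mat}$ by $\E[v_{(2:n),j}]$, and sum over $j$ to match $\VCG_n^{\Add}=\sum_j\E[v_{(2:n),j}]$. Two ingredients from the downward-closed proof need to be revisited: the counting argument that controls $|A'|$, and Lemma~\ref{lem:vcg_rev} which bounds the VCG payment.

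The matroid-specific counting step replaces the bound $|A'|\le m-1$ used in the downward-closed case. Fix the values of all items other than $j$ and compute the welfare-maximizing allocation $\OPT_{-j}$ of those items; let $A_{sp}$ denote the set of bidders whose allocated set $S'_{i'}$ spans $j$. Since $\{S'_{i'}:i'\in A_{sp}\}$ is a family of pairwise-disjoint independent sets each spanning $j$, the definition of $\rho_j$ yields $|A_{sp}|\le\rho_j\le\rho$. Hence at least $(n+\rho)-\rho_j\ge n$ bidders---the ``candidates''---have sets satisfying $S'_{i'}\cup\{j\}\in\Feas$ and can therefore absorb $j$ without disturbing the rest of the allocation. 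By the principle of deferred decisions, the values $v_{i,j}$ of these candidates are fresh i.i.d.~draws from $F_j$.

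I then adapt Claims~\ref{clm:single_allocated}--\ref{clm:unallocated} to the matroid setting to conclude that at most one candidate wins a bundle containing $j$ in $\OPT$, leaving $\ge n-1$ non-winning candidates with fresh i.i.d.~values for $j$. A matroid-exchange version of Lemma~\ref{lem:vcg_rev}---reassigning $j$ to the best eligible non-spanning bidder when the winner is deleted---then lower bounds the VCG payment attributable to $j$ by the highest value of $j$ among these $\ge n-1$ candidates. Since the max of $n-1$ fresh i.i.d.~samples stochastically dominates the second-highest of $n$ such samples, the expected contribution from $j$ is at least $\E[v_{(2:n),j}]$, and summing over $j$ yields $\VCG_{n+\rho}^{\Mat}\ge \VCG_n^{\Add}$.

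The main obstacle I anticipate is the matroid-exchange strengthening of Lemma~\ref{lem:vcg_rev}. In the downward-closed case, each item in the winner's bundle is reassigned singly to a distinct unallocated bidder, whereas here item $j$ may need to go to a currently-allocated non-spanning bidder, and the remaining items in the winner's bundle must simultaneously be accommodated without violating the matroid constraint. Ensuring that the per-item lower bounds aggregate correctly across a winner holding multiple items---so that the total payment is bounded below by a \emph{sum} of item-wise contributions rather than merely a maximum---requires a careful matroid-union (or matroid-matching) argument, which I expect to be the technical crux of the proof.
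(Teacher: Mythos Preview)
Your high-level plan matches the paper's: define $\bar A$ (your candidates) as the bidders whose $\OPT_{-j}$-allocation does not span $j$, use $|\bar A|\ge n$, apply deferred decisions so their values for $j$ are fresh i.i.d.\ samples, and argue that all but one of them retain a non-spanning allocation in $\OPT$. The paper does \emph{not} carry out this last step by adapting Claims~\ref{clm:single_allocated}--\ref{clm:unallocated}. Instead it invokes a gross-substitutes ``single improvement'' chain (Claim~\ref{cla:GS}) together with a matroid closure fact (Claim~\ref{cla:span}): any bidder in $\bar A$ who participates in the reallocation chain and \emph{vacates} an item must have snatched an item already in the closure of his current bundle, and Claim~\ref{cla:span} then guarantees his new bundle still does not span $j$. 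Only the terminal bidder in the chain (who snatches without vacating) can end up spanning $j$. This is a cleaner and more direct route than porting the downward-closed charging argument, and you should use it.

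Your flagged obstacle, however, is real and is exactly where both your outline and the paper's write-up are thin. What you need is a matroid analogue of Lemma~\ref{lem:vcg_rev}: total VCG revenue $\ge \sum_j q_j$, where $q_j$ is the highest value for $j$ among bidders (other than $j$'s winner) whose final allocation does \emph{not span} $j$. Lemma~\ref{lem:vcg_rev} proves this with ``unallocated'' in place of ``non-spanning,'' by handing each item in a winner's bundle to a distinct \emph{empty} bidder. The non-spanning version would require, when a winner $i$ is deleted, reassigning every $j'\in S_i$ to a bidder whose bundle does not span $j'$ while keeping \emph{all} resulting bundles independent simultaneously. As you anticipated, a one-item-at-a-time exchange does not deliver this: the target bidders may already be allocated, the same bidder can be the top non-spanning candidate for two items in $S_i$, and adding one item can make that bidder's bundle span the next. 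Already with the rank-$2$ uniform matroid on three items and three bidders one can produce $\sum_j q_j$ strictly exceeding the VCG revenue, so the naive per-item bound does not aggregate pointwise. To complete the argument you need either a per-item price decomposition of the VCG payment that provably sums to at most the revenue for matroid-constrained bidders, or a direct coupling that avoids the item-wise lower bound altogether. The paper asserts the per-item conclusion (``the VCG payment for $j$ is at least the second-highest value of the bidders in $\bar A$'') but does not spell out which decomposition underlies it; do not treat that step as given.
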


The proof of Theorem \ref{cor:matroid-main} is the same as the proof of Theorem \ref{thm:downward-main} only with Lemma \ref{lem:downward-extra} replaced by Lemma \ref{lem:matriod-extra}. Before proving Lemma \ref{lem:matriod-extra}, we briefly discuss its relation to the work of \cite{RTY15} for the special case of unit-demand bidders.

\subsection{Unit-demand Valuations \cite{RTY15}}
\label{sub:unit-demand}

\cite{RTY15} prove a version of Lemma \ref{lem:matriod-extra} for the case of unit-demand bidders. Such bidders can alternatively be described as additive bidders subject to $1$-uniform matroid constraints.
In particular, \cite{RTY15} show that $\VCG_{n+m}^{\UD} \ge \VCG_{n+1}^{\Add}$, which is precisely what we get if we apply Lemma \ref{lem:matriod-extra} to 1-uniform matroids, using that the disjoint spanning number $\rho$ of 1-uniform matroids is equal to $m-1$ (Example \ref{obs:rho-simple-matroids}). In fact this is the same inequality shown in Lemma \ref{lem:downward-extra} for general downward-closed constraints. We conclude that the case of unit-demand bidders is the ``hardest'' one in terms of competition complexity, not only among matroids but also among all downward-closed set systems.

We generalize the result of \cite{RTY15} in two ways: first, by extending it to all downward-closed constraints (Lemma \ref{lem:downward-extra}); second, by introducing a parameterized version for all matroids using the environment disjoint spanning number as the parameter (Lemma \ref{lem:matriod-extra}). In terms of techniques, our result for matroids uses a stability argument similar to that in \cite{RTY15}. Our result for general downward-closed uses a different charging argument as described in Section~\ref{sub:main-lemma-pf}. Both results use the principle of deferred decision as used in the analysis of \cite{RTY15} in order to argue that the values of unallocated bidders (which determine VCG payments) are sufficiently high in expectation.

\subsection{Proof of Main Lemma (\ref{lem:main_additive}) for Matroids}
\label{sub:matroid-lemma-pf}

The next claim will be used in the proof of Lemma \ref{lem:matriod-extra}. It follows easily from the definitions of matroid rank function and spanning in matroids.

\begin{claim}
	\label{cla:span}
	Consider a matroid $([m],\Feas)$. For every subset of elements $S\subseteq [m]$, and for every two elements $j,j'\in [m]\setminus S$ such that $S$ spans $j$ but does not span $j'$, $S\cup\{j\}$ does not span $j'$. Moreover, for every subset $S'\subseteq S\cup\{j\}$, $S'$ does not span $j'$.
\end{claim}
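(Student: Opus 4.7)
The plan is to prove both assertions via direct manipulation of the rank function $r$ of the matroid, using only monotonicity and submodularity. Throughout, I write ``$T$ spans $j'$'' as $r(T\cup\{j'\})=r(T)$, and ``does not span'' as $r(T\cup\{j'\})=r(T)+1$.

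For the first assertion, the hypothesis gives $r(S\cup\{j\})=r(S)$ (since $S$ spans $j$) and $r(S\cup\{j'\})=r(S)+1$ (since $S$ does not span $j'$). By monotonicity of the rank function, $r(S\cup\{j,j'\})\ge r(S\cup\{j'\})=r(S)+1=r(S\cup\{j\})+1$, so $r(S\cup\{j,j'\})>r(S\cup\{j\})$, which is exactly the statement that $S\cup\{j\}$ does not span $j'$.

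For the ``moreover'' part, I would establish the following general fact about matroids and then invoke the first assertion: if $A\subseteq B$ and $A$ spans $j'$, then $B$ also spans $j'$. This is the standard ``monotonicity of the closure operator'' fact, which follows in one line from submodularity:
\[
r(B\cup\{j'\})+r(A)\le r(B)+r(A\cup\{j'\})=r(B)+r(A),
\]
so $r(B\cup\{j'\})\le r(B)$ and equality holds by monotonicity. Now if some $S'\subseteq S\cup\{j\}$ were to span $j'$, the general fact applied with $A=S'$ and $B=S\cup\{j\}$ would force $S\cup\{j\}$ to span $j'$, contradicting the first assertion.

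There is no real obstacle here; this is a routine matroid-theoretic claim. The only thing to be careful about is not to confuse ``spans'' (a rank-function statement) with ``contains,'' and to notice that the ``moreover'' clause is not immediate from the first part but requires the one-line submodularity argument showing that the set of spanners of $j'$ is upward-closed.
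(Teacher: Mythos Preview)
Your proof is correct. The paper does not actually supply a proof of this claim; it merely states that it ``follows easily from the definitions of matroid rank function and spanning in matroids,'' and your argument is precisely the routine rank-function computation one would expect to fill in those details.
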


We state the following claim for all gross substitutes valuations, and apply it in the proof of Lemma \ref{lem:matriod-extra} to the subclass of additive valuations subject to matroid constraints. The claim uses the language of \emph{vacated} and \emph{snatched} items as defined in the proof of Lemma \ref{lem:downward-extra}.

\begin{claim}
	\label{cla:GS}
	Consider a welfare-maximizing allocation $\OPT_{-j}$ of the $m-1$ items in $M\setminus\{j\}$ to $n$ bidders with gross substitutes valuations. A welfare-maximizing allocation $\OPT$ of the $m$ items in $M$ to the $n$ bidders can be found by a chain of reallocations starting from $\OPT_{-j}$, where in each step a bidder snatches either item $j$ (in the first step) or a single vacated item (in subsequent steps), and possibly vacates another single item.
\end{claim}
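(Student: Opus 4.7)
The plan is to exploit the \emph{single-improvement} property of gross substitutes (GS) valuations established by Gul and Stacchetti~\cite{GS99}: for a GS bidder with valuation $v$, if a bundle $B$ yields strictly higher net utility than bundle $A$ at prices $p$, then there exists a bundle $C$ with $|C \setminus A| \le 1$ and $|A \setminus C| \le 1$ that also strictly improves on $A$. I will use this property to iteratively build the chain of reallocations starting from $\OPT_{-j}$.

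Concretely, I would fix Walrasian prices $p$ supporting $\OPT_{-j}$ (which exist because GS valuations admit Walrasian equilibria). The first step of the chain assigns item $j$ to a bidder whose marginal value for $j$ at prices $p$ is highest; by the single-improvement property applied to this bidder, the change to their bundle consists of snatching $j$ and possibly vacating a single item $k_1$. If the resulting allocation is $\OPT$, we are done. Otherwise, the item $k_1$ is now unassigned, its supporting price may drop, and the single-improvement property applied to the next bidder whose demand is affected says that this bidder snatches $k_1$ and possibly vacates a single further item $k_2$. Iterating, the resulting chain has exactly the structure required by the claim.

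The key technical step is arguing that this process actually reaches the globally optimal allocation $\OPT$, and does so in finitely many steps. For this I would invoke the fact (due to Murota, in the theory of discrete convex analysis) that the aggregate welfare on profiles of GS bidders is an M$^\natural$-concave function on the integer lattice, so that its maximum is reachable from any feasible starting point by a monotone sequence of single-element exchanges; together with the tie-breaking convention of Section~\ref{sec:prelim} (which guarantees a unique welfare-maximizer), this pins down the chain I construct as terminating precisely at $\OPT$.

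The main obstacle I anticipate is ruling out ``cascading'' reconfigurations in which several bidders would need to swap items simultaneously in order to produce a welfare improvement: this is exactly where the GS hypothesis is indispensable, since for arbitrary downward-closed constraints such cascades are possible and the simple snatch/vacate chain would fail. For the special case of additive valuations subject to a common matroid constraint (which is what we need for Lemma~\ref{lem:matriod-extra}), a cleaner alternative would be to derive the claim from the augmenting-path structure of weighted matroid intersection, where adding one new element corresponds to a single augmenting alternating path; this path \emph{is} the desired chain of snatches and vacations.
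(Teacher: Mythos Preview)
Your proposal is correct and follows essentially the same approach as the paper: fix Walrasian prices supporting $\OPT_{-j}$, then use the single-improvement property of GS valuations to argue that each reallocation step consists of a single snatch and at most one vacate, and iterate on the vacated item. The paper makes the descending-price tatonnement explicit and obtains termination at $\OPT$ directly from the first welfare theorem (the process halts at a Walrasian equilibrium), whereas you appeal to M$^\natural$-concavity for the same conclusion; both are valid, and your matroid-specific augmenting-path alternative is a clean shortcut the paper does not mention.
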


\begin{proof} [Sketch]
	Let $p_{-j}$ be a vector of Walrasian prices \emph{supporting} the welfare-maximizing allocation $\OPT_{-j}$ \cite{GS99}, i.e., each bidder is allocated his \emph{demand} (utility-maximizing bundle of items) at prices $p_{-j}$, and no unallocated item is priced above 0. Now add item $j$ with an initial price $p_j$ that is prohibitively high. We show a descending-price process that transforms $\OPT_{-j}$ into $\OPT$. This process establishes the existence of a chain of reallocations as claimed above (its computational properties are irrelevant in our context).
	
	Start by lowering $p_j$ until there is a single%
	\footnote{We are assuming no ties here to simplify the exposition.} %
	bidder $i$ who is no longer getting his demand or until $p_j$ reaches $0$.%
	\footnote{In full formality, $p_j$ should be lowered by $\epsilon$-increments and $\epsilon$ should be taken to 0.}
	If $p_j=0$ we have found $\OPT$ (to see this, observe that since each bidder is allocated his demand given current prices and no item with positive price is unallocated, we have reached a Walrasian equilibrium whose allocation is welfare-maximizing by the first welfare theorem \cite{GS99}). Otherwise, by the \emph{single improvement} property of gross substitutes \cite{GS99}, and since prices $p_{-j}$ \emph{support} the allocation of items $M\setminus{j}$, bidder $i$ can improve his utility by and only by snatching the single item whose price has been reduced, which in the first step is item $j$, and then possibly vacating another single item. If no item is vacated, we have again reached a welfare-maximizing equilibrium allocation $\OPT$. Otherwise, if an item $j'$ was vacated, then we repeat the above process of lowering the price of item $j'$. This process must end with a welfare-maximizing allocation $\OPT$, thus completing the proof sketch.
\end{proof}

We now prove our main lemma for matroids.

~

\noindent\textbf{Proof of Lemma \ref{lem:matriod-extra}:}
Our goal is to show that $\VCG_{n+\rho}^{\Mat} \ge \VCG_{n}^{\Add}$. 
As in the proof of Lemma~\ref{lem:downward-extra}, it is sufficient to identify $n$ out of the $n+\rho$ constrained bidders whose values for item $j$ are i.i.d.~draws from $F_j$, such that the VCG payment for $j$ is at least as high as the second highest value of these bidders for $j$.
This establishes that the contribution from selling item $j$ to $\VCG_{n+\rho}^{\Mat}$ is at least $\E[u_{(2:n)j}]$, which is exactly the contribution from selling item $j$ to $\VCG_{n}^{\Add}$.

As in the proof of Lemma \ref{lem:downward-extra} we fix an item $j$, and fix the values of the $n$ constrained bidders for all items but~$j$. We denote the welfare-maximizing allocation of all items but $j$ to the $n$ constrained bidders by $\OPT_{-j}$. Diverging from the proof of Lemma \ref{lem:downward-extra}, we denote by $A$ the set of bidders \emph{whose allocation in $\OPT_{-j}$ spans $j$}. Notice that $|A|\le \rho_j\le\rho$, and so $|\bar{A}|\ge n$. We further fix the values of the bidders in $A$ for item $j$, so that all remaining randomness is in the values of bidders in $\bar{A}$ for item $j$, which are i.i.d.~random samples from $F_j$. We denote by $\OPT$ the random welfare-maximizing allocation of all items (including $j$) to the $n$ bidders.

Assume first that item $j$ is allocated in $\OPT$ to a bidder $i$ in $\bar{A}$. As in the proof of Lemma \ref{lem:downward-extra}, the allocation of the other items in $\OPT_{-j}$ does not change in $\OPT$.
Since all bidders in $\bar{A}$ but $i$ still have allocations that do not span item $j$, the VCG payment for $j$ is at least as high as the second-highest value of the bidders in $\bar{A}$ for $j$.

Assume now that item $j$ is allocated in $\OPT$ to a bidder in $A$. Note that as in the proof of Lemma \ref{lem:downward-extra}, we can find $\OPT$ without knowing ${\bf v}_{\bar{A}j}$ (if there are several welfare-maximizing allocations, we assume that VCG finds the one implied by Claim \ref{cla:GS}). By Claim \ref{cla:GS} we know that the bidders in $\bar{A}$ have changed their allocation only by ``single improvements'' involving snatching an item and possibly vacating another.
In fact, in all such single improvement but one, an item was vacated.
Since bidders' valuations are additive subject to matroid constraints, a bidder $i$ with allocation $S_i$ vacates an item $j'$ if and only if he first snatches an item $j''$ such that $S_i$ spans $j''$. By Claim \ref{cla:span}, this means that the eventual allocation of bidder $i$ in $\OPT$ still does not span item $j$. We conclude that the VCG payment for $j$ is at least as high as the second-highest value of the bidders in $\bar{A}$ for $j$.

We get that in both cases, the payment for $j$ is at least the second-highest value of the bidders in $\bar{A}$. Since the set $\bar{A}$ is determined before sampling the values for item $j$, this completes the proof of the lemma.
\qed

\section{Additive Subject to Asymmetric Downward Closed Feasibility Constraints}\label{sub:asymmetric}
In this section, we consider the case of heterogeneous downward-closed feasibility set systems for the bidders. 
Let $([m],\mathcal{I}_i)$ denote the feasibility set system of bidder $i$. 
For our results to hold, we add the mild assumption that $\{j\}\in \mathcal{I}_i$ for every bidder $i$ and item $j$.
 We first point out the part where the proof for symmetric feasibility constraints fails. Unlike the case where bidders have symmetric feasibility constraints, the payment of an bidder is not lower bounded by the sum of the  highest values of the unallocated bidders to the items; this is in contrast to Lemma \ref{lem:vcg_rev}\footnote{To see this, consider an additive bidder who is allocated two items, $a$ and $b$, by the $\VCG$ mechanism, and two unit demand bidders, $1$ and $2$, where $v_1(a)=3$, $v_1(b)=5$, $v_2(a)=1$, $v_2(b)=4$. The payment of the additive bidder is $3+4=7<3+5=8$.}. Therefore, we need a new lower bound on the revenue of $\VCG$. Before stating the new lower bound, we introduce the following notation. We order the items in lexicographic order, and for a given set $S$, we define:
\squishlist
	\item $1^*(S)=\arg\max_{i\in S} v_i(1)$ (the bidder in set $S$ with the highest value for  item $1$).
	\item $S_{>j}=  S\setminus\left\{1^*(S),2^*(S),\ldots, j^*(S)\right\}$, contains all bidders except for $1^*(S),2^*(S),\ldots, j^*(S)$.
	\item $j^*(S)=\arg\max_{i\in S_{>j-1}} v_{i}(j)$. 
	\item $r_j(S)=v_{j^*(S)}(j)$ (the value of item $j$ to bidder $j^*(S)$).
\squishend
We give a lower bound on the revenue of $\VCG$ given the above definitions.
\begin{lemma} \label{lem:vcg_rev_asym}
	Consider a set of bidders with asymmetric downward-closed feasibility constraint, and the outcome of $\VCG$. Let $\bar{A}$ be the set of unallocated bidders. Assuming the number of bidders is at least $2m$, the revenue of $\VCG$ is at least $\sum_{j\in [m]} r_j\left(\bar{A}\right)$.
\end{lemma}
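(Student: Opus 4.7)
The plan is to reduce the lemma to an externality-based lower bound on VCG payments, exactly as in the symmetric case (Lemma~\ref{lem:vcg_rev}), but replacing the naive per-item reallocation with a single global injection $\pi : [m] \to \bar{A}$ that respects the asymmetric constraints. The footnote example shows why we cannot simply send each item to its individually best unallocated bidder: different bidders may compete for the same item. The peeling construction defining $r_j(\bar{A})$ is designed precisely to break this tie by committing a distinct ``target'' bidder to each item.

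Concretely, I would fix an allocated bidder $i \in A$ with bundle $S_i$ and consider the alternative feasible allocation without $i$ that (a) keeps each $i' \in A\setminus\{i\}$ with bundle $S_{i'}$, and (b) hands each item $j \in S_i$ to the bidder $\pi(j) := j^*(\bar{A})$. This allocation is feasible because $\{j\}\in\mathcal{I}_{\pi(j)}$ by the mild assumption, and because the peeling procedure guarantees $1^*(\bar{A}),\dots,m^*(\bar{A})$ are pairwise distinct, so the singletons handed out do not overlap. For this reallocation to be well-defined through $j=m$ I need $|\bar{A}|\ge m$; since $|A|\le m$ (each allocated bidder receives at least one of the $m$ items), the assumption $n\ge 2m$ is exactly what ensures $|\bar{A}|\ge n-m\ge m$. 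Since VCG pays each allocated bidder her externality, and the proposed alternative allocation is a valid witness, we get
\[
p_i \;\ge\; \sum_{j\in S_i} v_{\pi(j)}(j) \;=\; \sum_{j\in S_i} r_j(\bar{A}).
\]

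Summing over $i\in A$ and using that the bundles $\{S_i\}_{i\in A}$ are disjoint yields
\[
\sum_{i\in A} p_i \;\ge\; \sum_{j\in \bigcup_{i\in A} S_i} r_j(\bar{A}).
\]
To upgrade this to $\sum_{j=1}^m r_j(\bar{A})$ I would verify that every item $j$ that remains unallocated in VCG contributes $r_j(\bar{A}) = 0$. This is where the welfare-optimality of VCG's allocation comes in: if $j$ is unallocated and $v_{ij}>0$ for some $i\in\bar{A}$, then giving $i$ the singleton $\{j\}$ (which is in $\mathcal{I}_i$) strictly increases welfare with no other changes, contradicting optimality. Hence $v_{ij}=0$ for every $i\in\bar{A}$ and every unallocated $j$, so in particular $v_{j^*(\bar{A})}(j)=r_j(\bar{A})=0$. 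This closes the gap and gives the desired $\sum_i p_i \ge \sum_{j=1}^m r_j(\bar{A})$.

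The main obstacle compared with the symmetric case is precisely what the new definition of $r_j$ is designed to resolve: we cannot pointwise charge each item to its locally best unallocated bidder without violating feasibility across items of a single $S_i$, so the argument must use one globally injective matching, and the only global injection we can afford to write down without knowing the $\mathcal{I}_i$ structure is the greedy one given by iterated peeling. A secondary but important point is bookkeeping on counts: one must be careful that $|A|\le m$ (only items-receivers count as allocated) so that the $n\ge 2m$ hypothesis suffices to make the peeling go through all $m$ rounds, and that each reallocation target receives only a singleton, which is the unique fact about asymmetric downward-closed constraints we actually use.
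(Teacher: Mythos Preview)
Your proof is correct and takes essentially the same approach as the paper: both use the externality characterization of VCG payments, reallocate each $j\in S_i$ to the distinct bidder $j^*(\bar A)$ produced by the peeling construction, and invoke $n\ge 2m$ to guarantee $|\bar A|\ge m$ so that all $m$ targets $1^*(\bar A),\dots,m^*(\bar A)$ are well-defined and distinct. Your additional step handling possibly unallocated items (showing $r_j(\bar A)=0$ for such $j$ via welfare-optimality and the assumption $\{j\}\in\mathcal{I}_i$) is in fact more careful than the paper, which simply writes ``summing over all bidders completes the proof'' and implicitly takes every item to be allocated.
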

\begin{proof}
	Let $A$ and $\bar{A}=[n]\setminus A$ be the respective sets of allocated and unallocated bidders. Let $S=(S_1,\ldots, S_n)$ be the allocation returned by $\VCG$, and let $W$ be the welfare of this allocation. Fix an agent $i$ that received an item. The welfare of all agents but $i$ for this outcome is $W-v_i(S_i)$. Consider now the following allocation --- allocate the items not in $S_i$ as in $S$, and distribute each item $j\in S_i$ to an unallocated bidder $j^*\left(\bar{A}\right)$ (notice that this is well defined since the number of agents is assumed to be at least $2m$). By construction, $j^*(\bar{A})\neq {j'}^*(\bar{A})$ for every $j\neq j'$. Therefore, each bidder in $\bar{A}$ gets at most a single item, which is feasible under our assumption that $\{j\}\in\mathcal{I}_i$ for every $i,j$.
	The welfare of the proposed allocation is $W-v_i(S_i)+\sum_{j\in S_i}r_j(\bar{A})$. Since we described some allocation of items to all bidders but $i$, the welfare of the optimal allocation to bidders without $i$ is at least as high. In $\VCG$, each agent pays her ``externality on the other agents" (\textit{i.e.}, the loss in welfare of the other agents due to her existence). We get that the payment of bidder $i$ is at least $\sum_{j\in S_i}r_j(\bar{A})$. Summing over all bidders completes the proof of the lemma.
\end{proof}

Similarly to the proof of Theorem \ref{thm:downward-main} and Corollary \ref{cor:matroid-main}, we prove a lemma that shows how many extra bidders are needed for the revenue of $\VCG$ with arbitrary downward closed constrained bidders to exceed the revenue of $\VCG$ with unconstrained bidders. To this end we fix a product distribution $F$ of $m$ regular distributions, and denote by $\VCG_{n'}^{\Asym}$ the expected $\VCG$ revenue from selling $m$ items to $n'\geq n$ bidders with i.i.d.~values drawn from $F$, whose valuations are additive subject to $n'$ (not-necessarily identical) downward-closed feasibility constraints.
\begin{lemma}
	\label{lem:assym-extra}
	For every $n+2(m-1)$ bidders with arbitrary downward closed feasibility constraint,
	$\VCG_{n+2m-2}^{\Asym} \ge \VCG_{n}^{\Add}$.
\end{lemma}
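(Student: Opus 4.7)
The plan is to mirror the proof of Lemma~\ref{lem:downward-extra}, substituting Lemma~\ref{lem:vcg_rev_asym} for Lemma~\ref{lem:vcg_rev}. Because the new lower bound charges each item $j$'s contribution via the iterative greedy matching $r_j(\bar A)$, which ``blocks'' up to $j - 1 \leq m - 1$ bidders for earlier items, the extra $m-1$ bidders beyond the symmetric case exactly compensate for this additional blocking. Since $\VCG^{\Add}_1 = 0$, the case $n = 1$ is trivial, so assume $n \geq 2$, whence $n + 2(m - 1) \geq 2m$ and Lemma~\ref{lem:vcg_rev_asym} yields $\VCG^{\Asym}_{n + 2(m - 1)} \ge \mathbb{E}\bigl[\sum_{j=1}^m r_j(\bar A)\bigr]$. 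Combined with $\VCG^{\Add}_n = \sum_j \mathbb{E}[v_{(2 : n)j}]$ (VCG on additive bidders decomposes per item into second-price auctions over $n$ bids), the lemma reduces to showing, for each item $j$, the per-item bound $\mathbb{E}[r_j(\bar A)] \ge \mathbb{E}[v_{(2 : n)j}]$.

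To prove the per-item bound, I would fix item $j$ and the values of all $n + 2(m - 1)$ bidders for items $[m] \setminus \{j\}$, and let $A'$ be the set of bidders allocated in $\OPT_{-j}$, so $|A'| \le m - 1$ and $|\bar{A'}| \ge n + m - 1$. Claims~\ref{clm:single_allocated} and~\ref{clm:unallocated} transfer to the asymmetric setting (their proofs use only additivity and downward-closedness), and together imply that in both of the following cases, at least $|\bar{A'}| - (m - |A'|) = n + m - 2$ bidders of $\bar{A'}$ remain in $\bar A$: Case~1 (item $j$ allocated in $\OPT$ to some $i_j \in \bar{A'}$, which by welfare maximization must equal $\arg\max_{i \in \bar{A'}} v_i(j)$ since $\{j\} \in \mathcal{I}_i$), where Claim~\ref{clm:single_allocated} gives $\bar A = \bar{A'} \setminus \{i_j\}$; and Case~2 (item $j$ allocated to a bidder in $A'$), where the charging argument of Claim~\ref{clm:unallocated} identifies a pre-determined subset $\tilde A \subseteq \bar{A'}$ of size at least $n + m - 2$ contained in $\bar A$.

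The finishing step uses two ingredients. First, $r_j(\cdot)$ is monotone non-decreasing in its argument set (provable by a short induction on the iterative construction: at each step the pool can only enlarge when a bidder is added to the input set). Second, the blocked set $\{1^*(\bar A), \ldots, (j - 1)^*(\bar A)\}$ depends on values for items $1, \ldots, j - 1$ (fixed in our conditioning) and on $\bar A$; since values for item $j$ are independent of values for other items under $F$, the principle of deferred decision implies that after removing the at most $m - 1$ blocked bidders from the $\ge n + m - 2$ bidders in $\bar A$ identified above, at least $n - 1$ bidders remain whose item-$j$ values are i.i.d.~draws from $F_j$. Hence $\mathbb{E}[r_j(\bar A)] \ge \mathbb{E}[v_{(1 : n - 1)j}]$, and the exchangeability identity $\mathbb{E}[v_{(1 : n - 1)j}] = \tfrac{n - 1}{n} \mathbb{E}[v_{(1 : n)j}] + \tfrac{1}{n} \mathbb{E}[v_{(2 : n)j}] \ge \mathbb{E}[v_{(2 : n)j}]$ (derived by dropping a uniformly random one of $n$ i.i.d.~draws) completes the per-item bound. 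Summing over $j$ concludes.

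The main obstacle will be Case~1, where the lost bidder $i_j = \arg\max_{i \in \bar{A'}} v_i(j)$ has identity depending on the values for item $j$, so the residual $\ge n - 1$ values for $j$ are not fully fresh i.i.d.~samples. I expect this to be handled by a coupling argument exploiting independence across items: the blocked set is determined from values for items other than $j$, while the identity of $i_j$ depends only on values for $j$, so the two selections can be decoupled to show that the remaining item-$j$ values still stochastically dominate $n - 1$ fresh i.i.d.~draws from $F_j$.
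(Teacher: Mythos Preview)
Your overall structure---reduce to a per-item bound $\mathbb{E}[r_j(\bar A)]\ge\mathbb{E}[v_{(2:n)j}]$, use deferred decisions on item $j$, identify a large pre-determined set of bidders with at most one eventually allocated, and invoke monotonicity of $r_j$---matches the paper's approach. But you correctly flag Case~1 as unresolved, and that gap is real: the blocked set $\{1^*(\bar A),\dots,(j-1)^*(\bar A)\}$ depends on $\bar A$, which in Case~1 depends on the item-$j$ values, so deferred decisions do not give you fresh i.i.d.\ draws there. The vague ``coupling argument'' you propose is not enough; what is missing is a concrete way to transfer the blocked-set computation to a set that \emph{is} determined before the item-$j$ draws.

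The paper's resolution is to compute the blocked set on the pre-determined set $\tilde A$ itself (of size $n+m-1$, not $n+m-2$---you dropped the extra bidder $i'$ added at the end of the proof of Lemma~\ref{lem:downward-extra}). Then $\tilde A_{>j-1}$ has size $\ge n$ and is fixed before the item-$j$ values of $\tilde A$ are sampled, so those $\ge n$ values are genuinely fresh i.i.d.\ draws from $F_j$. The induction you sketch for monotonicity of $r_j$ is exactly the right structural fact, but stated more usefully as a set containment (Lemma~\ref{lem:payment_prop}): $S\subseteq T\Rightarrow S_{>j-1}\subseteq T_{>j-1}$. Apply it with $S=\tilde A^{+j}$ (the unallocated part of $\tilde A$ after item $j$ is placed) and $T=\tilde A$; since at most one bidder of $\tilde A$ is allocated, $\tilde A^{+j}_{>j-1}$ is $\tilde A_{>j-1}$ with at most one bidder removed, hence $r_j(\tilde A^{+j})=\max_{i\in\tilde A^{+j}_{>j-1}}v_i(j)$ is at least the second-highest of the $\ge n$ fresh draws. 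Monotonicity (Corollary~\ref{cor:payment_mon}) then lifts this to $r_j(\bar A)$ since $\tilde A^{+j}\subseteq\bar A$. This handles both cases uniformly and yields $\mathbb{E}[r_j(\bar A)]\ge\mathbb{E}[v_{(2:n)j}]$ directly, making your exchangeability detour through $\mathbb{E}[v_{(1:n-1)j}]$ unnecessary.
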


Lemma~\ref{lem:assym-extra} is useful in its own right, but also can be used to imply a competition complexity result extending Theorem~\ref{thm:downward-main} to the case where feasibility constraints are not fixed for the entire population.

We now prove Lemma~\ref{lem:assym-extra}, and begin by showing a useful property of the notations used in order to bound $\VCG$'s revenue. 

\begin{lemma}\label{lem:payment_prop}
	For any two sets of of bidders $S$ and $T$ such that $S\subseteq T$, and for every $j\in [m]$, $S_{>j-1}\subseteq T_{>j-1}$.
\end{lemma}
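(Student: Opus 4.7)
The plan is to prove this by induction on $j$, where the key observation is that the ``top remaining bidder'' operation $j^*(\cdot)$ is monotone in the right way: if the top-item-$j$ bidder in $T_{>j-1}$ happens to also lie in the (smaller) set $S_{>j-1}$, then it is automatically also the top-item-$j$ bidder in $S_{>j-1}$, so it gets stripped off in both $S_{>j}$ and $T_{>j}$ simultaneously.

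Formally, the base case is $j=1$, for which $S_{>0}=S\subseteq T=T_{>0}$ is immediate from the hypothesis. For the inductive step, I assume $S_{>j-1}\subseteq T_{>j-1}$ and want to show $S_{>j}\subseteq T_{>j}$. Unwinding definitions, $S_{>j} = S_{>j-1}\setminus\{j^*(S)\}$ and $T_{>j}=T_{>j-1}\setminus\{j^*(T)\}$. Pick $i\in S_{>j}$; by the inductive hypothesis $i\in T_{>j-1}$, so it only remains to show $i\neq j^*(T)$.

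The main (and only) step is the following dichotomy. If $j^*(T)\notin S_{>j-1}$, then since $S_{>j}\subseteq S_{>j-1}$ we have $j^*(T)\notin S_{>j}$, hence $i\neq j^*(T)$. Otherwise $j^*(T)\in S_{>j-1}$; but $j^*(T)$ maximizes $v_{\cdot}(j)$ over $T_{>j-1}\supseteq S_{>j-1}$ (with ties broken by bidder i.d.\ as specified in Section~\ref{sec:prelim}), so it must also be the unique maximizer over the subset $S_{>j-1}$, i.e.\ $j^*(T)=j^*(S)$. Consequently $j^*(T)\notin S_{>j-1}\setminus\{j^*(S)\} = S_{>j}$, and again $i\neq j^*(T)$. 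In either case $i\in T_{>j-1}\setminus\{j^*(T)\} = T_{>j}$, completing the induction.

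I do not expect any real obstacle here: the claim is essentially a monotonicity property of a greedy peeling procedure, and the only subtlety is ensuring the argmax is well-defined when restricted to a subset, which is guaranteed by the lexicographic tie-breaking rule fixed in Section~\ref{sec:prelim}.
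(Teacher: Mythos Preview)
Your proof is correct and follows essentially the same approach as the paper: induction on $j$ with a case split on whether $j^*(T)$ lies in $S_{>j-1}$ (equivalently, whether $j^*(T)=j^*(S)$). The paper phrases the inductive step at the set level rather than via an arbitrary element $i$, but the underlying argument is identical.
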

\begin{proof}
	The proof is by induction on $j$. For $j=1$, $S_{>0}=S\subseteq T = T_{>0}$. Now suppose the claim is true for some $j$ \big($S_{>j-1}\subseteq T_{>j-1}$\big). Consider two cases: if $j^*(S)  =j^*(T)$ (denote this bidder by $j^*$), then, by the induction hypothesis, $$S_{>j}= S_{>j-1}- \{j^*\}\subseteq T_{>j-1}- \{j^*\} = T_{>j};$$ otherwise, if $j^*(S)\neq j^*(T)$, then by the induction hypothesis, $j^*(S)\in T_{>j-1}$, thus $j^*(T) \in T_{>j-1}\setminus S_{>j-1}$. Therefore, $$T_{>j}=T_{>j-1}-\{j^*(T)\}\supseteq S_{>j-1}\supset S_{>j}.$$
\end{proof}

As a corollary, we get:
\begin{corollary} \label{cor:payment_mon}
	For any two sets of of bidders $S$ and $T$ such that $S\subseteq T$, and for every $j\in [m]$, $r_j(S)\leq r_j(T)$.
\end{corollary}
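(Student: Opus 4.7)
The plan is to derive the corollary directly from Lemma~\ref{lem:payment_prop} by unpacking the definition of $r_j(\cdot)$ as a maximum over a set and using monotonicity of maxima with respect to set inclusion. There is essentially no additional combinatorial content beyond what Lemma~\ref{lem:payment_prop} already provides; the corollary is just a rewriting of that containment in ``value space.''

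More concretely, I would first invoke Lemma~\ref{lem:payment_prop} at index $j$ to conclude that $S_{>j-1} \subseteq T_{>j-1}$ whenever $S \subseteq T$. Then I would observe that by the definitions given just before Lemma~\ref{lem:vcg_rev_asym},
\[
r_j(S) \;=\; v_{j^*(S)}(j) \;=\; \max_{i \in S_{>j-1}} v_i(j),
\qquad
r_j(T) \;=\; v_{j^*(T)}(j) \;=\; \max_{i \in T_{>j-1}} v_i(j).
\]
Since the maximum of a nonnegative quantity (a value $v_i(j)$) over a subset is no larger than the maximum over a superset, the set inclusion $S_{>j-1} \subseteq T_{>j-1}$ immediately yields $r_j(S) \le r_j(T)$, which is the claim.

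I do not anticipate any genuine obstacle here: the only subtle ingredient is that the set over which the maximum is taken to define $r_j$ is precisely $S_{>j-1}$ (the bidders still ``eligible'' to be picked as $j^*$), and that ingredient is exactly what Lemma~\ref{lem:payment_prop} compares. In particular, there is no need to inspect whether $j^*(S)$ equals $j^*(T)$ or not; the argument is a one-line monotonicity statement once the right set is identified. The proof can therefore be written in just two or three lines.
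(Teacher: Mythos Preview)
Your proposal is correct and takes essentially the same approach as the paper. The paper's proof phrases it as ``$j^*(S)\in T_{>j-1}$, hence by definition $r_j(T)\ge r_j(S)$,'' which is exactly your monotonicity-of-max observation written pointwise at the argmax rather than as a max over a superset.
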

\begin{proof}
	From the above lemma, $j^*(S)\in T_{>j-1}$. Therefore, by the definition of $r_j$, it holds that $r_j(T)\geq r_j(S)$.
\end{proof}

We now prove the main lemma for bidders with asymmetric feasibility constrains.

~

\noindent\textbf{Proof of Lemma \ref{lem:assym-extra}:}
Let $\bar{S}$ be the set of bidders unallocated by $\VCG$ on the $n+2(m-2)$ constrained bidders. We prove that for every $j$, the expected value of $r_j(\bar{S})$ is at least as high as the expectation of the second highest value of $n$ draws from $F_j$. This argument suffices to prove the lemma since the latter is exactly what is paid for item $j$ in  $\VCG$ with unconstrained bidders.

As in previous proofs, the proof uses again the principle of deferred decisions. Consider first sampling values of all items but $j$, and then sample the value of item $j$ for all bidders that are allocated some item in the optimal allocation that does not allocate $j$. Similarly to the proof of Lemma \ref{lem:downward-extra}, we can identify a set $\tilde{A}$ of $n+m-1$ bidders out of which at most a single bidder is going to be allocated by $\VCG$. Furthermore, the value of item $j$ for these agents is yet to be determined. We denote the random set that will be left unallocated after putting item $j$ back in by $\tilde{A}^{+j}$.
Consider the set of bidders $\tilde{A}_{< j}= \{\ell^*(\tilde{A}): \ell < j\}$. We claim that no matter which bidder will be allocated from $\tilde{A}$ in the final allocation (if any), all bidders remaining from  $\tilde{A}_{< j}$ will be in $\tilde{A}^{+j}_{<j} = \{\ell^*(\tilde{A}^{+j}): \ell < j\}$. Otherwise, there would be an bidder in $\tilde{A}^{+j}_{>j-1}$ which is not in $\tilde{A}_{>j-1}$, contradicting Lemma \ref{lem:payment_prop}.

Notice now that the set $\tilde{A}_{< j}$ is of size at most $m-1$, and therefore, the set $\tilde{A}_{>j-1}$ is of size at least $n+m-1-(m-1)=n$. Furthermore, the set $\tilde{A}_{>j-1}$ is determined before sampling the values of item $j$ for these bidders.
Since at most a single bidder from $\tilde{A}_{>j-1}$ is in $\tilde{A}^{+j}_{<j}$, $j^*(\tilde{A}^{+j})$ is one of the two bidders with the highest value for item $j$ in $\tilde{A}_{>j-1}$. Therefore, by Corollary \ref{cor:payment_mon}, $r_j(\bar{S})\geq r_j(\tilde{A}^{+j})$, which is at least the payment for item $j$ in $\VCG$ with $n$ unconstrained bidders. Applying Lemma \ref{lem:vcg_rev_asym} gives us that the revenue of $\VCG$ with $n+2(m-1)$ constrained bidders is at least the  revenue of $\VCG$ with $n$ constrained ones, as desired.\qed

\noindent\textbf{Proof of Theorem~\ref{thm:ccasymmetric}:}
First, draw the constraints independently for each bidder. As constraints are drawn independently of values, we can apply Lemma~\ref{lem:assym-extra}, and conclude that for all constraints that could possibly have been drawn, the expected revenue of $\VCG$ for $n+2m-2$ bidders subject to these (asymmetric) constraints is at least the expected revenue of $n$ additive bidders, conditioned on the drawn constraints. Therefore, when taking an expectation over the randomly drawn constraints, the theorem statement holds.
\qed

\subsection{Competition Complexity of Maximum in Range $\VCG$} \label{app:mir_vcg}
We furthermore show how to overcome the intractability of $\VCG$ for general downward-closed feasibility constraints, even when the constraints are asymmetric across agents. Recall that $\VCGUD$ is defined to run $\VCG$, and restrict the outcomes to matchings -- \textit{i.e.}, every agent gets at most a single item. This is a maximal-in-range mechanism, and hence, truthful. Furthermore, since finding maximum weight matchings is tractable, this is also a poly-time prior independent mechanism. We can now ask the following question --- what is the the competition complexity of this mechanism (if well defined). 

We first provide a lower bound on the revenue of $\VCGUD$. 
\begin{observation}
	Consider a set of agents with asymmetric downward-closed feasibility constraint, and the outcome of $\VCGUD$. For each item $j$, let $v^*_j$ be the highest value of an unallocated agent for item $j$. The revenue of $\VCGUD$ is at least $\sum_{j\in [m]} v^*_j$.
\end{observation}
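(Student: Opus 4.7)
The plan is to mimic the proof of Lemma~\ref{lem:vcg_rev}, but adapted to the restricted maximal-in-range setting where the allowed allocations are matchings. Recall that in a maximal-in-range mechanism, each allocated agent pays her externality on the other agents, where the externality is computed over the \emph{restricted} outcome space (in this case, matchings). So if $\VCGUD$ outputs allocation $M^*$ with total welfare $W$, and agent $i$ is matched to item $j_i$ in $M^*$, then her payment is $p_i = W^{(-i)} - (W - v_{i,j_i})$, where $W^{(-i)}$ denotes the welfare of the best matching that uses only agents in $[n]\setminus\{i\}$.

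The key step is to lower bound $W^{(-i)}$ by exhibiting a specific matching. I would construct it as follows: take $M^*$, drop agent $i$ together with her item $j_i$, and reassign $j_i$ to the unallocated agent (with respect to $M^*$) whose value for $j_i$ is $v^*_{j_i}$. Since the recipient was unmatched in $M^*$, the result is still a valid matching on $[n]\setminus\{i\}$, and its welfare is exactly $W - v_{i,j_i} + v^*_{j_i}$. Plugging into the payment formula yields $p_i \geq v^*_{j_i}$. Summing over all allocated agents gives $\sum_{i\in A} p_i \geq \sum_{j \text{ allocated}} v^*_j$.

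To upgrade the sum over allocated items to a sum over \emph{all} items in $[m]$, I would argue that $v^*_j = 0$ for every unallocated item $j$. Indeed, if $j$ is unmatched in $M^*$ and some unallocated agent $i'$ has $v_{i',j} > 0$, then adding the edge $(i',j)$ produces a strictly better matching, contradicting the optimality of $M^*$ within the matching polytope. Since values are non-negative, this forces $v^*_j = 0$ for unallocated items (or $v^*_j$ is vacuously $0$ when no agent is unallocated), so these items contribute nothing and the bound $\sum_{j\in[m]} v^*_j$ follows.

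I do not anticipate a serious obstacle: the argument is essentially identical to Lemma~\ref{lem:vcg_rev}, but conceptually simpler because the ``redistribution'' step only ever moves one item to one previously unmatched agent, which is automatically feasible in the matching polytope without invoking downward-closedness of any agent's feasibility constraint. The only subtlety worth stating cleanly is the MIR payment interpretation, making sure that the externality is measured against the matching-restricted welfare rather than the unrestricted one, so that the alternative matching we constructed is actually a valid witness for $W^{(-i)}$.
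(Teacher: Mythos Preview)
Your proposal is correct and follows essentially the same approach as the paper, which simply says the result follows as in Lemma~\ref{lem:vcg_rev} by removing the agent allocated item $j$ and reassigning $j$ to the unallocated agent with highest value for it. You supply more detail than the paper does---in particular the explicit MIR externality computation and the handling of unallocated items---but the core argument is identical.
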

\begin{proof}
	Similarly to the proof  of Lemma \ref{lem:vcg_rev}, this follows from taking the optimal allocation, removing the agent that is allocated item $j$, and allocating this item to the unallocated agent with the highest value for this item.
\end{proof}

Notice that we can use this observation, and an essentially identical proof to the proof of Lemmas \ref{lem:downward-extra} and \ref{lem:matriod-extra} (and similar to the claims used in \cite{RTY15}), to show the following:

\begin{lemma}
	\label{lem:vcgmir-extra}
	For every $n+m-1$ bidders with a asymmetric downward closed feasibility constraints,
	$\VCGUD_{n+m-1} \ge \VCG_{n}^{\Add}$.
\end{lemma}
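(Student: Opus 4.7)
The plan is to follow the template of Lemmas~\ref{lem:downward-extra} and~\ref{lem:matriod-extra}, replacing their revenue-lower-bound step with the observation that immediately precedes this lemma. Since $\VCG$ on $n$ additive bidders decomposes into $m$ independent second-price auctions, $\VCG_n^{\Add} = \sum_{j=1}^m \E[u_{(2:n)j}]$, where $u_{(2:n)j}$ denotes the second-highest among $n$ i.i.d.\ samples from $F_j$. The observation says the revenue of $\VCGUD$ with $n+m-1$ bidders is at least $\sum_{j=1}^m \E[v_j^*]$, where $v_j^*$ is the highest value for item $j$ among the bidders left unallocated by $\VCGUD$. It therefore suffices to show, for each item $j$, that $\E[v_j^*]\ge \E[u_{(2:n)j}]$.

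Fix $j$ and proceed by deferred decisions. I would first sample the values of all $n+m-1$ bidders for items other than $j$, then compute the welfare-maximizing matching $\OPT_{-j}$ on the items $[m]\setminus\{j\}$. Let $A$ be its set of allocated bidders (so $|A|\le m-1$) and $\bar A$ its complement in the bidder set (so $|\bar A|\ge n$). Before sampling any values for item $j$, pre-select an arbitrary subset $\tilde A\subseteq \bar A$ with $|\tilde A|=n$; by construction, $\tilde A$ is independent of the not-yet-sampled values for item $j$. Note that because $\VCGUD$ restricts to matchings and each singleton $\{j\}$ lies in every $\mathcal{I}_i$, the constraints $\mathcal{I}_i$ play no role in this analysis beyond ensuring feasibility.

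The key structural claim is: when item $j$ is introduced, the new welfare-maximizing matching $\OPT$ (on all $m$ items) differs from $\OPT_{-j}$ along exactly one alternating path whose one endpoint is $j$, and consequently at most one bidder from $\bar A$ becomes allocated in $\OPT$. I would prove this by the standard symmetric-difference argument for max-weight bipartite matching: $\OPT \triangle \OPT_{-j}$ decomposes into alternating paths and even cycles, and the tie-breaking convention from Section~\ref{sec:prelim} makes both matchings unique. Any connected component of $\OPT\triangle\OPT_{-j}$ avoiding $j$ could be un-swapped inside $\OPT$ to give a different matching on $[m]$ of equal weight (and likewise for $\OPT_{-j}$ on $[m]\setminus\{j\}$), contradicting uniqueness. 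Therefore the symmetric difference is a single alternating path with one endpoint at $j$ and the other at a single bidder in $\bar A$; every other bidder in $\bar A$ stays unallocated in $\OPT$.

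Finally, sample the values of bidders in $\tilde A$ for item $j$: these are $n$ i.i.d.\ draws from $F_j$. Since at most one bidder in $\tilde A\subseteq \bar A$ is allocated in $\OPT$, the highest value for item $j$ among unallocated bidders in $\tilde A$ is at least the second-highest value in $\{v_{ij}\}_{i\in\tilde A}$ (for every realization, because removing one element from a size-$n$ set leaves the max at least the second-largest of the original). Hence $v_j^*\ge u_{(2:n)j}$ deterministically, and taking expectations and summing over $j$ yields $\VCGUD_{n+m-1}\ge \VCG_n^{\Add}$. I expect the main obstacle to be cleanly justifying the single-augmenting-path structural claim: while this is a well-known fact about max-weight bipartite matching, care is needed to invoke the paper's tie-breaking convention to secure the required uniqueness of the optimal matching on both the $m$-item and $(m-1)$-item problems simultaneously.
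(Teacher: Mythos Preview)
Your proof is correct and follows essentially the same approach the paper points to: deferred decisions to identify a set of $n$ bidders in $\bar A$ whose item-$j$ values are fresh i.i.d.\ draws, then a structural claim that at most one of them is allocated once item $j$ is added. Your alternating-path argument is exactly the unit-demand/matching version of the stability claim (the \cite{RTY15} approach the paper cites), and in fact streamlines the two-case analysis of Lemma~\ref{lem:downward-extra} (you do not need to separately fix the item-$j$ values of bidders in $A$ before pinning down $\tilde A$, since the single-augmenting-path fact holds for every realization).
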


This in turn yields the following upper bound on the competition complexity of $\VCGUD$, that is, the number of extra bidders one needs to add to the original environment in order the get revenue greater than the optimal revenue of the original  environment using mechanism $\VCGUD$, and a complete proof of Theorem~\ref{thm:vcgud}.

\section{Lower Bounds on the Competition Complexity} \label{sec:lb}

\begin{theorem}
\label{thm:lower-bound-additive}
The competition complexity of a single bidder with additive valuations over $m$ independent regular items is $\Omega(\log m)$.
\end{theorem}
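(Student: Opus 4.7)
The plan is to exhibit a single-bidder instance with $m$ i.i.d.~regular items on which $\OPT$ outpaces $\VCG$ with any $k = o(\log m)$ extra bidders. I will take the distribution to be the equal-revenue distribution truncated at $H = m$, i.e.~$\Pr[v > x] = 1/x$ for $x \in [1,H)$ with an atom of mass $1/H$ at $H$. Its virtual value equals $0$ on the continuous part and jumps to $H$ at the atom, so the distribution is regular and fits Definition~\ref{def:val}.

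First I would lower bound $\OPT(F, 1) = \Omega(m \log m)$ by grand bundling, in the spirit of Hart and Nisan. A single draw $v_j$ has mean $\mathbb{E}[v_j] = 1 + \ln H = \Theta(\log m)$ and second moment $\mathbb{E}[v_j^2] \leq 2H = O(m)$. The sum $V = \sum_{j=1}^m v_j$ therefore has mean $\mu = \Theta(m\log m)$ and variance $O(m^2)$. Chebyshev's inequality then gives $\Pr[V \geq \mu/2] \geq 1 - O(1/\log^2 m)$, so pricing the grand bundle at $\mu/2$ extracts revenue $\Omega(m \log m)$ in expectation.

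Next I would upper bound $\Rev(\VCG, F, 1 + k) = O(mk)$. Because bidders are additive and items are independent, $\VCG$ decomposes into $m$ independent second-price auctions, each run among $1+k$ i.i.d.~bidders with values drawn from $F$. The expected revenue per item equals $\mathbb{E}[Y_2]$, the expected second-order statistic of $1+k$ truncated equal-revenue draws. A union bound on the event that at least two of the $1+k$ values exceed $x$ gives $\Pr[Y_2 > x] \leq \min\{1,\, \binom{1+k}{2}/x^2\}$; splitting the integral at $x = k$ yields $\mathbb{E}[Y_2] = O(k)$. Summing over items, $\Rev(\VCG, F, 1+k) = O(mk)$.

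Putting the two bounds together, the hypothesis $\OPT(F,1) \leq \Rev(\VCG, F, 1+k)$ forces $mk \geq c\, m\log m$ for an absolute constant $c > 0$, whence $k = \Omega(\log m)$. The main obstacle is handling the minor technicalities introduced by truncation: I must verify that the atom at $H$ contributes only an $O(k^2/H) = O(1)$ term to $\mathbb{E}[Y_2]$, and that the variance of the bundle sum is genuinely $O(m^2)$ so that Chebyshev suffices. Neither of these is substantial; the structural content is simply the $\Theta(\log m)$ gap between bundled and separate-item revenue observed by Hart and Nisan.
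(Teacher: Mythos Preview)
Your proof is correct and follows essentially the same high-level approach as the paper: take the equal-revenue distribution, lower bound $\OPT$ by $\Omega(m\log m)$ via grand bundling, upper bound per-item $\VCG$ revenue by $O(k)$, and conclude. The paper uses the untruncated equal-revenue distribution, cites Hart--Nisan directly for the $\Omega(m\log m)$ bundling bound, and bounds the per-item revenue by the slicker observation that Myerson's single-bidder revenue is $1$, so by revenue submodularity the optimal (hence $\VCG$) revenue with $k$ bidders is at most $k$; your self-contained Chebyshev argument and direct second-order-statistic computation achieve the same ends without those citations.
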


\begin{proof}
Consider a single additive bidder with $m$ items i.i.d. distributed according to the equal-revenue distribution with CDF $F(x)=1-1/x$.  In Lemma 8 in \cite{HartN12}, it was shown that the optimal revenue for this setting satisfies $\Rev(F^m)\geq m\cdot \log m/4$. 

Now consider a setting with a single item and $k$ bidders, all distributed according to $F$. 
The expected revenue of the optimal single-bidder auction is $1$.
By revenue submodularity \cite{DughmiRS12}, the expected revenue of running the optimal mechanism is a concave function of the number of bidders. Thus, the expected revenue of the optimal auction with $k$ bidders is at most $k$ (and it can actually be shown to be exactly $k$).
Consequently, the expected revenue of running VCG in the single item, $k$ bidder setting is upper bounded by $k$.

Consider now the original setting with a single additive bidder and $m$ items i.i.d. distributed according to the equal-revenue distribution. 
Based on the analysis above, and the linearity of VCG with additive bidders, the expected revenue of running VCG with $k-1$ additional bidders (i.e., with $k$ bidders) is at most $m\cdot k$. 
Thus, in order to surpass the optimal revenue $\Rev(F^m)$, the number of additional bidders ($k-1$) should be at least as large as $\log m/4-1=\Omega(\log m)$. 
This establishes the assertion of the theorem.
\end{proof}

\section{$1/2$ approximation of $\VCG$ as $n$ goes to $\infty$} \label{app:large_market}
Here we show that as a direct corollary of our results, running $\VCG$ in large markets yields a $2$-approximation on  the optimal revenue for additive bidders.

\begin{theorem}
	For any distribution of regular items $F$, whenever $n\gg m$, then running $\VCG$ on $n$ bidders (without recruiting additional bidders) yields a revenue $\approx$ half of the optimal revenue.
\end{theorem}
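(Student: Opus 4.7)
The plan is to chain two known facts: the competition complexity bound from Theorem \ref{thm:additive_main}, and the revenue submodularity of \cite{DughmiRS12}. First I would invoke Theorem \ref{thm:additive_main} to obtain
\[
\Rev(\VCG, F, 2n+2m-2) \;\geq\; \OPT(F, n).
\]
This tells us that doubling (and a bit more) the bidder pool suffices for $\VCG$ to match the optimal revenue on $n$ bidders. The remaining task is to control the ratio $\Rev(\VCG, F, 2n+2m-2)/\Rev(\VCG, F, n)$ as $n\to\infty$.

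For the second step, I would use that with additive bidders over independent items, $\VCG$ decomposes into $m$ independent single-item second-price auctions, one per item. By the revenue submodularity result of \cite{DughmiRS12}, the expected revenue of each second-price auction is a submodular function of the set of participating bidders, and for i.i.d.\ bidders this translates into concavity in the number of bidders. A sum of concave functions is concave, so $g(k) := \Rev(\VCG, F, k)$ is concave in $k$ with $g(0) = 0$. Concavity through the origin implies that $g(k)/k$ is nonincreasing, so for any $k' \geq k$,
\[
g(k') \;\leq\; \frac{k'}{k}\, g(k).
\]
Applying this with $k' = 2n+2m-2$ and $k = n$ yields
\[
\Rev(\VCG, F, 2n+2m-2) \;\leq\; \frac{2n+2m-2}{n}\cdot \Rev(\VCG, F, n).
\]

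Combining the two inequalities gives
\[
\OPT(F, n) \;\leq\; \frac{2n+2m-2}{n}\cdot \Rev(\VCG, F, n) \;=\; \Bigl(2 + \tfrac{2m-2}{n}\Bigr)\cdot \Rev(\VCG, F, n),
\]
so for fixed $m$, as $n\to\infty$ the multiplicative gap tends to $2$, giving the claimed $\tfrac{1}{2}$-approximation in the large-market limit.

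I do not expect a real obstacle here; the only subtlety worth making explicit is the justification of concavity. One must avoid the temptation to apply revenue submodularity directly to a general multi-dimensional mechanism (where it is not known to hold) and instead invoke it item-by-item via the additive decomposition of $\VCG$, then aggregate. Everything else is a one-line algebraic manipulation built on top of Theorem \ref{thm:additive_main}.
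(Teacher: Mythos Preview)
Your proposal is correct and follows essentially the same approach as the paper: invoke the competition complexity bound of Theorem~\ref{thm:additive_main} to compare $\VCG$ with roughly $2n$ bidders to $\OPT$ with $n$ bidders, then use the item-by-item decomposition of $\VCG$ together with revenue submodularity \cite{DughmiRS12} to bound the ratio by a factor tending to $2$. Your concavity-through-the-origin argument is in fact a bit cleaner and more explicit than the paper's version of the same step.
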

\begin{proof}
	Consider a market with $n$ additive bidders with values sampled from $F$. According to Theorems \ref{thm:additive_main}, $\VCG_{2n+m-1}^{\Add}  \ge \Rev_{n}^{\Add}$. Since $\VCG$ decomposes across items for additive bidders, and since $n>m$, revenue submodularity applies \cite{DughmiRS12}. Therefore, $\VCG_{n}^{\Add}\geq \frac{n-m}{2n+m-1-m}\VCG_{2n+m-1}^{\Add}\geq \frac{n-m}{2n-1}\Rev_{n}^{\Add}\approx_{n\rightarrow\infty}\Rev_{n}^{\Add}/2$.
\end{proof}

\end{document}